\theoremstyle{plain}
\newtheorem{theorem}[equation]{Theorem}
\newtheorem{corollary}[equation]{Corollary}
\newtheorem{lemma}[equation]{Lemma}
\newtheorem{proposition}[equation]{Proposition}
\theoremstyle{definition}
\newtheorem{definition}[equation]{Definition}
\newcommand\reals{{\mathbb R}}
\def\km{k\text{-}m}
\def\Lk{\mathrm{Lk}}
\def\Pr{\mathrm{Pr}}
\def\cone{\text{cone}}
\def\len{\operatorname{length}}
\def\Int{\text{Int}_{n}}
\def\Intkm{\text{Int}_{k+m}}
\newcommand{\od}{\stackrel{\mbox {\tiny {def}}}{=}}
\def\CF{\mathrm{CF}}
\def\min{\mathrm{min}}
\def\supp{\operatorname{supp}}
\def\mod{\operatorname{mod}}
\begin{document}

\title{Periodic Neural Codes and Sound Localization in Barn Owls}
\author{Lindsey S. Brown \and Carina Curto}
\date{July 21, 2021}
\maketitle

\begin{abstract}
Inspired by the sound localization system of the barn owl, we define a new class of neural codes, called periodic codes, and study their basic properties. Periodic codes are binary codes with a special patterned form that reflects the periodicity of the stimulus. Because these codes can be used by the owl to localize sounds within a convex set of angles, we investigate whether they are examples of convex codes, which have previously been studied for hippocampal place cells. We find that periodic codes are typically {\it not} convex, but can be completed to convex codes in the presence of noise. We introduce the convex closure and Hamming distance completion as ways of adding codewords to make a code convex, and describe the convex closure of a periodic code. We also find that the probability of the convex closure arising stochastically is greater for sparser codes. Finally, we provide an algebraic method using the neural ideal to detect if a code is periodic. We find that properties of periodic codes help to explain several aspects of the behavior observed in the sound localization system of the barn owl, including common errors in localizing pure tones.
\end{abstract}

\section{Introduction}
Neural codes are patterns of neural activity, also known as \textit{codewords}, that arise from the encoding of environmental stimuli.  Understanding neural codes means understanding both their structure and the relationship between these codewords and the stimuli they represent.  One way to begin understanding this structure is by considering the neural code as a neural ring and exploring the intrinsic combinatorial properties as they relate to the structure of the stimulus space in an algebraic framework \cite{CurtoRing, CurtoBook}.  More recent work has focused specifically on the relationship between the code and the stimulus space by considering whether each neuron fires over a convex region of space, motivated by the place cells in the rat hippocampus \cite{NewPaper, CurtoConvex, CurtoRev}.

Inspired by the owl auditory system, in this work, we focus on {\it periodic codes}, which have a special structure that may be especially well-suited for encoding stimuli that are similarly periodic, such as sound waves.  Periodic firing patterns are observed in the nucleus laminaris of the barn owl, the first site of binaural convergence in the auditory pathway.  Similarly, the codewords in a periodic code, $C_{k,m}(n)$, have a precise, periodic pattern: they consist of bands of $k$ consecutive neurons that are firing, alternating with bands of $m$ consecutive neurons that are silent (see Figure~\ref{CodeExample}).  One advantage of binaural hearing is the ability to localize sounds, and we explore how the structure of periodic codes relates to this ability to localize a sound to a convex set of angles by asking whether periodic codes are convex.  

First, we show that, except in trivial cases, periodic codes are not convex in Theorem~\ref{subcompletionTheorem}, which may explain the barn owl's errors in sound localization when presented with a single frequency stimulus.  This theorem also gives a specific formulation for adding codewords to make these codes convex, and we define the \textit{convex closure} of a code.  In the case of periodic codes, the convex closure involves taking a union with another periodic code, suggesting that owls may resolve the issues of nonconvexity of single frequency codes by combining codes from multiple frequencies higher in the brainstem.   Second, we give the precise probability that the convex closure arises instead from stochasticity, finding that sparser codes are more likely to be completed to convex codes via stochastic processes.  Third, we give an algorithmic method to determine if an arbitrary code, with the neurons labelled in a potentially permuted order, is periodic using the neural ring (Theorem~\ref{fullCF}).  

The organization of this paper is as follows. In Section 2, we give a rigorous definition of periodic codes and explore basic combinatorial properties of these codes, highlighting the differences between periodic codes and cyclic codes \cite{CyclicCodeBook}.  In Section 3, we prove our main result on the convexity properties of these periodic codes,  Theorem~\ref{subcompletionTheorem}.  In Section 4, we explore the role of stochasticity in creating convex codes from non-convex periodic codes, deriving the probability that this transformation occurs. In Section 5, we conclude by analyzing periodic codes from an algebraic perspective and prove Theorem~\ref{fullCF}.

\section{Periodic codes}
We define a special class of codes, \textit{periodic codes}, which have both combinatorial and biological significance.  In this section, we give the basic properties of these codes and compare them to the more familiar \textit{cyclic codes}.  We end with a description of sound localization in the barn owl, which is our motivating
biological example of periodic codes, and discuss questions that arise from considering periodic codes in this context.

\subsection{Basic definitions}

We first introduce combinatorial neural codes.  To relate continuous firing patterns of a set of $n$ neurons to a discrete object, we associate each of the $n$ bits in a binary string to a neuron $x_i$, where $x_i = 1$ if the firing rate $f(x_i) \geq t$ for some firing threshold $t$ and $x_i = 0$ otherwise.  A \textit{neural code, C} of length $n$ is the collection of these binary strings, where each binary string is a \textit{codeword, c} of $C$.  Note that we will interchangeably use $c$ to denote a binary string and the set of indices in $[n]$ which are 1 in the binary string representation.  For example, $c = 10100$ is equivalent to $\{1, 3\}$.  This discrete formulation allows us to explore combinatorial and topological properties of neural firing.  To relate the codewords to the encoded stimuli, by analogy with place field codes, we are able to define subsets of the stimulus space for which $x_i =1$ as the receptive field of neuron $i$.  

An abstract {\it simplicial complex} $\Delta$ is a collection of sets which is closed under the operation of taking subsets, meaning if $\sigma \in \Delta$ and $\tau \subset \sigma$, then $\tau \in \Delta$.  Each element of a simplicial complex is called a \textit{face}, and a face $\sigma$ has \textit{dimension} $|\sigma| - 1$.  If a face is maximal, in the sense that it is not a proper subset of any other element of $\Delta$, then it is called a {\it facet}. Every neural code $C$ has a corresponding simplicial complex $\Delta(C)$ \cite{CurtoConvex}, defined as follows. 

\begin{definition}
The simplicial complex of a code $C$, denoted $\Delta(C),$ is given by
$$\Delta(C) = \{\sigma \mid \sigma \subset c \text{ for some } c \in C\}.$$
\end{definition}
Each neuron is viewed as a vertex, and the subsets of neurons which cofire in each of the different codewords correspond to higher dimensional simplices.  The simplicial complex of a code provides a useful topological structure but loses much of the detailed information about the code \cite{CurtoRing}.

\subsection{Periodic codes and their properties}
We now formally introduce periodic codes.  For any $k, m \in \mathbb{N}$, let $s_{k,m}$ denote the binary string $s_1\cdots s_{k+m}$ such that $s_i=1$ for $1\leq i \leq k$ and $s_j = 0$ for $k+1 \leq j \leq k+m$.  For example, $s_{2,3} = 11000$.

We use the term \textit{substring} to refer to a subset of bits of consecutive indices, where an \textit{x-substring} is a substring of length $x$.  For example, 010 is a 3-substring of 10101, but 111 is not because the indices are not consecutive.

\begin{definition}
{Let $k$, $m$, and $n$ be nonnegative integers such that $n\geq k+m$.  Let $c = c_1c_2\cdots c_n$ be a codeword of length $n$.  We say $c$ is a \textit{$\km$ periodic codeword on n neurons} if every $(k+m)$-substring of $c$ is a cyclic permutation of $s_{k,m}$.}
\label{codeword}
\end{definition}

More informally, a $\km$ periodic codeword on $n$ neurons is a codeword of length $n$, consisting of bands of activity and inactivity, where all of the bands of consecutive 1's have length $k$ and all the bands of consecutive 0's have length $m$ with the possible exceptions of the first and last band of activity or inactivity which may be  shorter.  This is illustrated in Figure \ref{CodeExample}, where the periodic pattern consists of bands of $k=2$ active neurons followed by bands of $m=3$ inactive neurons, but in the first band of activity only a single neuron fires.

\begin{figure}
	\centering
		\includegraphics[width=.9\textwidth]{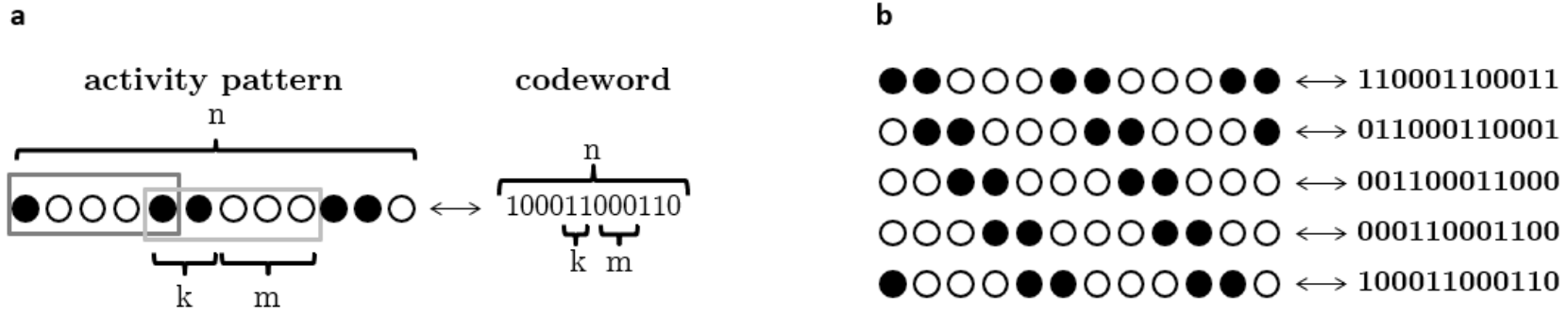}
	\caption{\emph{k-m Periodic Codes.}  \textbf{(a)} A $\km$ periodic codeword on $n$ neurons, where $k=2$, $m=3$, and $n=12$.  Circles depict active (filled) and inactive (no fill) neurons.  Each box shows a $(k+m)$-substring corresponding to cyclic permutations 10001 and 11000 of the fundamental string $s_{2,3} = 11000$.  \textbf{(b)} The 2-3 periodic code on 12 neurons, $C_{2,3}(12)$.  Each codeword begins with a different permutation of $s_{2,3}$.  Note that $|C_{2,3}(12)| = k+m = 5$.}
	\label{CodeExample}
\end{figure}

Recall that the \textit{Hamming weight} of a binary string $b$ is given by $w_H(b) = \sum b_i$.   

\begin{lemma}
\label{weightProp}
{\emph{Uniform Weight Property.} Let $c = c_1c_2\cdots c_n$ be a $\km$ periodic codeword on $n$ neurons.  Every substring of length $k+m$ has Hamming weight $k$.}
\end{lemma}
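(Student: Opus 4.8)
The plan is to derive this directly from Definition~\ref{codeword}. By that definition, every $k+m$ substring of $c$ is a cyclic permutation of the fundamental string $s_{k,m}$. So the whole content of the lemma reduces to the observation that the Hamming weight is invariant under cyclic permutation, together with the computation of $w_H(s_{k,m})$.

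First I would compute $w_H(s_{k,m})$. Writing $s_{k,m} = s_1 \cdots s_{k+m}$ with $s_i = 1$ for $1 \le i \le k$ and $s_j = 0$ for $k+1 \le j \le k+m$, we get $w_H(s_{k,m}) = \sum_{i=1}^{k+m} s_i = k$. Next I would note that if $b'$ is a cyclic permutation of a binary string $b$ of length $k+m$, then $b'$ has the same multiset of entries as $b$ — it is just a reindexing of the $b_i$ — so $w_H(b') = \sum b'_i = \sum b_i = w_H(b)$; the sum defining Hamming weight does not see the order of the summands.

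Finally I would combine the two: let $t$ be any substring of $c$ of length $k+m$. By Definition~\ref{codeword}, $t$ is a cyclic permutation of $s_{k,m}$, so by the invariance just noted, $w_H(t) = w_H(s_{k,m}) = k$, as claimed. I do not expect any real obstacle here; the lemma is an immediate unpacking of the definition, and the only point that deserves an explicit sentence is the permutation-invariance of the weight. If a slightly more self-contained argument is wanted, one could instead argue without invoking ``cyclic permutation'' as a black box: any $k+m$ substring of $c$ either coincides with a translate of the repeating block (hence has exactly one band of $k$ ones) or straddles a band boundary, in which case its ones split into a suffix-run and a prefix-run whose lengths sum to $k$; either way the weight is $k$. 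The first argument is cleaner, so I would present that one.
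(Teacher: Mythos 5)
Your proposal is correct and follows the same route as the paper, which likewise observes that every $k+m$ substring is a cyclic permutation of $s_{k,m}$ and hence has weight $k$; you merely spell out the permutation-invariance of Hamming weight, which the paper leaves implicit.
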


Lemma \ref{weightProp} follows immediately from Definition \ref{codeword} since every $(k+m)$-substring of $c$ is a cyclic permutation of $s_{k,m}$, which has weight $k$.  Observe that, although each codeword has the uniform weight property, every codeword does not have the same weight; in Figure \ref{CodeExample}b, the first codeword has weight 6, the second has weight 5, and the third has weight 4.

As the name suggests, $\km$ periodic codewords exhibit a periodic property, where the first $k+m$ bits of the codeword repeat periodically as formalized in Lemma \ref{periodicityProp} below. 

\begin{lemma}
\label{periodicityProp}
{\emph{Periodicity Property.} Let $c = c_1c_2\cdots c_n$ be a $\km$ periodic codeword on n neurons.  If $i \equiv j\mod{(k+m)}$, then $c_i = c_j$.}
\end{lemma}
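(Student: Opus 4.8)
The plan is to reduce the statement to the single-step claim
$$c_i = c_{i+k+m}\quad\text{whenever }1\le i\text{ and }i+k+m\le n,$$
and then to extract that single step directly from the Uniform Weight Property (Lemma~\ref{weightProp}) by comparing two length-$(k+m)$ windows of $c$ that overlap in all but one position.

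First I would set up the reduction. Suppose $i\equiv j\pmod{k+m}$ and, without loss of generality, $i\le j$; write $j=i+q(k+m)$ for a nonnegative integer $q$. Each of the indices $i,\ i+(k+m),\ i+2(k+m),\dots,j$ lies between $i$ and $j$, hence lies in $[n]$, so it suffices to prove $c_\ell=c_{\ell+k+m}$ for every $\ell$ with $1\le\ell$ and $\ell+k+m\le n$ and then chain this equality $q$ times to get $c_i=c_j$. (The degenerate cases $k+m=0$ and $i=j$ are trivial.)

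For the single step, fix $\ell$ with $\ell+k+m\le n$ and consider the two substrings $u=c_\ell c_{\ell+1}\cdots c_{\ell+k+m-1}$ and $v=c_{\ell+1}c_{\ell+2}\cdots c_{\ell+k+m}$, each of length $k+m$. They share the middle block $c_{\ell+1}\cdots c_{\ell+k+m-1}$; let $S=\sum_{p=\ell+1}^{\ell+k+m-1}c_p$ be its Hamming weight. By Lemma~\ref{weightProp}, $w_H(u)=w_H(v)=k$, so $c_\ell+S=k=S+c_{\ell+k+m}$, which forces $c_\ell=c_{\ell+k+m}$, as desired.

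I do not expect a genuine obstacle here: the argument is short once it is phrased in terms of Hamming weights of overlapping windows rather than in terms of how the cyclic-rotation representative of each window changes as the window slides. The only point needing care is the index bookkeeping in the reduction, namely checking that every intermediate index $i+p(k+m)$ and every window invoked stays within $[n]$ so that Lemma~\ref{weightProp} actually applies.
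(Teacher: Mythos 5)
Your proof is correct and follows essentially the same route as the paper: reduce by transitivity to the single step $c_\ell = c_{\ell+k+m}$, then compare the Hamming weights of the two overlapping length-$(k+m)$ windows using the Uniform Weight Property (Lemma~\ref{weightProp}) to force the two boundary bits to agree. Your version just spells out the index bookkeeping a bit more explicitly.
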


\begin{proof}
We will show that if $j=i+(k+m)$, then $c_i = c_j$.  From here, it follows by transitivity that $c_i = c_j$ for all $i,j$ such that $i \equiv j \mod{(k+m)}$.  Recall that $n\geq k+m$.  If $n=k+m$, there is nothing to prove.  Suppose $n>k+m$.  Let $i,j \in [n]$ with $j=i+(k+m)$.  Consider the $(k+m)$-substrings $s_1 =c_i\cdots c_{i+(k+m)-1}$ and $s_2 = c_{i+1}\cdots c_j$.  By Lemma \ref{weightProp}, $w_H(s_1)=w_H(s_2)=k$. Since $s_1$ and $s_2$ overlap on $k+m-1$ bits, we have that $w_H(s_2) = w_H(s_1)-c_i+c_j$.  Thus, we can conclude $c_i = c_j$.
\end{proof}

These two properties yield additional characterizations of $\km$ periodic codewords.

\begin{lemma}
\label{equivChar}
Let $c$ be a binary codeword of length $n$.  The following are equivalent:
\begin{enumerate}
\item{$c$ is a $\km$ periodic codeword on $n$ neurons.}
\item{Every $(k+m)$-substring of $c$ is a cyclic permutation of $s_{k,m}$.}
\item{$c_1 \cdots c_{k+m}$ is a cyclic permutation of $s_{k,m}$, and every $(k+m)$-substring has weight $k$.}
\item{$c_1 \cdots c_{k+m}$ is a cyclic permutation of $s_{k,m}$, and for all $i,j \in [n]$, if $i \equiv j \mod{(k+m)}$, then $c_i = c_j$.}
\end{enumerate}
\end{lemma}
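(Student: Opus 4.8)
The plan is to prove Lemma~\ref{equivChar} by establishing a cycle of implications, $(1)\Rightarrow(2)\Rightarrow(3)\Rightarrow(4)\Rightarrow(1)$, so that all four statements are mutually equivalent. The first implication $(1)\Rightarrow(2)$ is immediate: it is essentially the content of Definition~\ref{codeword}, so there is nothing to do beyond unwinding the definition. For $(2)\Rightarrow(3)$, I would observe that $(2)$ in particular applies to the first $k+m$ substring $c_1\cdots c_{k+m}$, giving the first half of $(3)$, while the weight-$k$ claim for every $k+m$ substring is exactly Lemma~\ref{weightProp} (or, more directly, follows since each such substring is a cyclic permutation of $s_{k,m}$, which has weight $k$).

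For $(3)\Rightarrow(4)$, the point is to upgrade ``every $k+m$ substring has weight $k$'' into the periodicity statement. This is precisely the argument already given in the proof of Lemma~\ref{periodicityProp}: if $j=i+(k+m)$, then the two overlapping $k+m$ substrings $c_i\cdots c_{i+(k+m)-1}$ and $c_{i+1}\cdots c_j$ both have weight $k$, and since they share $k+m-1$ bits we get $c_i=c_j$; transitivity then handles general $i\equiv j \pmod{k+m}$. So I would simply invoke Lemma~\ref{periodicityProp}, noting that its proof uses only the uniform-weight hypothesis, which is part of $(3)$.

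The only implication requiring genuine (though still short) work is $(4)\Rightarrow(1)$, and this is where I expect the main obstacle — such as it is — to lie. Given that $c_1\cdots c_{k+m}$ is a cyclic permutation of $s_{k,m}$ and that $c$ is $(k+m)$-periodic in the indexwise sense, I need to show every $k+m$ substring $c_{i}\cdots c_{i+(k+m)-1}$ is a cyclic permutation of $s_{k,m}$. Write $i-1 = q(k+m)+r$ with $0\le r < k+m$; by the periodicity hypothesis, $c_{i+t} = c_{1+((r+t)\bmod(k+m))}$ for each $t$, so the substring starting at position $i$ is the string obtained from $c_1\cdots c_{k+m}$ by cyclically rotating by $r$ positions. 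Since a cyclic rotation of a cyclic permutation of $s_{k,m}$ is again a cyclic permutation of $s_{k,m}$, this substring is a cyclic permutation of $s_{k,m}$, which is statement $(2)$, and hence $(1)$ by definition. The one bookkeeping subtlety is making the index arithmetic modulo $k+m$ line up cleanly — in particular being careful at the boundary when $i+(k+m)-1 > n$ does not occur (the substring always fits inside $[n]$ by hypothesis $n\ge k+m$ and the range of $i$), and handling the degenerate case $n=k+m$ separately, where all four statements trivially coincide.

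Finally I would remark that these characterizations make it easy to check periodicity in practice: $(3)$ and $(4)$ reduce verification to inspecting the first block plus one global condition, rather than checking all $n-(k+m)+1$ substrings directly, which is the form most useful in the later sections.
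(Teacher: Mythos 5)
Your proposal is correct and follows the same overall structure as the paper: the same cycle of implications, with $(1)\Leftrightarrow(2)$ from the definition, $(2)\Rightarrow(3)$ from Lemma~\ref{weightProp}, and $(3)\Rightarrow(4)$ by the argument of Lemma~\ref{periodicityProp}. The only difference is in the final implication, where the paper closes the cycle via $(4)\Rightarrow(2)$ by induction, sliding the $k+m$ window one position at a time using $c_i=c_{i+k+m}$, whereas you give a direct closed-form argument showing the window starting at $i$ is a cyclic rotation of the first block; both are valid and of comparable length.
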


\begin{proof}
($1\Leftrightarrow2$) The equivalence between 1 and 2 follows is given by Definition \ref{codeword}.  ($2\Rightarrow3$)  The equivalence from 2 to 3 follows directly from Lemma \ref{weightProp}.  ($3\Rightarrow4$) The proof is the same as that of Lemma \ref{periodicityProp}. ($4\Rightarrow2$) Suppose $c_i \cdots c_{i+k+m-1}$ is a cyclic permutation of $s_{k,m}$ and $c_i = c_{i+k+m}$.  Then, $c_{i+1} \cdots c_{i+k+m}$ is a cyclic permutation of $s_{k,m}$.  By hypothesis, $c_1 \cdots c_{k+m}$ is a cyclic permutation of $s_{k,m}$ and $c_i = c_j$ for all $i,j \in [n]$ such that $i \equiv j \mod{(k+m)}$.  So, by induction, every $(k+m)$-substring is a cyclic permutation of $s_{k,m}$.  
\end{proof}

These characterizations show that once the first $k+m$ bits of a $\km$ periodic codeword $c$ are given, all other bits of $c$ are determined.  This observation makes it easy to count the number of possible $\km$ periodic codewords and shows that the number of possible codewords is independent of $n$.

\begin{definition}
{The \textit{$\km$ periodic code on n neurons}, denoted $C_{k,m}(n)$, is the binary code which contains all possible $\km$ periodic codewords on $n$ neurons and no other codewords.} \end{definition}

Recall that the \textit{size} of a code $C$, denoted $|C|$, is the number of codewords it contains. 

\begin{proposition}
\label{size}
Let $C_{k,m}(n)$ be the $\km$ periodic code on $n$ neurons.
\begin{enumerate}
\item{If $k=0$ or $m=0$, then $|C_{k,m}(n)| = 1$.}
\item{If $k,m\neq 0$, then $|C_{k,m}(n)| = k+m$.}
\end{enumerate}
\end{proposition}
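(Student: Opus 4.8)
The plan is to count codewords by counting the distinct cyclic permutations of the fundamental string $s_{k,m}$, exploiting characterization (4) of Lemma~\ref{equivChar}. The first step is to make precise the observation in the remark preceding the proposition: there is a bijection between $C_{k,m}(n)$ and the set of distinct cyclic permutations of $s_{k,m}$. The map $c \mapsto c_1\cdots c_{k+m}$ is well-defined with image in the cyclic permutations of $s_{k,m}$ by Lemma~\ref{equivChar} (since (1) $\Rightarrow$ (4)); it is injective because condition (4) forces $c_i = c_{((i-1)\bmod(k+m))+1}$ for every $i \in [n]$, so the initial block determines $c$ completely (here we use $n \geq k+m$, which is part of the definition); and it is surjective because, given any cyclic permutation $p$ of $s_{k,m}$, setting $c_i = p_{((i-1)\bmod(k+m))+1}$ produces a string whose initial block is $p$ and which trivially satisfies condition (4), hence lies in $C_{k,m}(n)$. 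Therefore $|C_{k,m}(n)|$ equals the number of distinct cyclic permutations of $s_{k,m}$, and this is independent of $n$.

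For part (1), if $k=0$ then $s_{0,m}$ is the all-zeros string of length $m$, and if $m=0$ then $s_{k,0}$ is the all-ones string of length $k$; in either case the string is fixed by every cyclic shift, so it has a unique cyclic permutation, giving $|C_{k,m}(n)| = 1$.

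For part (2), with $k,m \geq 1$, I would show that the $k+m$ cyclic shifts $\sigma^0(s_{k,m}), \dots, \sigma^{k+m-1}(s_{k,m})$ are pairwise distinct, which yields $|C_{k,m}(n)| = k+m$. If $\sigma^a(s_{k,m}) = \sigma^b(s_{k,m})$ with $0 \le a < b \le k+m-1$, then $s_{k,m}$ is fixed by the nontrivial shift $\sigma^{b-a}$, so it suffices to prove that the only cyclic shift fixing $s_{k,m}$ is the identity. To see this, read indices cyclically (with $s_{k+m+1} := s_1$) and count ``descents,'' i.e. indices $i$ with $s_i = 1$ and $s_{i+1} = 0$. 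For $1 \le i \le k-1$ both bits are $1$; for $k+1 \le i \le k+m$ we have $s_i = 0$; and for $i = k$ we have $s_k = 1$ (as $k \ge 1$) and $s_{k+1} = 0$ (as $k+1 \le k+m$ since $m \ge 1$). So $s_{k,m}$ has exactly one descent, at $i = k$. Any shift fixing $s_{k,m}$ must send this unique descent to itself, forcing the shift to be trivial. Hence all $k+m$ cyclic permutations are distinct.

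The routine content is the bijection and part (1); the crux is ruling out coincidences among cyclic shifts in part (2), i.e. showing $1^k0^m$ is primitive (not a proper power) for $k,m \geq 1$. The descent-counting argument above is the clean way to do this; alternatively one can note that the run-length structure of $1^k0^m$, read cyclically, consists of exactly one run of each symbol, which cannot happen for a nontrivial power of a shorter word.
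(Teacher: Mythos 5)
Your proposal is correct and follows essentially the same route as the paper: both count codewords via the observation that each codeword is determined by its first $k+m$ bits, so $|C_{k,m}(n)|$ equals the number of distinct cyclic permutations of $s_{k,m}$. The only difference is that you carefully verify what the paper treats as immediate, namely the bijection with cyclic permutations and the fact that the $k+m$ cyclic shifts of $1^k0^m$ are pairwise distinct when $k,m\geq 1$ (your descent-counting argument), which is a fine piece of added rigor but not a different method.
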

\begin{proof}
1.  If $k=0$ or $m=0$, then $s_{k,m}$ is a string of all 0's or all 1's.  Since there is only one possible permutation of $s_{k,m}$, we have $|C_{k,m}(n)| = 1$.  

2.  If $k,m \neq 0$, there are $k+m$ cyclic permutations of $s_{k,m}$.  There is one codeword beginning with each of these permutations, so $|C_{k,m}(n)| = k+m$.
\end{proof}

Figure \ref{CodeExample} shows an example of the firing patterns and corresponding codewords of a periodic code.  The $\km$ periodic code is completely parameterized by $k$, $m$, and $n$, which are 2, 3, and 12 in the figure respectively.  These five codewords constitute $C_{2,3}(12)$.

The previous properties resulted from the combinatorial properties of periodic codes, but the periodic structure of these codes also gives rise to topological properties in the simplicial complex of the code, which allow us to compare periodic codes to cyclic codes in the next section.    As a result of the Periodicity Property (Lemma \ref{periodicityProp}) of periodic codewords, we are also able to give a property of $\Delta(C_{k,m}(n))$, which will be useful for proving later results.

\begin{proposition}
Let $\Delta = \Delta(C_{k,m}(n))$.  Assume $i \equiv j \mod{(k+m)}$.  If $v_i \cup \sigma \in \Delta$, then $v_i \cup v_j \cup \sigma \in \Delta$.  In particular, $v_i \cup \sigma \in \Delta$ if and only if $v_j \cup \sigma \in \Delta$.
\label{simplexStructure}
\end{proposition}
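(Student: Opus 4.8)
The plan is to unwind the definition of $\Delta(C)$ and then apply the Periodicity Property essentially verbatim. Recall that a set $\tau$ lies in $\Delta = \Delta(C_{k,m}(n))$ precisely when $\tau \subseteq c$ for some codeword $c \in C_{k,m}(n)$. So I would start by assuming $v_i \cup \sigma \in \Delta$ and choosing a witness codeword $c \in C_{k,m}(n)$ with $v_i \cup \sigma \subseteq c$; in terms of bits this says $c_i = 1$ and $c_\ell = 1$ for every $\ell \in \sigma$. Since $c$ is by construction a $\km$ periodic codeword on $n$ neurons and we are assuming $i \equiv j \mod{(k+m)}$, Lemma~\ref{periodicityProp} yields $c_j = c_i = 1$. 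Hence $v_i \cup v_j \cup \sigma \subseteq c$, and therefore $v_i \cup v_j \cup \sigma \in \Delta$, which is the first assertion.

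For the ``in particular'' clause I would use only that $\Delta$ is a simplicial complex, hence closed under passing to subsets. If $v_i \cup \sigma \in \Delta$, then by the paragraph above $v_i \cup v_j \cup \sigma \in \Delta$, and since $v_j \cup \sigma \subseteq v_i \cup v_j \cup \sigma$ this forces $v_j \cup \sigma \in \Delta$. The reverse implication is obtained by exchanging the roles of $i$ and $j$, which is legitimate because the relation $i \equiv j \mod{(k+m)}$ is symmetric; so $v_i \cup \sigma \in \Delta$ if and only if $v_j \cup \sigma \in \Delta$.

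The only things needing a little care are bookkeeping: one should note that $j$ is taken in $[n]$ so that $v_j$ is genuinely a vertex of $\Delta$ (this is implicit in the statement, which already speaks of $v_j \cup \sigma \in \Delta$), and one should observe that the degenerate cases $k=0$ or $m=0$ are covered automatically, since then $C_{k,m}(n)$ has a single codeword and Lemma~\ref{periodicityProp} still applies. I do not expect any real obstacle here; the content of the proposition is just Lemma~\ref{periodicityProp} transported to the level of the simplicial complex, with the main (mild) subtlety being to phrase the argument in terms of a witness codeword rather than working directly with faces.
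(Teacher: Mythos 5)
Your proof is correct and follows essentially the same route as the paper: use Lemma~\ref{periodicityProp} to see that any codeword containing $i$ also contains $j$, conclude $v_i \cup v_j \cup \sigma \in \Delta$, and then get $v_j \cup \sigma \in \Delta$ from closure under subsets. Your phrasing via an explicit witness codeword is in fact a bit more precise than the paper's informal ``$v_j$ is connected to any simplex to which $v_i$ is connected,'' but the underlying argument is identical.
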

\begin{proof}
Without loss of generality, assume $v_i \cup \sigma \in \Delta$.  Since by Lemma \ref{periodicityProp}, $c_i = c_j$ for all $c \in C$, $v_j$ is connected to $v_i$ and $v_j$ is connected to any face to which $v_i$ is connected.  Therefore, $v_i \cup v_j \cup \sigma \in \Delta$.  Since $\Delta$ is a simplicial complex, $v_j \cup \sigma \in \Delta$.
\end{proof}

\subsection{Comparison to cyclic codes}
One class of highly structured codes that are of particular relevance to coding theorists are \textit{cyclic codes} \cite{CyclicCodeBook}.  A cyclic code is defined by the property that the set of codewords is closed under all shifts in coordinates.  Since periodic codes have a similar periodic property (Lemma \ref{periodicityProp}), it is natural to ask whether periodic codes are just a special case of cyclic codes.

\begin{definition}
A \textit{cyclic code of length n} is a code $C$ with the property that for every $c_1c_2 \cdots c_{n-1}c_n \in C$, the cyclic permutation $c_nc_1c_2\cdots c_{n-1} \in C$.
\label{cyclicCode}
\end{definition}

Note that cyclic codes are often defined with the additional property that the code be \textit{linear}.  A \textit{linear binary code} $C$ is a binary code where for all $c, d \in C$, $c+d \in C$, where addition is performed bitwise over $\mathbb{F}_2$.  Clearly, the all-zeros codeword $0 0 \cdots 0 \in C$ for any linear code $C$.  We do not require the extra structure imposed by linearity because most periodic codes are not linear, and in fact, a $\km$ periodic code is linear if and only if $k=0$ since these are the only $\km$ periodic codes which contain the all-zeros codeword.

The following lemma shows that, although they are not linear, many periodic codes satisfy the cyclic property from Definition \ref{cyclicCode}.  

\begin{lemma}
Let $C = C_{k,m}(n)$.  The code $C$ is cyclic if and only if for all $c \in C$, $c_n = c_{k+m}$.
\label{cycEquiv}
\end{lemma}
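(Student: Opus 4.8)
The plan is to prove both directions of the biconditional directly, using the characterization of periodic codewords in terms of their first $k+m$ bits (Lemma~\ref{equivChar}) together with the Periodicity Property (Lemma~\ref{periodicityProp}). The key observation is that a $\km$ periodic codeword $c$ is completely determined by its initial block $c_1\cdots c_{k+m}$, which must be a cyclic permutation of $s_{k,m}$; so $C_{k,m}(n)$ has exactly $k+m$ codewords (when $k,m \neq 0$), one for each cyclic shift of the fundamental string. A cyclic shift of a whole codeword $c_1\cdots c_n \mapsto c_n c_1 \cdots c_{n-1}$ will again be a $\km$ periodic codeword precisely when this shift is "compatible" with the periodic pattern, and the condition $c_n = c_{k+m}$ is exactly what forces the shifted string to still look like a block-periodic pattern with the right band lengths.

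**First I would** handle the degenerate cases $k=0$ or $m=0$ separately: there $|C|=1$ by Proposition~\ref{size}, the unique codeword is all-$0$'s or all-$1$'s, so $C$ is trivially cyclic, and the condition $c_n = c_{k+m}$ holds vacuously since all bits agree. **Then, for the main case** $k,m \neq 0$: for the forward direction, assume $C$ is cyclic and take any $c \in C$. By Lemma~\ref{periodicityProp}, $c_n = c_r$ where $r \in [k+m]$ with $r \equiv n \pmod{k+m}$, and likewise shifting and re-reading off the periodicity. I would pick a convenient codeword — e.g. the one whose first block is $s_{k,m} = 1^k 0^m$ — push the cyclic shift through, and observe that for the shifted word $c_n c_1 \cdots c_{n-1}$ to be $\km$ periodic its first $k+m$ bits $c_n c_1 \cdots c_{k+m-1}$ must be a cyclic permutation of $s_{k,m}$; comparing with the fact that $c_1 \cdots c_{k+m}$ is already such a permutation forces $c_n = c_{k+m}$. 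For the converse, assume $c_n = c_{k+m}$ for all $c \in C$; given $c \in C$, I would show its cyclic shift $c' = c_n c_1 \cdots c_{n-1}$ satisfies condition (3) or (4) of Lemma~\ref{equivChar} — its initial block is a cyclic permutation of $s_{k,m}$ (using $c_n = c_{k+m}$ to identify which permutation), and every $(k+m)$-substring still has weight $k$ because the only substrings affected by the shift involve the wrapped-around bit $c_n$, whose value equals $c_{k+m}$, preserving the weight count — hence $c' \in C_{k,m}(n) = C$.

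**The main obstacle** will be bookkeeping the indices around the "wrap-around" point cleanly: after a cyclic shift, the bit that was in position $n$ lands in position $1$, and one must check that the new windows $c' _i \cdots c'_{i+k+m-1}$ straddling this seam still have weight $k$ and form cyclic permutations of $s_{k,m}$. The condition $c_n = c_{k+m}$ is precisely the hypothesis that makes the seam invisible to the periodic structure — intuitively, position $n$ behaves like position $k+m$ modulo the period (note $n \not\equiv k+m$ in general, so this is an extra constraint), so prepending $c_n$ is the same as prepending $c_{k+m}$, which just rotates the fundamental block by one. I would make this precise by reducing everything to the first $k+m$ coordinates via Lemma~\ref{equivChar}(4), so that only a constant-size check near the seam remains, rather than an argument over all of $[n]$. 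A minor subtlety worth stating explicitly: the claimed equivalence quantifies over all $c \in C$, but since all codewords of $C_{k,m}(n)$ share the same bit-equality pattern (positions congruent mod $k+m$ agree), the condition "$c_n = c_{k+m}$" either holds for every codeword or fails for every codeword depending only on whether $n \equiv k+m \pmod{k+m}$ fails while the bit values still happen to match — so in fact it suffices to check it for a single codeword, which streamlines the proof.
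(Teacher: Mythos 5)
Your overall strategy is the same as the paper's: for ($\Rightarrow$) you compare the initial $(k+m)$-window of $c$ with that of its cyclic shift (both are cyclic permutations of $s_{k,m}$, hence of weight $k$, and they overlap in $c_1\cdots c_{k+m-1}$, which forces $c_n=c_{k+m}$), and for ($\Leftarrow$) you note the shifted word has initial block $c_{k+m}c_1\cdots c_{k+m-1}$, a cyclic permutation of $s_{k,m}$, while every other $(k+m)$-substring of the shift is a substring of $c$ itself, so Lemma~\ref{equivChar}(3) applies. That core argument is correct and matches the paper.

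However, your closing ``minor subtlety'' is false, and it creates a gap in your forward direction as you describe it. You claim that because positions congruent mod $k+m$ agree within every codeword, the condition $c_n=c_{k+m}$ holds for every codeword or for none, so it suffices to verify it on one convenient codeword (the one beginning $1^k0^m$). By Lemma~\ref{periodicityProp}, $c_n=c_r$ where $r\equiv n \pmod{k+m}$ and $1\le r\le k+m$; when $r\neq k+m$, whether $c_r=c_{k+m}$ depends on which cyclic permutation of $s_{k,m}$ the codeword starts with. Concretely, in $C_{2,3}(7)$ the codeword with initial block $00110$ has $c_7=c_2=0=c_5$, while the codeword with initial block $11000$ has $c_7=c_2=1\neq 0=c_5$: the condition holds for some codewords and fails for others. (The paper's Proposition~\ref{periodicCyclic} relies precisely on the existence of \emph{some} violating codeword when $k+m$ does not divide $n$.) The fix costs nothing: run your weight comparison for an arbitrary $c\in C$, as you in fact begin to do --- cyclicity places the shift of \emph{every} codeword back in $C$, so the comparison yields $c_n=c_{k+m}$ for each $c$ directly, and no single-codeword reduction is needed or available.
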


\begin{proof}
Recall $n \geq k+m$.  ($\Rightarrow$)  Suppose $C$ is cyclic, and let $c_1\cdots c_n \in C$.  Since $C$ is cyclic, $c_nc_1 \cdots c_{n-1} \in C$.  By Lemma \ref{weightProp}, $w_H(c_1 \cdots c_{k+m}) = w_H(c_nc_1 \cdots c_{k+m-1}) = k$.  This implies $c_n = c_{k+m}$.  ($\Leftarrow$) Let $c = c_1 \cdots c_n \in C$.  To show $C$ is cyclic, we want to show $c' = c_nc_1 \cdots c_{n-1} \in C$.  By assumption $c_n = c_{k+m}$, so $c_nc_1 \cdots c_{k+m-1}$ is a cyclic permutation of $s_{k,m}$ and has weight $k$.  All other $(k+m)$-substrings of  $c'$ are substrings of $c$, and so they have weight $k$.  By part 3 of Lemma \ref{equivChar}, $c'$ is $\km$ periodic and hence in $C$.
\end{proof}

To see why $n$ must be a multiple of $k+m$, consider $C = C_{2,2}(5)$.  Clearly, the codeword $11001\in C$, but the cyclic permutation $11100 \notin C$.

\begin{proposition}
\label{periodicCyclic} 
Let $C_{k,m}(n)$ be a $\km$ periodic code on $n$ neurons.
\begin{enumerate}
\item{If $k=0$ or $m=0$, then $C_{k,m}(n)$ is a cyclic code, independent of $n$.}
\item{If $k\neq 0$ and $m\neq 0$, then $C_{k,m}(n)$ is a cyclic code if and only if $n$ is a multiple of $k+m$.}
\end{enumerate}
\end{proposition}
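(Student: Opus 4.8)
The plan is to derive both parts from Lemma~\ref{cycEquiv}, which reduces cyclicity of $C_{k,m}(n)$ to the single requirement that $c_n = c_{k+m}$ for every $c \in C$, combined with the Periodicity Property (Lemma~\ref{periodicityProp}).

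For part~1, if $k = 0$ or $m = 0$ then by Proposition~\ref{size} the code consists of a single codeword, namely the all-zeros or all-ones string, which is fixed by every cyclic shift; equivalently, $c_n = c_{k+m}$ holds trivially for this one constant codeword, so Lemma~\ref{cycEquiv} gives cyclicity for all $n$.

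For part~2, assume $k, m \neq 0$. For the ($\Leftarrow$) direction, if $n = q(k+m)$ then $n \equiv 0 \equiv k+m \pmod{k+m}$, so Lemma~\ref{periodicityProp} forces $c_n = c_{k+m}$ for every $c \in C$, and Lemma~\ref{cycEquiv} yields that $C$ is cyclic. For the ($\Rightarrow$) direction I would prove the contrapositive: assuming $(k+m) \nmid n$, write $n = q(k+m) + r$ with $1 \le r \le k+m-1$, and exhibit one codeword $c \in C$ for which $c_n \neq c_{k+m}$. Since $n \equiv r \pmod{k+m}$, Lemma~\ref{periodicityProp} gives $c_n = c_r$, so it is enough to find $c$ with $c_r \neq c_{k+m}$; as each codeword of $C$ may begin with any cyclic permutation of $s_{k,m}$, this reduces to a statement about the bits of such permutations. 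If $1 \le r \le k$, take the codeword beginning with $s_{k,m}$ itself, so that $c_r = 1$ while $c_{k+m} = 0$ (the $(k+m)$-th bit lies in the block of zeros, using $m \ge 1$). If $k+1 \le r \le k+m-1$, take instead the codeword beginning with the cyclic permutation $1^{k-1}0^m1$ of $s_{k,m}$, so that $c_{k+m} = 1$ while $c_r = 0$. In both cases $c_n = c_r \neq c_{k+m}$, so $C$ fails to be cyclic by Lemma~\ref{cycEquiv}.

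The routine steps are the index bookkeeping modulo $k+m$ and the verification that $1^{k-1}0^m1$ really is a cyclic permutation of $s_{k,m}$ (the one obtained by moving a single $1$ around to the end), which also handles the degenerate subcase $k = 1$. The one genuine choice is how to produce the counterexample codeword in part~2, and the case split on whether the residue $r$ lands in the initial block of $1$'s or the block of $0$'s is the crux of the argument --- this is the step I expect to need the most care. An alternative worth noting, which avoids naming an explicit permutation, is to observe that the infinite $(k+m)$-periodic $0/1$ pattern underlying these codewords has minimal period exactly $k+m$ when $k, m \ge 1$, hence cannot satisfy $t_{a+r} = t_a$ for all $a$ when $1 \le r \le k+m-1$; this again yields a suitable $c$, but it is slightly more involved, so I would favor the direct construction.
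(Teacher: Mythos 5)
Your proposal is correct and follows essentially the same route as the paper: part~1 by the single constant codeword, and part~2 by reducing cyclicity to $c_n = c_{k+m}$ via Lemma~\ref{cycEquiv}, using Lemma~\ref{periodicityProp} for the backward direction and the contrapositive for the forward one. The only difference is that you explicitly construct the cyclic permutation witnessing $c_r \neq c_{k+m}$ (via the case split on whether $r$ lies in the block of ones or zeros), a detail the paper simply asserts exists.
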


\begin{proof}
1. If $k=0$ or $m=0$, $C_{k,m}(n)$ consists only of the all 0's or all 1's codeword respectively, so is trivially cyclic.  2. Suppose $k,m \neq 0$.  Let $C = C_{k,m}(n)$ and recall $n\geq k+m$.  By Lemma \ref{cycEquiv}, it suffices to prove that $c_n = c_{k+m}$ for all $c \in C$ if and only if $n$ is a multiple of $k+m$.  ($\Leftarrow$)  Assume $n$ is a multiple of $k+m$.  By Lemma \ref{periodicityProp}, $c_n = c_{k+m}$.  ($\Rightarrow$)  We prove the contrapositive.  Assume $n = a(k+m) + b$ for integers $a,b>0$ and $b<k+m$.  By Lemma \ref{periodicityProp}, $c_n = c_b$ for all $c\in C$.  Since there exists a cyclic permutation of $s_{k,m}$, and hence a codeword $c \in C$, such that $c_b \neq c_{k+m}$, it follows that $c_n \neq c_{k+m}$ for that codeword.
\end{proof}

\begin{figure}
	\centering
		\includegraphics[width=.9\textwidth]{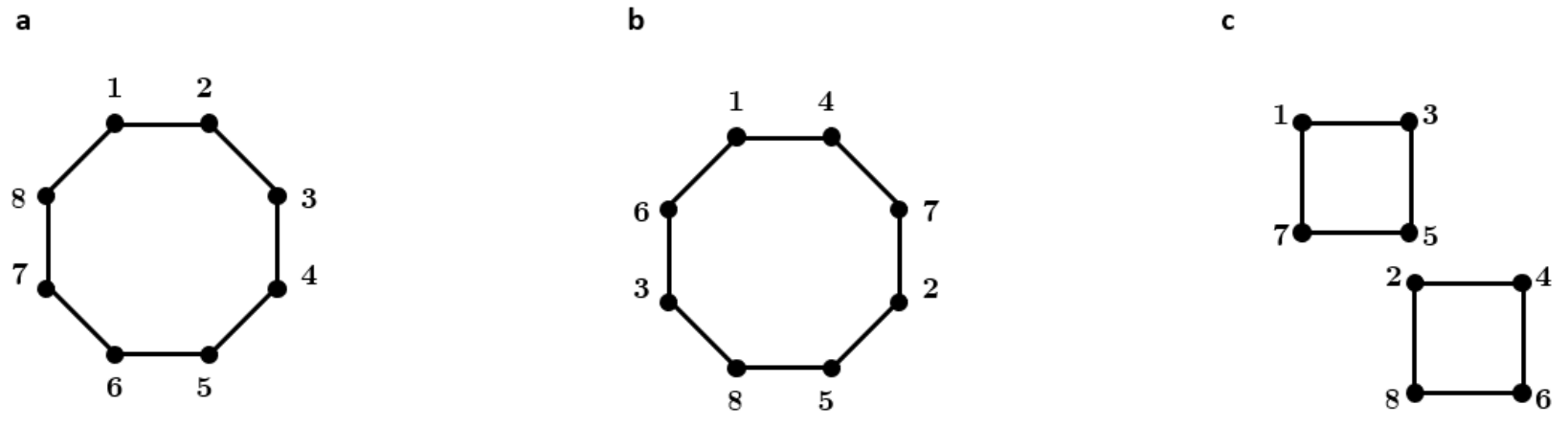}
	
	\caption{\emph{Comparing $\km$ Periodic Codes and Cyclic Codes.}  The simplicial complexes of \textbf{(a)} $C = C_{2,6}(8)$, \textbf{(b)} the cyclic code $C' =$ \{10010000, 01001000, 00100100, 00010010, 00001001, 10000100, 01000010, 00100001\}, and \textbf{(c)} the cyclic code $\tilde{C} =$ \{10100000, 01010000, 00101000, 00010100, 00001010, 00000101, 10000010, 01000001\}.  The codes $C$ and $C'$ have isomorphic simplicial complexes and so are permutation equivalent (Corollary \ref{maxCyclic}) as can be seen by matching vertices in the same position to obtain the permutation (24)(37)(68).  $\tilde{C}$ has a different simplicial complex and, thus, is not a $\km$ periodic code.}
	\label{CyclicPerm}
\end{figure}

Proposition \ref{periodicCyclic} tells us which $\km$ periodic codes are cyclic, so it is natural to ask which cyclic codes are permutation equivalent to $\km$ periodic codes, meaning that there exists a permutation of the vertices of the cyclic code such that the permuted code is periodic.  As seen in Figure~\ref{CyclicPerm}, some cyclic codes can be made periodic by applying a permutation of the vertices.  Comparing the simplicial complexes of a cyclic and periodic code allows us to see they are permutation equivalent when they have the same simplicial complex.  By matching the vertices in the simplicial complexes, we are able to give a permutation which makes the codes the same.  In this example, we can apply the permutation (24)(37)(68) to $C_{2,6}(8)$ to obtain $C'$.

We say that a codeword is \textit{maximal} if it is contained in no other codewords.  We call a code maximal if it contains only maximal codewords.  Observe that $C_{k,m}(n)$ is always maximal.

\begin{proposition}
Any two maximal codes are permutation equivalent if and only if they have isomorphic simplicial complexes.
\label{simpPermEquivalent}
\end{proposition}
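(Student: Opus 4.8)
The plan is to exploit the fact that a maximal code is completely determined by its simplicial complex, because its codewords are precisely the facets. First I would record this recovery statement: if $C$ is a maximal code, then $c \in C$ if and only if $c$ is a facet of $\Delta(C)$. Indeed, every $c \in C$ lies in $\Delta(C)$, and if $c \subsetneq \tau \in \Delta(C)$ then $\tau \subseteq c'$ for some $c' \in C$, forcing $c \subsetneq c'$ and contradicting the maximality of the codeword $c$; hence $c$ is a facet. Conversely, any facet $\tau$ of $\Delta(C)$ satisfies $\tau \subseteq c$ for some $c \in C \subseteq \Delta(C)$, and maximality of $\tau$ gives $\tau = c \in C$. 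So $C$ equals the set of facets of $\Delta(C)$, and likewise for any other maximal code. This is exactly where maximality is used: without it, a code is not recoverable from its complex.

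For the forward direction, suppose $C$ and $D$ are permutation equivalent via a permutation $\pi$ of $[n]$, i.e. $D = \{\pi(c) : c \in C\}$. Then for any $\sigma \subseteq [n]$ we have $\sigma \in \Delta(C)$ iff $\sigma \subseteq c$ for some $c \in C$, iff $\pi(\sigma) \subseteq \pi(c)$ for some $c \in C$, iff $\pi(\sigma) \in \Delta(D)$. Thus $\pi$, restricted to the vertex set of $\Delta(C)$, is an isomorphism of simplicial complexes $\Delta(C) \to \Delta(D)$.

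For the converse, suppose $\phi \colon \Delta(C) \to \Delta(D)$ is an isomorphism, that is, a bijection between the vertex sets of the two complexes such that $\sigma$ is a face of $\Delta(C)$ iff $\phi(\sigma)$ is a face of $\Delta(D)$. Since both $\phi$ and $\phi^{-1}$ preserve the face relation, $\phi$ carries maximal faces to maximal faces, so by the recovery statement $\phi$ restricts to a bijection $C \to D$. The one gap is that the vertex set of $\Delta(C)$ may be a proper subset of $[n]$ — a neuron that never fires contributes no vertex — so $\phi$ is a priori only a partial coordinate map; but isomorphic complexes have equally many vertices, hence (using that $C$ and $D$ both have length $n$) equally many leftover indices in $[n]$, and extending $\phi$ by any bijection between these leftover sets yields a genuine permutation $\pi$ of $[n]$ with $\pi(C) = D$.

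The proposition itself is not hard; the points requiring care are the recovery statement $C = \{\text{facets of } \Delta(C)\}$ and the bookkeeping needed to promote the vertex-set bijection to a full coordinate permutation. I expect the latter to be the main (minor) obstacle, together with making explicit the standing assumption that the two codes have the same length $n$ — without that hypothesis, two codes of different lengths could have isomorphic simplicial complexes while not being permutation equivalent at all.
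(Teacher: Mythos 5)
Your proof is correct and takes essentially the same route as the paper's: since a maximal code consists exactly of the facets of its simplicial complex, an isomorphism of complexes induces the permutation of neurons. Your write-up merely makes explicit two details the paper leaves implicit, namely the recovery statement $C=\{\text{facets of }\Delta(C)\}$ and the extension of the vertex bijection to a full permutation of $[n]$ over the silent neurons.
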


\begin{proof}
($\Rightarrow$) It is clear that if two codes do not have the same simplicial complex, then they are not permutation equivalent.  ($\Leftarrow$) Let $C_1$ and $C_2$ be two maximal codes with isomorphic simplicial complexes.  Since all the codewords in $C_1$ and $C_2$ are maximal, they all correspond to a facet of the simplicial complex.  The isomorphism between $\Delta(C_1)$ and $\Delta(C_2)$ is a permutation of vertices that takes facets to facets and thus induces an isomorphism between $C_1$ and $C_2$ by permuting vertices (neurons).
\end{proof}

Observe that two codes with the same simplicial complex need not be permutation equivalent.  This is because two codes have the same simplicial complex if and only if they have the same maximal codewords, but the codes may differ on non-maximal codewords.  For example, consider the codes $C_1 = \{11\}$ and $C_2 = \{11, 10\}$.  We have $\Delta(C_1) = \Delta(C_2) = \{\emptyset, \{1\}, \{2\}, \{1,2\}\}$.  However, it is clear that the codes are not equivalent as $|C_1| \neq |C_2|$.  Thus, the maximality property is necessary.\\
\indent As a consequence of Proposition \ref{simpPermEquivalent}, we can now assert when a cyclic code is permutation equivalent to a periodic code.

\begin{corollary}
A cyclic code $C$ of length $n$ is permutation equivalent to $C_{k,m}(n)$ if and only if $\Delta(C) \cong \Delta(C_{k,m}(n))$ and $|C| = |C_{k,m}(n)|$.
\label{maxCyclic}
\end{corollary}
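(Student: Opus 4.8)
The plan is to reduce everything to Proposition~\ref{simpPermEquivalent}. Since $C_{k,m}(n)$ is maximal, that proposition already says that any \emph{maximal} code whose simplicial complex is isomorphic to $\Delta(C_{k,m}(n))$ is permutation equivalent to it. So the real content of the corollary is to show that, under the two hypotheses on the right-hand side, the cyclic code $C$ is forced to be maximal. This is not automatic for cyclic codes in general (for instance $\{00,01,10,11\}$ is cyclic but not maximal), so the size hypothesis must be doing the work, and identifying how is the crux.

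The forward direction is routine: if $\pi$ is a coordinate permutation with $C=\pi(C_{k,m}(n))$, then $\pi$ carries $\Delta(C_{k,m}(n))$ isomorphically onto $\Delta(C)$ and is a bijection on codewords, so $\Delta(C)\cong\Delta(C_{k,m}(n))$ and $|C|=|C_{k,m}(n)|$ both follow immediately.

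For the converse, the key fact I would isolate is that for \emph{any} code $C$ the facets of $\Delta(C)$ are exactly the maximal codewords of $C$: a facet is a maximal face, hence (being contained in some codeword and maximal) equals that codeword and is a maximal codeword; conversely a maximal codeword is contained in no other codeword, hence in no larger face, so it is a facet. Consequently $\#\{\text{facets of }\Delta(C)\}=\#\{\text{maximal codewords of }C\}\le |C|$, with equality exactly when $C$ is maximal. Now apply this to both codes: since $C_{k,m}(n)$ is maximal, $\Delta(C_{k,m}(n))$ has exactly $|C_{k,m}(n)|$ facets; a simplicial-complex isomorphism preserves the number of facets, so $\Delta(C)$ has $|C_{k,m}(n)|=|C|$ facets, which forces $C$ to be maximal. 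With $C$ maximal and $\Delta(C)\cong\Delta(C_{k,m}(n))$ in hand, Proposition~\ref{simpPermEquivalent} yields the permutation equivalence.

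The only point where I would be careful is this facet/codeword correspondence and its behavior in the degenerate cases $k=0$ or $m=0$, where $C_{k,m}(n)$ is a single codeword and $\Delta(C_{k,m}(n))$ is either $\{\emptyset\}$ or the full simplex on $[n]$ — each has a single facet, so the counting argument above goes through verbatim. I do not expect any genuinely hard step beyond setting up this bookkeeping.
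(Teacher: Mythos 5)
Your proof is correct and follows essentially the same route as the paper: both directions reduce to Proposition~\ref{simpPermEquivalent}, with the converse using the facet--maximal-codeword correspondence together with $|C| = |C_{k,m}(n)|$ to force $C$ to be maximal. Your explicit counting of facets (and the check of the degenerate $k=0$, $m=0$ cases) just spells out the bookkeeping the paper leaves implicit.
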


\begin{proof}
($\Rightarrow$) Assume $C$ is permutation equivalent to $C_{k,m}(n)$.  Clearly, $|C| = |C_{k,m}(n)|$.  Since $C_{k,m}(n)$ is maximal, $C$ must also be maximal.  By Proposition \ref{simpPermEquivalent}, $\Delta(C) \cong \Delta(C_{k,m}(n))$.  ($\Leftarrow$) Assume $|C| = |C_{k,m}(n)|$ and $\Delta(C) \cong \Delta(C_{k,m}(n))$.  Since $C_{k,m}(n)$ is maximal, its codewords are all the facets of $\Delta(C_{k,m}(n))$.  Since $\Delta(C) \cong \Delta(C_{k,m}(n))$, $C$ must also contain a codeword corresponding to each facet.  Since $|C| = |C_{k,m}(n)|$, $C$ must also be maximal.  By Proposition \ref{simpPermEquivalent}, $C$ and $C_{k,m}(n)$ are permutation equivalent.
\end{proof}

\subsection{Biological motivation: sound localization in the barn owl}

\begin{figure}
	\centering
		\includegraphics[width=\textwidth]{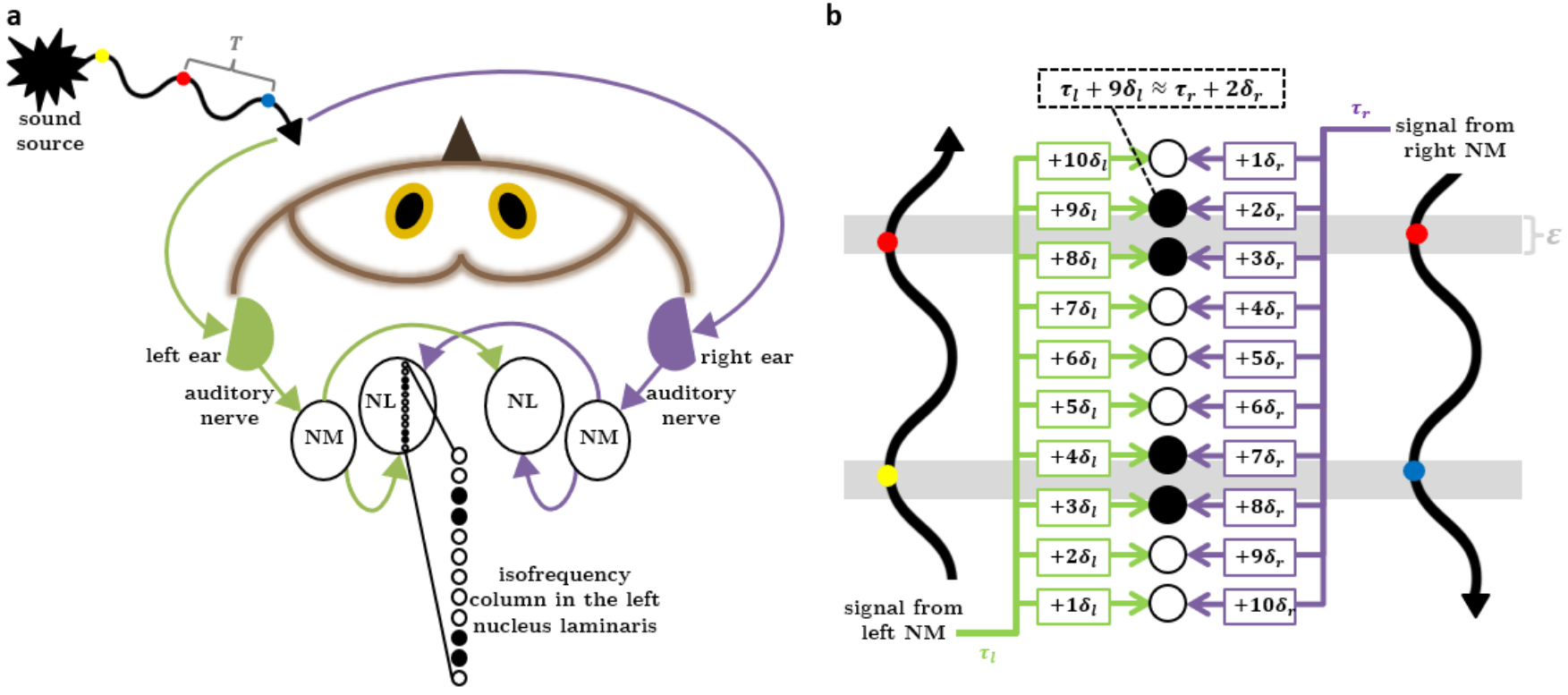}
	
	\caption{\textbf{(a)}\emph{Overview of the Barn Owl's Auditory Pathway.}  Sound waves travel to each of the owl's ears, stimulating the auditory nerve, which sends a signal to the nucleus magnocellularis (NM).  The NM then projects tonotopically to an isofrequency column in the ipsilateral (same side) nucleus laminaris (NL), entering on the dorsal side, and to an isofrequency column in the contralateral (opposite side) NL, entering on the ventral side.  \textbf{(b)}\emph{ Isofrequency Column of the Left Nucleus Laminaris.}  A sound wave, with period $T$ from a sound source closer to the left ear arrives at an isofrequency column of the left NL at a time delay of $\tau_{\ell}$ from the left (green) and at $\tau_r$ from the right (purple), with $\tau_l < \tau_r$.  Within the NL, there are different delays due to the depth the signal has traveled into the nucleus, which vary linearly with depth but at a different rate for the signals coming from each side of the brain, shown by changes by a factor of $\delta_l$ on the left and $\delta_r$ on the right.  A neuron in the column fires whenever it receives stimulation that is in phase from both sides within some error bound, $|(\tau_{\ell}+a\delta_{\ell})-(\tau_r+b\delta_r)|\mod{T} < \varepsilon$.  For example, the second neuron in the column fires if $|(\tau_{\ell}+9\delta_{\ell})-(\tau_r+2\delta_r)|\mod{T} < \varepsilon$.  Due to the difference in delay, a peak stimulating the left ear must travel deeper into the column than the same peak stimulating the right ear for the two peaks to coincide (red).  When these peaks coincide, other peaks will also coincide, an earlier point in the sound wave (blue) on the right coincides with a later point in the sound wave (yellow) on the left.}
	\label{AuditorySystem}
\end{figure}

From the comparison to cyclic codes, we see that the study of periodic codes is interesting because they share properties with some cyclic codes, but these codes are also interesting biologically because they give an abstraction of the neural firing in the barn owl's auditory system.  Barn owls use two cues to localize sounds in space, interaural intensity differences to determine the elevation of the sound source and interaural time differences to determine its azimuth.  Here, we give an overview of the interaural time difference pathway (Figure~\ref{AuditorySystem}a) and show how this pathway results in periodic firing in one of the nuclei (Figure~\ref{AuditorySystem}b). 

As shown in Figure \ref{AuditorySystem}a, the interaural time difference pathway begins in the nucleus magnocellularis (NM), which responds in a phase locked fashion to the incoming sound waves.  The NM projects onto the nucleus laminaris (NL), the first place of binaural convergence in the time difference pathway (signals from the left in green, signals from the right in purple).  As a result of the tonotopic projections from the NM, neurons in the NL are arranged tonotopically in isofrequency laminae, meaning neurons within a column fire only in response to a certain sound frequency.  It is within each of these isofrequency columns that we see periodic codes arise.

Figure \ref{AuditorySystem}b illustrates an isofrequency column of the NL. The ipsilateral signal, the signal coming from NM of the same side, enters through the dorsal surface and the contralateral signal, the signal coming from the NM of the opposite side, enters through the ventral surface of the NL.  A neuron in this column acts as a coincidence detector, firing when it receives simultaneous stimulation from both sides, analogous to the model of delay lines proposed by Jeffress for the mammalian medial superior olive.  To show periodic firing in the column, we first compute the delay to each neuron in the signals from each side of the brain.  

A sound source on the horizon travels a different distance to reach each ear and this signal must be transmitted through the auditory pathway on each side of the brain before reaching a column of the NL, giving us different time delays from each side, $\tau_{\ell}$ and $\tau_r$, before the signal reaches the NL.  Once these signals enter the NL, experiments show that the conduction delay varies linearly with depth; the ipsilateral side changes at approximately .46 degrees per micrometer ($\delta_l$), and the contralateral side changes at approximately .68 degrees per micrometer ($\delta_r$).  Thus, for a given neuron, the total delay in the signal coming from the left side is $\tau_{\ell} + a\delta_{\ell}$, and the total delay in the signal coming from the right side is $\tau_r+b\delta_r$ for integers $a$ and $b$ denoting how many neurons into the column the neuron is from the dorsal and ventral surface respectively.  A neuron receives coincident signals and fires whenever $|(\tau_{\ell} + a\delta_{\ell})-(\tau_r+b\delta_r)| <\varepsilon$ for some error bound $\varepsilon$.   

Notice that because a sound wave of a given frequency is periodic with period $T$, the signals will also be coincident for $|(\tau_{\ell} + a\delta_{\ell})-(\tau_r+b\delta_r)|\mod T <\varepsilon$, and so neurons will also fire in response to time differences that are integer multiples of the period away from the true time difference \cite{CarrAxonal}. Also observe that the time difference in the signals to each neuron in a column changes at a constant rate, $\delta_{\ell} + \delta_r$ per neuron, which implies that a neuron in the column fires every $\frac{T}{\delta_l + \delta_r}$ neurons.  This gives rise to periodicity in the column.  Behavioral experiments show that when localizing pure tones, owls may make errors in sound localization by responding to phantom targets, responding to the location of a sound at one of the multiples of the period rather than the location of the true time difference, showing the ambiguity of time difference as a sound localization cue due to the periodic nature of sound waves\cite{KonishiOwl}.

Thus, we have seen that periodic codes arise biologically, so we now ask what behavioral implications such a code has.  The periodic code in the owl's nucleus laminaris is part of the system that the owl uses to determine the position of a sound source on the horizon.  It is natural to ask how this code relates to a more highly studied position code, the \textit{place code} in the place cells of the mammalian hippocampus.  Each of these place cells fire over a convex set corresponding to the animal's position in the environment.  We consider whether the cells in the owl's nucleus laminaris can be associated to convex subsets of angles on the horizon, addressed formally in the next section.

\section{Convex closures of periodic codes}
Inspired by the periodic structure of the neural code in the nucleus laminaris of the barn owl, we explore the convexity of periodic codes, beginning by formally defining a convex code and introducing the concept of a \textit{convex closure}.  By considering the biological relevance of these concepts, we demonstrate that convexity is important to the owl's sound localization ability.  We then present our main result, Theorem \ref{subcompletionTheorem}, and conclude by proving it.

\subsection{Convex codes and the convex closure}
Here we review the concept of \textit{convex codes} and some basic results \cite{CurtoConvex, CurtoArticle} before introducing a new concept, the \textit{convex closure}.

Given an \textit{open cover} $\mathcal{U}$ of a topological space $X$, where $\mathcal{U}$ is the collection of open sets $\{U_1, \ldots, U_n\}$ such that $U_i \subset X$, we can define a \textit{code of the cover} $C(\mathcal{U})$,  
$$C(\mathcal{U}) \od \left\{\sigma \subseteq [n] \mid \bigcap_{i\in \sigma}U_i \setminus \bigcup_{j\in [n]\setminus \sigma}U_j \neq \emptyset\right\}.$$
In $C(\mathcal{U})$, each $U_i$ is called the {\it receptive field} of neuron $i$.  
We say that a  code is \textit{convex} if it can be realized as $C(\mathcal{U})$ where each of the $U_i$ is an open convex set.  As an example, see Figure~\ref{codeEx}.  Note that not every code is convex because there are geometric and topological constraints imposed by convexity \cite{CurtoConvex}.

\begin{figure}
	\centering
		\includegraphics[width=.4\textwidth]{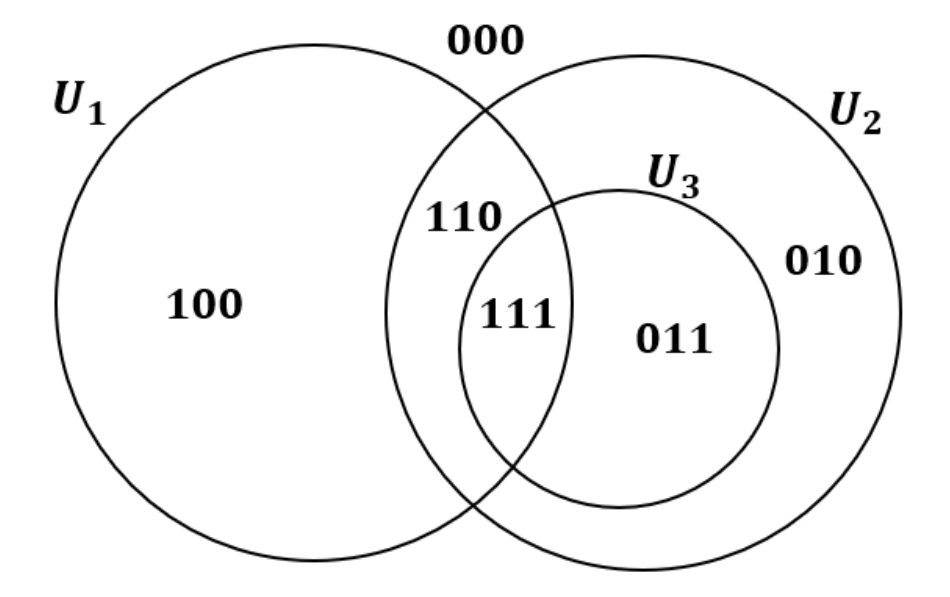}
		\caption{\footnotesize \emph{A code of a cover, $C(\mathcal{U})$}. This arrangement of receptive fields $\mathcal{U} = \{U_1, U_2, U_3\}$ corresponds to $C(\mathcal{U}) = \{100, 110, 111, 010, 011, 000\}$.  Neurons which cofire correspond to a region of intersection of their corresponding receptive fields.  Observe that $C(\mathcal{U})$ is a convex code since each $U_i$ is a convex set.}
	\label{codeEx}
\end{figure}

An important property that prevents a code from having a convex realization is based on the links of the simplicial complex of the code.  
The \textit{link} of a face $\sigma$ in a simplicial complex $\Delta$, denoted $\Lk_{\sigma}(\Delta)$, is
$$\Lk_{\sigma}(\Delta) = \{\omega \in \Delta \mid \sigma \cap \omega = \emptyset \text{ and } \sigma \cup \omega \in \Delta\}.$$
To every $\Delta$ we associate a unique \textit{minimal code} consisting of all $\sigma$ having non-contractible links:
\[C_{\min}(\Delta) = \{\sigma \in \Delta \mid \Lk_{\sigma}(\Delta) \text{ is non-contractible}\}. \]

\begin{figure}
	\centering
		\includegraphics[width=.3\textwidth]{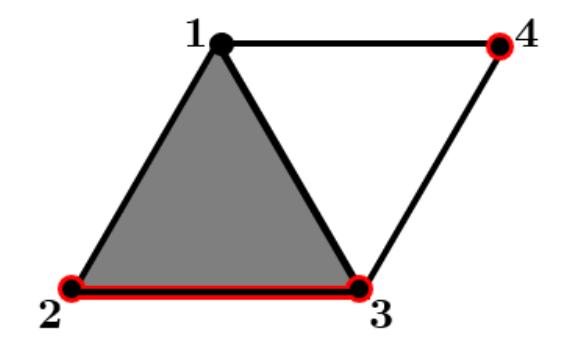}
		\caption{\footnotesize \emph{A link in a simplicial complex, $\Lk_{\{1\}}(\Delta)$}. The simplicial complex $\Delta$ is shown in black and gray.  The link of vertex 1, $\Lk_{\{1\}}(\Delta) = \{\emptyset, \{2\}, \{3\}, \{4\}, \{2,3\}\}$ is highlighted in red, and we see it is disconnected and hence non-contractible.}
	\label{linkEx}
\end{figure}

An example of these concepts for a simplicial complex $\Delta$ is shown in Figure~\ref{linkEx}, where we see $\Lk_{\{1\}}(\Delta)$ is non-contractible.  As a result, $1000 \in C_{\min}(\Delta)$.  In contrast, $\Lk_{\{2\}}(\Delta) = \{\emptyset, \{1\}, \{3\}, \{1, 3\}\}$, which is contractible, so $0100 \notin C_{\min}(\Delta)$.    

If $\tau$ is a facet of $\Delta$, then $\Lk_{\tau}(\Delta)  = \emptyset$, which is non-contractible. It follows that all facets of $\Delta$ are automatically contained in $C_{\min}(\Delta)$. In fact, these facets correspond to the maximal codewords of any code with simplicial complex $\Delta$ (see \cite{CurtoConvex} for more details).
Note that in case of periodic codes, $C_{k,m}(n)$, every codeword is maximal and corresponds to a facet of $\Delta = \Delta(C_{k,m}(n))$. Therefore, for periodic codes, we always have 
$C_{k,m}(n) \subseteq C_{\min}(\Delta).$

We call the elements of $C_{\min}(\Delta)$ \textit{mandatory codewords} because they must all be included in any convex code $C$ with simplicial complex $\Delta$. This follows from \cite[Theorem 1.3]{CurtoConvex}, with the relevant portion summarized in the lemma below.

\begin{lemma}
Let $C$ be a code with simplicial complex $\Delta$.
If $C \not \supseteq C_{\min}(\Delta)$, then $C$ is not a convex code.
\label{mandCodewords}
\end{lemma}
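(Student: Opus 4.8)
The plan is to prove the contrapositive: if $C$ is convex then $C \supseteq C_{\min}(\Delta)$. Since this lemma is essentially a restatement of the relevant half of \cite[Theorem 1.3]{CurtoConvex}, one legitimate route is simply to quote that theorem; for a self-contained argument the engine is the Nerve Lemma. So suppose $C = C(\mathcal{U})$ for an open convex cover $\mathcal{U} = \{U_1,\dots,U_n\}$ of a convex Euclidean domain, and suppose toward a contradiction that some $\sigma \in \Delta$ with $\Lk_\sigma(\Delta)$ non-contractible fails to lie in $C$.

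First I would record the standard dictionary between the code of a cover and intersections of receptive fields: for any $\omega \subseteq [n]$, one has $\omega \in \Delta(C)$ if and only if $U_\omega \od \bigcap_{i\in\omega} U_i \neq \emptyset$. Since $\sigma \in \Delta$, the set $U_\sigma$ is a nonempty convex open set, hence contractible. The hypothesis $\sigma \notin C$ says precisely that $U_\sigma \setminus \bigcup_{j\notin\sigma} U_j = \emptyset$, i.e.\ the collection $\{\,U_\sigma \cap U_j : j \notin \sigma\,\}$ covers $U_\sigma$; discarding the empty members leaves an open cover of $U_\sigma$ indexed by $\{\,j : \{j\}\in\Lk_\sigma(\Delta)\,\}$.

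The key step is to identify the nerve of this cover. For a set $\omega$ disjoint from $\sigma$ one has $\bigcap_{j\in\omega}(U_\sigma\cap U_j) = U_{\sigma\cup\omega}$, which is nonempty iff $\sigma\cup\omega\in\Delta$ iff $\omega\in\Lk_\sigma(\Delta)$; hence the nerve is exactly $\Lk_\sigma(\Delta)$. Each member of the cover, and each of its finite intersections, is an intersection of convex open sets and so is contractible when nonempty — a good cover of the (convex, in particular paracompact) space $U_\sigma$. The Nerve Lemma then yields $\Lk_\sigma(\Delta) \simeq U_\sigma$, which is contractible, contradicting the choice of $\sigma$. Therefore every face of $\Delta$ with non-contractible link lies in $C$; together with the observation that every facet of $\Delta(C)$ is automatically a codeword of $C$ (a facet $\tau$ satisfies $\tau\subseteq c\in C$, and maximality forces $\tau=c$), this gives $C \supseteq C_{\min}(\Delta)$.

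I expect the only real obstacle to be bookkeeping: verifying the good-cover hypothesis of the Nerve Lemma (nonempty intersections of convex sets are contractible, empty ones contribute nothing) and checking that the nerve identification respects the disjointness condition $\omega\cap\sigma=\emptyset$ built into $\Lk_\sigma(\Delta)$. No genuine estimate or construction is needed; all the content sits in correctly invoking the Nerve Lemma, which is precisely why deferring to \cite[Theorem 1.3]{CurtoConvex} is equally acceptable.
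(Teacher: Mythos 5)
Your proposal is correct, and it matches the paper, which does not prove this lemma independently but simply cites \cite[Theorem 1.3]{CurtoConvex}; the nerve-lemma argument you sketch is precisely the standard proof of the relevant portion of that cited result, so in effect you have supplied the argument the paper delegates to its reference. Your key steps all check out: the identification of the nerve of the cover $\{U_\sigma \cap U_j : j \notin \sigma\}$ of $U_\sigma$ with $\Lk_\sigma(\Delta)$ is exactly right (faces of the nerve correspond to $\omega$ disjoint from $\sigma$ with $U_{\sigma\cup\omega}\neq\emptyset$, i.e.\ $\omega \in \Lk_\sigma(\Delta)$), the good-cover hypothesis holds because finite intersections of convex open sets are convex, and $U_\sigma$ is an open subset of Euclidean space, hence paracompact, so the Nerve Lemma applies and forces $\Lk_\sigma(\Delta)$ to be contractible whenever $\sigma\in\Delta\setminus C$. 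Your handling of facets (maximal faces are necessarily codewords, so the empty-link convention causes no trouble) is also correct. The only point to keep explicit if you write this out in full is the degenerate face $\sigma=\emptyset$, whose link is all of $\Delta$: there the argument needs the ambient space of the realization to be contractible, which your standing assumption that the convex cover lives in a convex Euclidean domain provides. Compared with the paper's citation-only treatment, your version buys self-containedness at essentially no extra cost, since no new ideas beyond the Nerve Lemma are required.
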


A counterexample given in \cite{counterexample} illustrates that the converse is not true; a code may contain all the mandatory codewords but may still not have a convex realization.  To address this, we introduce the concept of a convex closure.

\begin{definition}
{A \textit{convex closure} $\bar{C}$ of $C$ is a convex code of smallest size such that $C \subseteq \bar{C}$ and $\Delta(\bar{C}) = \Delta(C)$.} 
\end{definition}

We note that the convex closure is a closure operator on the power set $\textbf{P}([n])$.\footnote{A closure operator, $Cl: \textbf{P}(S) \rightarrow \textbf{P}(S)$, maps the power set of $S$ to itself and for $X, Y \subseteq S$ satisfies i) $X\subseteq Cl(X)$, ii) if $X\subseteq Y$, then $Cl(X) \subseteq Cl(Y)$, and iii) $Cl(Cl(X)) = Cl(X)$.}  From Lemma \ref{mandCodewords}, it is clear that $C_{\min}(\Delta(C)) \subseteq \bar{C}$, but there are cases where $\bar{C}$ must contain additional codewords \cite{counterexample}.

\subsection{Convexity and sound localization}

As we will see in the next section, periodic codes are not generally convex, except in degenerate cases. However, because of the various advantages of convex codes in associating firing patterns with a specific region of the stimulus space, we are interested in how we may modify periodic codes to attain convexity. 

Convexity is especially relevant to the periodic codes in the nucleus laminaris (NL) of the owl as the function of this brain structure is to locate sounds on the horizon, which is equivalent to determining the convex set of angles from which the sound originated.  When receiving a pure tone, the owl makes predictable errors in its judgment of the sound's location.  The phantom targets to which the owl responds are not random but correspond to the location of a sound source where the time difference reaching the ears is the true time difference plus some multiple of the period of the sound wave (see Figure \ref{AuditorySystem}b). This suggests that the owl is able to localize a sound to a choice of several disconnected sets of angles, rather than a single convex set.  This behavior corresponds to the fact that the neural code in the NL does not have a convex realization as we will show in Theorem \ref{subcompletionTheorem}.

However, such behavioral errors are rare and are restricted to the case of single frequency tones.  When responding to wide bandwidth sounds, the owl's average error in sound localization is one third of its average error in responding to a single frequency tone \cite{Knudsen1}.  Higher in the brain stem, biologists have observed space mapped cells in the external nucleus of the inferior colliculus, which receives inputs from multiple frequency columns.  This suggests that the biological system somehow forms a convex code from the nonconvex periodic codes in the NL, a biological convex closure, so that the owl is able to locate wide bandwidth sounds within an average of 2 degrees \cite{Knudsen2}.  Such a convex code may arise by combining the code in multiple isofrequency columns of the NL or through stochasticity in neural firing. We explore these possibilities in the combinatorial neural code framework, answering the following questions.

\bigskip

\textbf{Question 1.} What codes are a convex closure of a periodic code?  

\bigskip

We answer this question in Theorem \ref{subcompletionTheorem}, showing that the convex closure is unique.  Given this convex closure, we then analyze how this convex closure could arise, asking the following question.

\bigskip

\textbf{Question 2.} How does stochastic noise in the firing patterns of a $\km$ periodic code alter the convexity of the code?

\bigskip

We explore this question in Section 4.

\subsection{The convex closure of a periodic code}

Our main result in this section is Theorem~\ref{subcompletionTheorem}, which gives the convex closure of any $\km$ periodic code for $k \leq m$.  We restrict our analysis to the case where $k\leq m$, requiring a certain degree of sparsity in the code.  Such sparse codes better reflect the codes which arise biologically and decrease the number of nontrivial intersections among neurons.  

\begin{theorem}Let $C = C_{k,m}(n)$ be a $\km$ periodic code on $n$ neurons with simplicial complex $\Delta$.  
For $k\leq m$, the convex closure of $C$ is precisely $\bar{C} = C_{\min}(\Delta)$, and is thus unique. Moreover,
\begin{enumerate}
\item $\bar{C} = C_{\min}(\Delta) = C$ if $k=0$ or $k=1$, and
\item $\bar{C} = C_{\min}(\Delta) = C_{k,m}(n) \cup C_{k-1, m+1}(n)$ if $1<k\leq m$.
\end{enumerate}
\label{subcompletionTheorem}
\end{theorem}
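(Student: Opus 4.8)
\emph{Strategy.} Write $\bar C$ for a convex closure. By Lemma~\ref{mandCodewords} applied to $\bar C$ one already has $\bar C\supseteq C_{\min}(\Delta)$; the excerpt observes $C\subseteq C_{\min}(\Delta)$ for periodic codes, while $\Delta(C_{\min}(\Delta))=\Delta$ since $C_{\min}(\Delta)$ contains every facet of $\Delta$. Hence it suffices to prove (a) the two displayed identities for $C_{\min}(\Delta)$, and (b) that $C_{\min}(\Delta)$ is itself a convex code; then $C_{\min}(\Delta)$ is a convex code containing $C$ with the same simplicial complex and of minimal size, so it is the unique convex closure. Throughout put $p=k+m$ and let $r\colon[n]\to\integers/p\integers$ be reduction mod $p$. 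Combining Definition~\ref{codeword} with Lemmas~\ref{periodicityProp} and \ref{equivChar}: a set $\sigma\subseteq[n]$ is a face of $\Delta$ exactly when its residue set $r(\sigma)$ lies in a cyclic interval of length $\le k$; the facets of $\Delta$ are the sets $r^{-1}(I)$ with $I$ a cyclic interval of length exactly $k$ (the supports of the codewords of $C=C_{k,m}(n)$); and, since $(k-1)+(m+1)=p$, the supports of the codewords of $C_{k-1,m+1}(n)$ are the sets $r^{-1}(J)$ with $J$ a cyclic interval of length exactly $k-1$.

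\emph{Step 1: computing $C_{\min}(\Delta)$.} Every facet lies in $C_{\min}(\Delta)$ by the convention on empty links, already accounting for $C_{k,m}(n)$. For a nonempty face $\sigma$ let $[a,b]$ be the minimal cyclic interval containing $r(\sigma)$, of length $\ell\le k$. If $\sigma$ omits some vertex $v$ with $r(v)\in[a,b]$ — which covers both the case that $\sigma$ misses a vertex whose residue already occurs in $r(\sigma)$ and the case that $r(\sigma)$ has an interior gap — then every length-$k$ cyclic interval containing $r(\sigma)$ contains $[a,b]$ and hence $r(v)$, so $v$ is a cone point of $\Lk_\sigma(\Delta)$, the link is contractible, and $\sigma\notin C_{\min}(\Delta)$. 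The only remaining nonempty faces are the full preimages $\sigma=r^{-1}([a,b])$; for these one computes $\omega\in\Lk_\sigma(\Delta)$ iff $r(\omega)$ misses $[a,b]$ and $r(\omega)\subseteq[a-i,a-1]\cup[b+1,b+j]$ for some $i,j\ge0$ with $i+j\le t:=k-\ell$, so $\Lk_\sigma(\Delta)$ is the preimage under $r$ of a ``staircase'' complex $K_t$ on $\integers/p\integers\setminus[a,b]$, and $k\le m$ forces the arms $[a-t,a-1]$ and $[b+1,b+t]$ to be disjoint, so $K_t$ does not wrap around. If $t=0$ the face is a facet; if $t=1$ then $K_1=\{\emptyset,\{a-1\},\{b+1\}\}$ with $a-1\not\equiv b+1\pmod p$, so $\Lk_\sigma(\Delta)$ is a disjoint union of two nonempty simplices, hence disconnected, putting each $r^{-1}(J)$ with $|J|=k-1$ — i.e.\ each codeword of $C_{k-1,m+1}(n)$ — into $C_{\min}(\Delta)$; and if $t\ge2$ then $K_t$ is collapsible (peel its maximal faces off one end through free faces), and pulling back along $r$ (replacing a vertex by a simplex is a simple-homotopy expansion) keeps it contractible, so $\sigma\notin C_{\min}(\Delta)$. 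Finally $\Lk_\emptyset(\Delta)=\Delta$ is noncontractible, so the empty codeword lies in $C_{\min}(\Delta)$; this matches the stated formulas under the usual convention that the empty codeword is present in every code. Altogether $C_{\min}(\Delta)=C_{k,m}(n)\cup C_{k-1,m+1}(n)$, and when $k\in\{0,1\}$ the second term contributes only the empty codeword, so $C_{\min}(\Delta)=C$.

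\emph{Step 2: convex realization of $C_{\min}(\Delta)$.} By Proposition~\ref{simplexStructure} neurons in a common residue class are interchangeable, so any realization must assign them a single receptive field; thus it is enough to realize the quotient code on $p$ neurons — whose codewords are the cyclic intervals of $\integers/p\integers$ of length $k$ or $k-1$ together with $\emptyset$ — and then give every neuron in residue class $\rho$ the corresponding field $\tilde W_\rho$. I would build the $\tilde W_\rho$ as an arrangement of $p$ convex bodies in $\reals^2$ invariant under the $p$-fold rotation, placed so that $\tilde W_\rho\cap\tilde W_{\rho'}\neq\emptyset$ exactly when the cyclic distance of $\rho,\rho'$ is at most $k-1$, so that the only nonempty atoms are the cyclic intervals of length $k$ and $k-1$ (no shorter or ``gappy'' atoms), and so that the unbounded complement of $\bigcup_\rho\tilde W_\rho$ supplies the empty codeword; for $k=2$ this is done concretely by slightly fattening the edges of a regular $p$-gon, and in general $k\le m$ is precisely what allows the required overlap of any $k$ consecutive fields to span less than a half-turn, which is what makes a consistent such arrangement possible. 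Checking that the resulting code is exactly $C_{\min}(\Delta)$ then finishes the proof.

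\emph{Main obstacle.} Two points carry the weight. The first is the $t\ge2$ case of Step 1: showing the staircase $K_t$ (and hence $\Lk_\sigma(\Delta)$) is contractible rather than, say, homotopy equivalent to a sphere — the collapse and the vertex blow-up must be handled carefully. The second, and I expect the harder, is Step 2: engineering a convex arrangement realizing $C_{\min}(\Delta)$ \emph{without} creating any of the short or gappy codewords that lie in $\Delta$ but not in $C_{\min}(\Delta)$ — naive circular models (disks, or circular segments) fail, because moving radially inward produces atoms of every intermediate length, so the geometry has to be arranged more cleverly, and the bound $k\le m$ is essential to its existence.
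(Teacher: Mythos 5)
Your overall strategy is sound, and your Step 1 is a legitimate (in fact stronger) alternative to what the paper does: you compute $C_{\min}(\Delta)$ outright, including showing that the ``gappy'' faces and the faces with $t=k-\ell\ge 2$ have contractible links. The paper never needs those computations; it only proves the one inclusion $C_{k,m}(n)\cup C_{k-1,m+1}(n)\subseteq C_{\min}(\Delta)$ (Proposition~\ref{subcodeMandatory}, via Lemma~\ref{subbandLink} for $n=k+m$ and the lifting argument of Proposition~\ref{linkLemma}), then shows that this smaller code is convex and lets the sandwich $C\subseteq C_{k,m}(n)\cup C_{k-1,m+1}(n)\subseteq C_{\min}(\Delta)\subseteq\bar C$ force all the equalities. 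Your route buys an explicit identification of every non-mandatory face, at the price of the collapsibility claim for the staircase complex $K_t$, which you assert rather than prove; that step is believable (e.g.\ by a nerve argument on the chain of maximal simplices) but it is real work, and it is work the paper's sandwich argument renders unnecessary.

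The genuine gap is Step 2. You never construct the convex realization: you list the properties the arrangement must have, give a concrete recipe only for $k=2$, and explicitly defer the general case as the ``main obstacle.'' Without that construction the theorem is not proved, since everything hinges on $C_{k,m}(n)\cup C_{k-1,m+1}(n)$ actually being convex. Moreover, your diagnosis of why circular models fail is mistaken and is what blocks you. The paper's construction (Propositions~\ref{circConvex} and \ref{convexReal}) assigns to neuron $i$ the open arc of angular width $k\cdot\frac{2\pi}{k+m}$ starting at angle $i\cdot\frac{2\pi}{k+m}$, and then replaces each arc by the open \emph{sector} $\{(r\cos\theta,r\sin\theta)\mid 0<r<1,\ a_i<\theta<b_i\}$. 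Unlike disks or circular segments, a sector's angular extent is the same at every radius, so moving radially inward does not change which sets a point belongs to and no intermediate atoms appear: the code of the sectors equals the code of the arcs, which is exactly $C_{k,m}(k+m)\cup C_{k-1,m+1}(k+m)$ (generic angles give the weight-$k$ words, the $k+m$ boundary angles give the weight-$(k-1)$ words). The hypothesis $k\le m$ enters only to make each sector's angle at most $\pi$, hence convex — the same ``less than a half-turn'' condition you identified. The extension from $k+m$ neurons to $n$ neurons by duplicating receptive fields within a residue class (your quotient-code reduction) is exactly the paper's Lemma~\ref{convexityDependence}. So the missing construction is considerably easier than you feared; as written, though, your proof is incomplete at precisely this point.
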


As an example, consider $C_{2,3}(5)$, which contains the codewords 11000 and 01100.  The codeword 11000 implies that $U_1 \cap U_2 \neq \emptyset$.  Similarly, the codeword 01100 implies that $U_2 \cap U_3 \neq \emptyset$.  There are no codewords for which neuron 1 and neuron 3 cofire, so $U_1 \cap U_3 = \emptyset$.  However, the codeword 01000 is not in the code, so $U_2$ is entirely contained in $U_1 \cup U_3$.  Since $U_1$ and $U_3$ are disjoint, this can only be true if $U_2$ is disconnected, and hence not convex.  As illustrated in Figure \ref{ConvexTheorems}, the code $\bar{C} = C_{2,3}(5) \cup C_{1,4}(5)$ has a convex realization and $\Delta(\bar{C}) = \Delta(C)$.

\begin{figure}
	\centering
		\includegraphics[width=.9\textwidth]{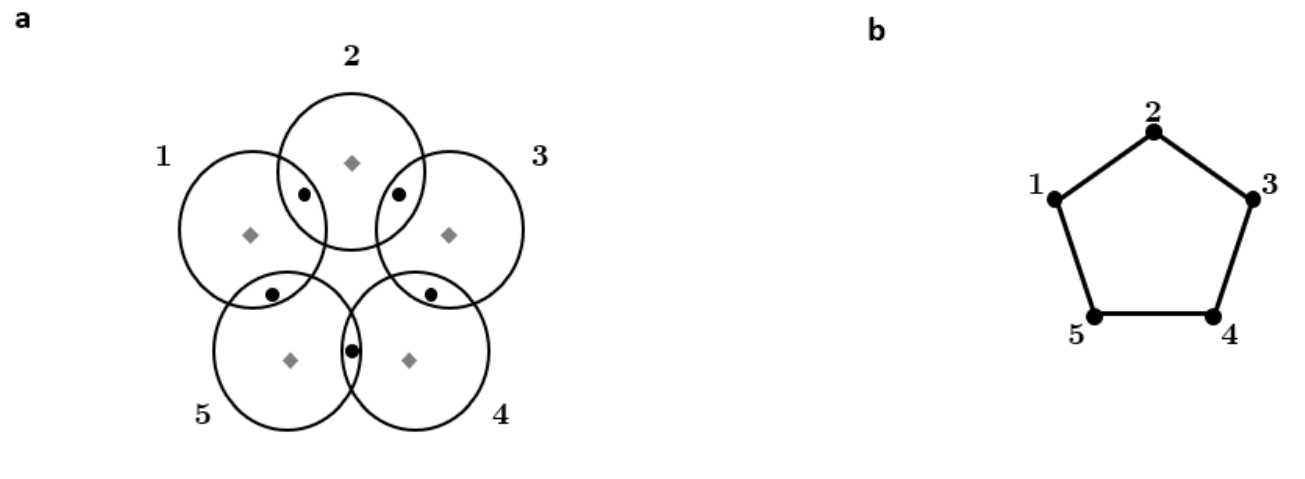}
		\caption{\footnotesize \emph{Convex Closure of $C_{2,3}(5)$}.  \textbf{(a)} The convex closure of $C_{2,3}(5)$ is $\bar{C} = C_{2,3}(5) \cup C_{1,4}(5)$ (Theorem \ref{subcompletionTheorem}), which has a convex realization as shown here.  The original code, $C_{2,3}(5)$ (black dots) has a convex realization with the union of $C_{1,4}(5)$ (gray diamonds).  \textbf{(b)}  The convex closure preserves the simplicial complex, $\Delta(\bar{C}) = \Delta(C_{2,3}(5))$, shown here.}
	\label{ConvexTheorems}
\end{figure}

This result is particularly interesting because it gives an example of a class of codes where $\bar{C} = C_{\min}(\Delta(C))$.  In general, we do not always have $C \subset C_{\min}(\Delta(C))$, but we do in the case of periodic codes because every codeword corresponds to a facet of the simplicial complex.  What we also see in these codes is that convexity is fully determined by containing $C_{\min}(\Delta)$, which is not always true \cite{counterexample}.  This raises the question of whether there are special properties of $\Delta(C_{k,m}(n))$ which can be used to detect more generally when $C_{\min}(\Delta)$ is convex for some $\Delta$.

To prove Theorem~\ref{subcompletionTheorem}, we need the following two propositions:

\begin{proposition}
Let $\Delta = \Delta(C_{k,m}(n))$ and $1<k\leq m$.  Then $C_{k,m}(n) \cup C_{k-1, m+1}(n) \subseteq C_{\min}(\Delta)$.
\label{subcodeMandatory}
\end{proposition}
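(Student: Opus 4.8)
Since $C_{k,m}(n)$ consists entirely of facets of $\Delta$ — as already noted just before Lemma~\ref{mandCodewords} — every codeword of $C_{k,m}(n)$ has empty link and hence lies in $C_{\min}(\Delta)$. So the content of the proposition is to show that $\sigma:=\supp(\tau)\in C_{\min}(\Delta)$ for every $\tau\in C_{k-1,m+1}(n)$. The plan is, for a fixed such $\tau$, to identify \emph{all} facets of $\Delta$ that contain $\sigma$ and then read $\Lk_\sigma(\Delta)$ off from them.

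The key structural point is that $k+m=(k-1)+(m+1)$, so both codes have the same period $k+m$; by Lemma~\ref{periodicityProp} the support of any codeword of either code is the periodic extension to $[n]$ of a cyclically consecutive window of residues mod $k+m$ — a window of size $k$ for codewords of $C_{k,m}(n)$ and of size $k-1$ for codewords of $C_{k-1,m+1}(n)$ — and since $n\geq k+m$ every residue is actually attained. Writing $W\subseteq\integers/(k+m)$ for the window attached to $\tau$, a facet $c\in C_{k,m}(n)$ satisfies $\sigma\subseteq\supp(c)$ if and only if its size-$k$ window contains $W$; a size-$k$ arc contains a given size-$(k-1)$ arc in exactly two ways, namely extend $W$ by one step to the left or by one step to the right, and (using $k<k+m$, which holds since $m\geq k>1$) the two resulting arcs $W^-$ and $W^+$ are distinct and each proper. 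Let $c^-,c^+\in C_{k,m}(n)$ be the two corresponding facets; in particular $\sigma\in\Delta$. Put $A=\supp(c^-)\setminus\sigma$ and $B=\supp(c^+)\setminus\sigma$. Then $A$ and $B$ are the periodic extensions of the single residues $W^-\setminus W$ and $W^+\setminus W$; both are nonempty because $n\geq k+m$, and $A\cap B=\emptyset$ because those two residues differ by $k\not\equiv 0\pmod{k+m}$.

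Finally I would compute the link directly: $\omega\in\Lk_\sigma(\Delta)$ iff $\omega\cap\sigma=\emptyset$ and $\sigma\cup\omega$ is contained in some codeword, which must be $c^-$ or $c^+$; hence $\omega\in\Lk_\sigma(\Delta)$ iff $\omega\subseteq A$ or $\omega\subseteq B$. Thus $\Lk_\sigma(\Delta)$ is the union of the full simplex on $A$ and the full simplex on $B$, two nonempty simplices with disjoint vertex sets, so its geometric realization is disconnected and therefore not contractible, giving $\sigma\in C_{\min}(\Delta)$. The step I expect to require the most care is the bookkeeping in the middle paragraph: making ``support $=$ periodic extension of a cyclic window'' fully rigorous, including the cases where $\tau$'s first or last band is truncated, or where $W$ abuts position $1$ or $k+m$ so that $W^-$ or $W^+$ wraps around. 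I would avoid any case analysis by working throughout with residues mod $k+m$ and invoking Lemma~\ref{periodicityProp} (equivalently Proposition~\ref{simplexStructure}), so that end-truncation and wrap-around never need separate treatment; everything else is routine.
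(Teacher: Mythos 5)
Your argument is correct, and it takes a genuinely more direct route than the paper. The paper first proves the special case $n=k+m$ (Lemma~\ref{subbandLink}, where the link of a $(k-1)$-band is two isolated vertices) and then transports non-contractibility to general $n$ one vertex at a time, via Proposition~\ref{simplexStructure}, the cone/homotopy-type analysis of Proposition~\ref{linkLemma}, and the external Lemma~\ref{CurtoLink} from \cite{CurtoConvex}. You instead compute $\Lk_\sigma(\Delta)$ in $\Delta(C_{k,m}(n))$ exactly, for all $n$ at once: since both codes have period $k+m$, supports are periodic extensions of cyclic windows of residues, a size-$k$ arc contains the size-$(k-1)$ arc $W$ in precisely two ways (here $1<k$ guarantees $W\neq\emptyset$ and $m\geq 1$ gives $k\not\equiv 0 \bmod (k+m)$, so $W^-\neq W^+$), hence exactly two codewords $c^\pm$ of $C_{k,m}(n)$ contain $\sigma$, and the link is the union of the full simplices on the disjoint nonempty sets $A$ and $B$ --- disconnected, hence non-contractible. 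This is an explicit strengthening of Lemma~\ref{subbandLink} to arbitrary $n$ and avoids both the restriction-and-extension induction and the homotopy-equivalence bookkeeping; what the paper's route buys in exchange is the reusable general machinery (Propositions~\ref{simplexStructure} and~\ref{linkLemma}) about how links behave when periodically equivalent vertices are appended, which is stated for arbitrary $\sigma\subset\tau$ rather than only for the $(k-1)$-bands needed here. The one place your write-up defers detail --- making ``support $=$ periodic extension of a window'' rigorous, including truncated end-bands and wrap-around --- is handled exactly as you indicate by Lemma~\ref{equivChar}(4)/Lemma~\ref{periodicityProp} together with $n\geq k+m$ (so every residue class is attained), so there is no gap.
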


\begin{proposition}
The code $C_{k,m}(n) \cup C_{k-1, m+1}(n)$ is convex for $1<k \leq m$.
\label{convexRealization}
\end{proposition}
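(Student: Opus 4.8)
The plan is to write down an explicit open convex cover. First I would use periodicity to collapse the problem to a ``circular'' one on $p:=k+m$ receptive fields. By Lemma~\ref{periodicityProp} (equivalently part~4 of Lemma~\ref{equivChar}) a codeword of $C_{k,m}(n)$ is determined by its first $k+m$ bits, so, since $\min(C_{k,m}(n))$ wait---more precisely, by Proposition~\ref{size} the codewords of $C_{k,m}(n)$ are in bijection with the cyclic intervals of length $k$ in $\integers/p$, and those of $C_{k-1,m+1}(n)$ with the cyclic intervals of length $k-1$ (note $(k-1)+(m+1)=k+m$, so both families share the period $p$). Hence it suffices to produce convex open sets $V_0,\dots,V_{p-1}$ realizing the cyclic code
\[
\mathcal{C}\;:=\;\{\,\sigma\subseteq\integers/p:\ \sigma\text{ is a cyclic interval of length }k\text{ or }k-1\,\}\cup\{\emptyset\};
\]
then $U_j:=V_{(j-1)\bmod p}$ for $j\in[n]$ gives a convex cover, and a bookkeeping check (the atoms of the $V$-cover lift bijectively since $n\ge p$) shows $C(\mathcal{U})=C_{k,m}(n)\cup C_{k-1,m+1}(n)$, up to the empty codeword, which is included or not according to the choice of ambient space. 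When $k=2$ this $\mathcal{C}$ is just the vertices and edges of the cycle $C_p$, realized by the familiar ``pinwheel'' of long thin convex sets of Figure~\ref{ConvexTheorems}; the content is the general $k$.

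For general $k$ I would realize $\mathcal{C}$ by circular sectors in $\reals^2$. Fix a half-angle $\alpha$ with $(k-1)\tfrac{\pi}{k+m}<\alpha<k\tfrac{\pi}{k+m}$ and set, for $i\in\integers/p$,
\[
V_i=\bigl\{\,r(\cos\phi,\sin\phi):\ 0<r<1,\ |\phi-\tfrac{2\pi i}{p}|<\alpha\ (\mathrm{mod}\ 2\pi)\,\bigr\}.
\]
Because $k\le m$ we have $2\alpha<k\tfrac{2\pi}{k+m}\le\pi$, so each $V_i$ is a genuinely convex (sub-semicircular) sector. The point of using sectors rather than, say, disks is that a sector is covered by its two angular neighbours once $\alpha$ is large enough, whereas disks always ``poke out'' radially and would force spurious singleton codewords; the inequalities on $\alpha$ are tuned precisely so that overlaps of these sectors close up at depth exactly $k$.

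The verification is then a direct angular computation, which I would organize as follows. For $\sigma\subseteq\integers/p$ one checks that $\bigcap_{t\in\sigma}V_t$ is nonempty iff $\sigma$ fits inside an open arc of width $2\alpha$, which happens iff $\sigma$ is contained in a cyclic interval of length $\le k$ --- i.e.\ the nerve is $\Delta(C_{k,m}(n))$ reduced mod $p$. If $\sigma$ is contained in such an interval but is not itself a cyclic interval, a missing index $g$ strictly between two elements of $\sigma$ has $\bigcap_{t\in\sigma}V_t\subseteq V_g$, so $\sigma$ is not a codeword. If $\sigma$ is a cyclic interval of length $\ell\le k$, then $\bigcap_{t\in\sigma}V_t$ is an open sub-sector of angular width $2\alpha-(\ell-1)\tfrac{2\pi}{p}$, and deleting the only two other sectors that can meet it (the neighbours $V_{i-1}$ and $V_{i+\ell}$; the rest are angularly subsumed) leaves an open sub-sector of angular width $2\alpha-(k-1)\tfrac{2\pi}{p}$ when $\ell=k$ and $(\ell+1)\tfrac{2\pi}{p}-2\alpha$ when $\ell\le k-1$; the former is always positive, the latter is positive iff $\ell=k-1$. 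Thus $C(\{V_i\})=\mathcal{C}$ (taking the ambient space to be the open disc realizes $\emptyset$ at the punctured centre). The step that needs genuine care --- and the only place the hypothesis $k\le m$ is used, both to keep the sectors convex and to make the neighbour-covering inequality hold --- is exactly this vanishing of private regions for $\ell\le k-2$: it is the geometric shadow of the non-convexity of the bare periodic code, and is what the extra codewords of $C_{k-1,m+1}(n)$ are there to repair.
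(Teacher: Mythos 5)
Your proposal is correct and follows essentially the same route as the paper: reduce to the period $k+m$ by reusing one convex set per residue class (the paper's Lemma~\ref{convexityDependence}), then realize the circular code by convex angular sectors whose width stays below $\pi$ precisely because $k\leq m$ (the paper's Propositions~\ref{circConvex} and \ref{convexReal}). The only cosmetic difference is your choice of sector width strictly between $(k-1)\tfrac{2\pi}{k+m}$ and $k\tfrac{2\pi}{k+m}$, so the $C_{k-1,m+1}$ codewords appear on small open arcs, whereas the paper uses open arcs of width exactly $k\tfrac{2\pi}{k+m}$ and picks up those codewords at the lattice angles; both verifications go through.
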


Given these two propositions, which we will prove in the next subsection, we can now prove Theorem~\ref{subcompletionTheorem}.

\begin{proof}[Proof of Theorem~\ref{subcompletionTheorem}]
Let $C = C_{k,m}(n)$ for $k\leq m$ and $\Delta = \Delta(C)$.  By definition, $\bar{C} \supseteq C$, $\Delta(\bar{C}) = \Delta$, and $\bar{C}$ is convex. By Lemma~\ref{mandCodewords}, we also have $\bar{C} \supseteq C_{\min}(\Delta)$.  Recall that since every codeword in $C$ corresponds to a facet of $\Delta$, we also have $C \subseteq C_{\min}(\Delta)$ and thus,
$$C \subseteq C_{\min}(\Delta) \subseteq \bar{C}.$$ 
To show that $\bar{C} = C_{\min}(\Delta)$, it thus suffices to show that  $C_{\min}(\Delta)$ is convex.  

In the cases $k=0$ and $k=1$, we can see directly that $C$ is convex, and thus $\bar{C} = C_{\min}(\Delta) = C$. For $k=0$, $C$ consists of only the all-zeros codeword, $00\cdots0$, and is thus trivially convex. For $k=1$, all codewords in $C$ are disjoint and, since they all correspond to facets of $\Delta$, we see that all facets of $\Delta$ are disjoint. It then follows from \cite[Proposition 2.6]{CurtoConvex} that $C$ is convex.

For the remaining cases, $1 < k \leq m$, we have that $C_{k,m}(n) \cup C_{k-1, m+1}(n) \subseteq C_{\min}(\Delta)$ (Proposition~\ref{subcodeMandatory}), and thus
$$C \subseteq  C_{k,m}(n) \cup C_{k-1, m+1}(n) \subseteq C_{\min}(\Delta) \subseteq \bar{C}.$$
We also have that $C_{k,m}(n) \cup C_{k-1, m+1}(n)$ is convex (Proposition~\ref{convexRealization}), which immediately implies $\bar{C} = C_{\min}(\Delta) = C_{k,m}(n) \cup C_{k-1, m+1}(n)$, as desired.
\end{proof}

\subsubsection*{Biological relevance}
Theorem~\ref{subcompletionTheorem} implies periodic codes are not convex except in the cases where $k=0$ and $k=1$, corresponding respectively to a state of constant inactivity or perfect precision. Both extremes are unlikely given the inherent stochasticity in this system (see Section 4 for more details).  The second part of the theorem, for $k>1$, can provide insight into the behavioral errors made by the owl in locating a single frequency tone: because the neural code for a single isofrequency column of the nucleus laminaris (NL) is not convex, the owl cannot determine a unique open region in space from which the sound must have originated.

Yet, the owl does not always make these behavioral errors, suggesting that the biology of the system provides a way for the owl to disambiguate the possible locations of the sound source, thus yielding a convex neural code. Theorem~\ref{subcompletionTheorem} shows that a convex closure of a periodic code is formed from the union of two periodic codes.  In particular, these two codes differ by whether a single neuron is firing in each firing band; the close relationship between these codes suggests stochasticity in the system could result in the convex closure, which we address in greater depth in Section 4.  

In addition to the possible stochastic relationship between the two codes, the two periodic codes may also be related by the connections among nuclei in the barn owl's brain stem.  The NL projects to the central nucleus of the inferior colliculus (IC), which sends inputs from multiple isofrequency laminae to the external nucleus of the IC \cite{WagnerOwl}.  Recall that each isofrequency column $i$ of the NL with frequency $\frac{1}{T_i}$ fires every $\frac{T_i}{\delta_l+\delta_r}$ neurons (Figure~\ref{AuditorySystem}b), which corresponds to a different $m$ value in the periodic code of each column.  Thus, receiving input from multiple columns is analogous to receiving input from different periodic codes, as is needed for the convex closure of a single column; in fact, it has been observed that the first space mapped cells exist in the external nucleus of the IC, where the signals for multiple frequencies first converge \cite{WagnerOwl}.

\subsection{Proof of Propositions~\ref{subcodeMandatory} and \ref{convexRealization}}

\subsubsection*{Proof of Proposition~\ref{subcodeMandatory}}
As has already been noted, since every codeword in $C_{k,m}(n)$ corresponds to a facet, $C_{k,m}(n) \subseteq C_{\min}(\Delta)$.  To prove Proposition~\ref{subcodeMandatory}, we will show that for any $\tau \in C_{k-1, m+1}(n)$, the link $\Lk_{\tau}(\Delta)$ is non-contractible, and so $\tau \in C_{\min}(\Delta)$. We first consider the special case of $\km$ periodic codes on $k+m$ neurons (Lemma~\ref{subbandLink}) and then extend this result to a code on $n$ neurons using Proposition~\ref{linkLemma}.

We introduce the notation $\sigma_{i,j}(n)$ for $i,j \in [n]$ to be the face of a simplicial complex on a set of consecutive vertices, where we consider the $n$th and first vertex to be adjacent.  More formally, we define  

$$\sigma_{i,j}(n) = \begin{cases}
\{\ell \mid i \leq \ell \leq j\} & \text{if } i\leq j\\
\{\ell \mid i \leq \ell \leq n\} \cup \{\ell \mid 1\leq \ell \leq j\} & \text{if } i>j
\end{cases}
$$
  
Note that a general simplicial complex may not contain such a face, but we are interested specifically in $\Delta(C_{k,m}(n))$, which contains $\sigma_{i,j}(n)$ whenever $|\sigma_{i,j}(n)| \leq k$.  In the case $n=k+m$, the collection of all $\sigma_{i,j}(k+m)$ with $|\sigma_{i,j}(k+m)| = x$ for $x \leq k$ corresponds to $C_{x, k+m-x}(k+m)$.

\begin{lemma}
Let $\Delta = \Delta(C_{k,m}(k+m))$ and $m \neq 0$.  If $|\sigma_{i,j}(k+m)| = k-1$, then $\Lk_{\sigma_{i,j}(k+m)}(\Delta)$ is non-contractible.
\label{subbandLink}
\end{lemma}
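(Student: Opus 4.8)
The plan is to describe the link $\Lk_{\sigma_{i,j}(k+m)}(\Delta)$ explicitly as a simplicial complex on the remaining vertices and show it deformation retracts onto something with nontrivial homology (or is itself not contractible for an evident reason, such as being disconnected or a sphere). Set $\tau = \sigma_{i,j}(k+m)$, a band of $k-1$ consecutive vertices (cyclically) in $\Delta = \Delta(C_{k,m}(k+m))$. By definition of the link, $\omega \in \Lk_\tau(\Delta)$ iff $\omega \cap \tau = \emptyset$ and $\omega \cup \tau$ is a face of $\Delta$, i.e. $\omega \cup \tau$ is contained in some cyclic permutation of $s_{k,m}$ — equivalently, $\omega \cup \tau$ is a set of at most $k$ consecutive vertices in the cyclic order on $[k+m]$. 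Since $\tau$ already occupies $k-1$ consecutive positions, a vertex $v \notin \tau$ can be added to $\tau$ (keeping it ``consecutive of size $\le k$'') only if $v$ is adjacent to the band $\tau$ on one of its two ends; call these two vertices $a$ (left end) and $b$ (right end). One checks that $a \ne b$ precisely because $m \ge 1$, so there genuinely are two distinct such vertices, and moreover $\{a,b\}$ is \emph{not} a face of $\Delta$ (it would require $k+1$ consecutive vertices $\{a\}\cup\tau\cup\{b\}$ to lie in a codeword, impossible since codewords have runs of exactly $k$ ones). Hence $\Lk_\tau(\Delta) = \{\emptyset, \{a\}, \{b\}\}$: two isolated points.

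The second step is then immediate: a two-point discrete space is disconnected, hence not contractible, which is exactly the claim. I would state this cleanly: $\Lk_\tau(\Delta) \cong S^0$, whose reduced homology $\widetilde H_0$ is nonzero, so the link is non-contractible.

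The main obstacle — really the only place care is needed — is the explicit identification of which vertices lie in the link, i.e. verifying the two bulleted claims above: (i) the only faces of $\Delta$ strictly containing $\tau$ as a proper subface and meeting $\tau$-complement are the two singleton extensions $\{a\}, \{b\}$, and (ii) $a$ and $b$ are distinct and non-adjacent in $\Lk_\tau(\Delta)$ (i.e. $\{a,b\} \notin \Delta$). Both follow from the combinatorial description of $\Delta(C_{k,m}(k+m))$ as the complex whose faces are the cyclically-consecutive subsets of $[k+m]$ of size $\le k$ (this is exactly the remark preceding the lemma, that size-$x$ faces $\sigma_{i,j}(k+m)$ correspond to $C_{x,k+m-x}(k+m)$), together with the hypothesis $k \le m$ which guarantees $k - 1 + 2 = k+1 \le m + 1 \le k + m$ so that $a$, $b$, and $\tau$ do not ``wrap around'' and collide. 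One subtlety to flag explicitly is the degenerate sub-case $k = 1$, where $\tau = \emptyset$; then $\Lk_\emptyset(\Delta) = \Delta$ itself, which for $k=1$ is the disjoint union of $k+m = 1+m \ge 2$ vertices, again disconnected, so the conclusion still holds. (For $k=0$ the hypothesis of the lemma is vacuous since $|\sigma_{i,j}| = -1$ is impossible, and $m \ne 0$ is assumed throughout.) After handling these cases the proof is complete.
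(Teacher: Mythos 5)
Your proof is correct and matches the paper's argument: both identify the link of the $(k-1)$-band as exactly the two vertices flanking it, observe they are distinct because $m\neq 0$ and non-adjacent in the link, and conclude non-contractibility from disconnectedness. Your write-up is just a more detailed version of the paper's (including the harmless extra care about the degenerate $k=1$ case and the verification that $\{a,b\}$ is not an edge of the link, which the paper leaves implicit).
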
 

\begin{proof}
Let $\sigma = \sigma_{i,j}(k+m)$ where $j = (i+k-2)\mod{(k+m)}$.  We have $\Lk_{\sigma}(\Delta) = \{h, \ell\}$ where $h=i-1\mod{(k+m)}$ and $\ell = i+k-1\mod{(k+m)}$.  These vertices are distinct since they are distance $k$ apart and $m\neq 0$, so the set is disconnected and non-contractible.
\end{proof}

This lemma allows us to show that $C_{k,m}(n)$ restricted to the first $k+m$ vertices has non-contractible link for the faces corresponding to $C_{k-1, m+1}(k+m)$.  We want to show that every $\tau$ corresponding to a codeword $C_{k-1, m+1}(n)$ has non-contractible link in $\Delta(C_{k,m}(n))$.  To do so, we use the following lemma, which requires the notation for the restricted simplicial complex,
$$\Delta\vert_{\sigma\cup\tau} = \{ s \in \Delta \mid s \subseteq \sigma \cup \tau \}.$$

\begin{lemma}
\cite[Corollary~4.3]{CurtoConvex} Suppose $v \notin \sigma$ and $\sigma \cap \tau = \emptyset$.  If $\Lk_\sigma(\Delta\vert_{\sigma\cup\tau})$ is non-contractible, then $\Lk_\sigma(\Delta\vert_{\sigma\cup\tau\cup v})$, $\Lk_{\sigma \cup v}(\Delta\vert_{\sigma\cup\tau\cup v})$, or both are non-contractible. 
\label{CurtoLink}
\end{lemma}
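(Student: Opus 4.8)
The plan is to strip the statement down to a standard fact about vertex deletions and links in a single simplicial complex, and then invoke the homotopy invariance of collapsing contractible subcomplexes. Set $L \od \Lk_\sigma(\Delta|_{\sigma\cup\tau\cup v})$; since every face of a link of $\sigma$ is disjoint from $\sigma$, all vertices of $L$ lie in $\tau\cup\{v\}$, and a face $\omega\subseteq\tau\cup\{v\}$ lies in $L$ precisely when $\omega\cup\sigma\in\Delta$. Unwinding the definitions, together with the fact that $\Delta$ is closed under subsets, I would record three identifications: $\Lk_\sigma(\Delta|_{\sigma\cup\tau\cup v}) = L$; $\Lk_\sigma(\Delta|_{\sigma\cup\tau}) = L\setminus v$ (the subcomplex of faces of $L$ not containing $v$); and $\Lk_{\sigma\cup v}(\Delta|_{\sigma\cup\tau\cup v}) = \Lk_v(L)$. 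With these in hand, the lemma is exactly the contrapositive of the purely topological assertion: if both $L$ and $\Lk_v(L)$ are contractible, then $L\setminus v$ is contractible.

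For that assertion I would use the decomposition $L = (L\setminus v)\cup\overline{\mathrm{st}}_L(v)$, where the closed star $\overline{\mathrm{st}}_L(v) = \{v\}*\Lk_v(L)$ is a cone on $\Lk_v(L)$ (hence contractible) and $(L\setminus v)\cap\overline{\mathrm{st}}_L(v) = \Lk_v(L)$. Because inclusions of subcomplexes are cofibrations, collapsing the contractible subcomplex $\overline{\mathrm{st}}_L(v)$ gives a homotopy equivalence $L \simeq L/\overline{\mathrm{st}}_L(v) = (L\setminus v)/\Lk_v(L)$, and since $\Lk_v(L)$ is itself contractible, $(L\setminus v)/\Lk_v(L) \simeq L\setminus v$. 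Chaining these, $L\setminus v \simeq L$, which is contractible by hypothesis. Taking the contrapositive and translating back through the three identifications yields the lemma.

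A few degenerate cases must be cleared first. If $v$ is not a vertex of $L$, or $\Lk_v(L)$ is empty or the irrelevant complex $\{\emptyset\}$, then $\Lk_v(L)$ is non-contractible under the convention used throughout this paper and the second alternative of the lemma holds outright; likewise if $L$ is empty. Only after these are dispatched is the cone/collapse argument applied to honestly nonempty complexes. The one piece of genuine content is the fact that if $A\hookrightarrow X$ is a cofibration and $A$ is contractible then $X\to X/A$ is a homotopy equivalence, which I would cite rather than reprove; everything else is bookkeeping. I therefore expect the main obstacle to be purely expository — stating the deletion/link/star identifications cleanly and keeping the induced-subcomplex notation $\Delta|_S$ and the "empty link is non-contractible" convention consistent with the rest of the paper. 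Since the statement is verbatim \cite[Corollary~4.3]{CurtoConvex}, the most economical route here is simply to cite it, with the argument above as the underlying justification.
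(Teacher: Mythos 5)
Your proposal is correct, and in fact the paper offers no argument of its own here: Lemma~\ref{CurtoLink} is quoted verbatim from \cite[Corollary~4.3]{CurtoConvex} and used as a black box, so your closing suggestion to cite it is exactly what the authors do. Your standalone justification is sound and self-contained: the three identifications $\Lk_\sigma(\Delta\vert_{\sigma\cup\tau\cup v})=L$, $\Lk_\sigma(\Delta\vert_{\sigma\cup\tau})=L\setminus v$, and $\Lk_{\sigma\cup v}(\Delta\vert_{\sigma\cup\tau\cup v})=\Lk_v(L)$ all check out directly from the definitions, and the contrapositive reduction to ``$L$ and $\Lk_v(L)$ contractible imply $L\setminus v$ contractible'' via the decomposition $L=(L\setminus v)\cup\bigl(\{v\}*\Lk_v(L)\bigr)$ with intersection $\Lk_v(L)$, followed by collapsing the contractible closed star and then the contractible $\Lk_v(L)$, is the standard and correct argument. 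You are also right to quarantine the degenerate cases first ($v$ not a vertex of $L$, or $\Lk_v(L)$ void, or $L$ void), since the quotient-by-a-contractible-subcomplex step genuinely needs $\Lk_v(L)$ nonempty, and the paper's convention that an empty link is non-contractible makes the conclusion immediate there (in the isolated-vertex case the first alternative holds as well, since $L$ is then disconnected or equal to $L\setminus v$). The only thing your write-up adds beyond the paper is a proof the paper chose not to give, which is a reasonable bonus but not required.
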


One way to demonstrate contractibility is to show that a simplicial complex is a cone.  We will use the fact that a simplicial complex is a cone if and only if the intersection of all the facets of the simplicial complex is nontrivial in the proof of the following proposition.

\begin{proposition}
Let $\Delta = \Delta(C_{k,m}(n))$.  Let $\tau = \{v_i \mid 1 \leq i \leq k+m\}$, $\sigma \subset \tau$, and $j>k+m$.  Suppose $\Lk_\sigma(\Delta\vert_\tau)$ is non-contractible:
\begin{enumerate}
\item If $j\equiv i \mod{(k+m)}$ for some $i$ such that $v_i\in \sigma$, then $\Lk_{\sigma \cup v_j}(\Delta \vert_{\tau \cup v_j})$ is non-contractible.
\item Otherwise, $\Lk_{\sigma}(\Delta \vert_{\tau \cup v_j})$ is non-contractible.
\end{enumerate}
\label{linkLemma}
\end{proposition}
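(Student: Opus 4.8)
The plan is to analyze how the link of $\sigma$ changes when we pass from the restriction $\Delta|_\tau$ to the larger restriction $\Delta|_{\tau \cup v_j}$, where $\tau$ consists of the first $k+m$ vertices. The key structural fact I will use repeatedly is Proposition~\ref{simplexStructure}: since $j \equiv i \mod (k+m)$ for a unique $i \in [k+m]$, the vertex $v_j$ behaves in $\Delta$ exactly like $v_i$ — for any face $\omega$ not containing $v_i$, we have $v_i \cup \omega \in \Delta \iff v_j \cup \omega \in \Delta$, and moreover $v_i \cup v_j \cup \omega \in \Delta$ whenever either holds. So adding $v_j$ to the picture amounts to ``doubling'' the vertex $v_i$.

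First, I would set up Case~(2): suppose $v_i \notin \sigma$, i.e., the residue class of $j$ does not meet $\sigma$. I claim $\Lk_\sigma(\Delta|_{\tau \cup v_j}) = \Lk_\sigma(\Delta|_\tau) \cup (\text{cone over } v_j \text{ on the star of } v_i \text{ within that link})$; more precisely, a face $\omega$ in the new link either avoids $v_j$ (and then lies in the old link) or contains $v_j$, in which case $\omega \setminus v_j$ lies in $\Lk_\sigma(\Delta|_\tau)$ and, by the doubling property, $v_i \cup (\omega\setminus v_j) \cup \sigma \in \Delta$ too, so $v_i$ is in the old link together with $\omega\setminus v_j$. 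The upshot is that the new link is obtained from the old link $L := \Lk_\sigma(\Delta|_\tau)$ by adding a new vertex $v_j$ whose neighborhood is $\overline{\mathrm{st}}_L(v_i)$, the closed star of $v_i$ in $L$ — a standard operation sometimes called an ``anticollapse'' or a vertex doubling that does not change homotopy type. (If $v_i \notin L$ at all, then $v_j$ is added as an isolated vertex only if $\sigma \in \Delta$ with no further structure — but since $\sigma \subseteq \tau$ and $|\sigma| \le k$, actually $v_i$ is always adjacent to something; in any case one checks directly that doubling preserves non-contractibility). I would then invoke the elementary fact that $L \cup_{\overline{\mathrm{st}}_L(v_i)} (v_j * \overline{\mathrm{st}}_L(v_i))$ deformation retracts onto $L$, hence is non-contractible since $L$ is.

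For Case~(1), $v_i \in \sigma$: here adding $v_j$ does not enlarge $\Lk_\sigma$ in a useful way (faces of $\Lk_\sigma$ are disjoint from $\sigma \ni v_i$, and $v_j$ plays the role of $v_i$, so $v_j \cup \omega \cup \sigma \in \Delta$ would force the two ``copies'' to coexist — which is fine — but $v_j$ is then essentially a duplicate of a vertex already in $\sigma$). Instead I pass to $\sigma \cup v_j$. By the doubling property, $\Lk_{\sigma \cup v_j}(\Delta|_{\tau\cup v_j})$ should equal $\Lk_\sigma(\Delta|_\tau)$ exactly: a face $\omega$ disjoint from $\sigma \cup v_j$ with $\sigma \cup v_j \cup \omega \in \Delta$ satisfies (removing $v_j$, which is redundant with $v_i \in \sigma$) $\sigma \cup \omega \in \Delta$; conversely if $\sigma \cup \omega \in \Delta$ then since $v_i \in \sigma$ we get $v_j \cup \sigma \cup \omega \in \Delta$ by Proposition~\ref{simplexStructure}. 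So the link is literally unchanged and remains non-contractible. I would write this direction out carefully since it is the cleaner of the two.

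The main obstacle I anticipate is making the Case~(2) homotopy argument rigorous and self-contained: one must verify that the neighborhood of the new vertex $v_j$ in the enlarged link really is the full closed star of $v_i$ (this uses both directions of Proposition~\ref{simplexStructure}, and the fact that faces of links that contain $v_j$ correspond bijectively to faces of the old link containing $v_i$), and then cite or prove that gluing a cone over a closed star along that star is a homotopy equivalence — equivalently, that $v_j$ can be collapsed across the free-ish pair structure. A clean way to phrase the latter is: the subcomplex $\{v_j\} * \overline{\mathrm{st}}_L(v_i)$ is a cone, it meets the rest along $\overline{\mathrm{st}}_L(v_i)$ which is also a cone (with apex $v_i$), and gluing two cones along a common subcone is a homotopy equivalence onto the non-cone piece; so the inclusion $L \hookrightarrow \Lk_\sigma(\Delta|_{\tau\cup v_j})$ is a homotopy equivalence. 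I would present this as a short lemma to keep the proof of the proposition itself brief, then conclude both cases in a couple of lines each.
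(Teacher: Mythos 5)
Your proof is correct, and it differs from the paper's in an interesting way on case (1). There, the paper does not identify the link directly: it invokes Lemma~\ref{CurtoLink} (the dichotomy that at least one of $\Lk_\sigma(\Delta\vert_{\tau\cup v_j})$, $\Lk_{\sigma\cup v_j}(\Delta\vert_{\tau\cup v_j})$ is non-contractible) and then shows $\Lk_\sigma(\Delta\vert_{\tau\cup v_j})$ is a cone on $v_j$, hence contractible, forcing the other link to be non-contractible. Your argument is more direct and in fact stronger: using closure under subsets and Proposition~\ref{simplexStructure} you show $\Lk_{\sigma\cup v_j}(\Delta\vert_{\tau\cup v_j})$ is literally equal to $\Lk_\sigma(\Delta\vert_\tau)$, so no auxiliary dichotomy is needed. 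For case (2) your argument is essentially the paper's: the paper sets $\Lambda = Cl(\{\rho\in\Lk_\sigma(\Delta\vert_\tau)\mid v_\ell\in\rho\})$ (your closed star), shows the new link is $\Lk_\sigma(\Delta\vert_\tau)\cup\cone_{v_j}(\Lambda)$, and concludes the homotopy type is unchanged because $\Lambda$ is a cone — exactly your ``glue a cone over the closed star along the closed star'' lemma. The only loose point is your parenthetical about the degenerate subcase: if the vertex $v_\ell$ in the residue class of $j$ does not lie in $\Lk_\sigma(\Delta\vert_\tau)$, then by Proposition~\ref{simplexStructure} no face containing $v_j$ enters the new link at all (it is not ``added as an isolated vertex''), so the link is simply unchanged; the paper treats this as a separate, explicit subcase, and you should state it that way rather than appeal to an adjacency claim, though this is a presentational slip rather than a gap.
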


\begin{proof}
1.  Assume $j \equiv i \mod{(k+m)}$ for some $i$ such that $v_i\in \sigma$.  To show that $\Lk_{\sigma \cup v_j}(\Delta \vert_{\tau \cup v_j})$ is non-contractible, it suffices to show that $\Lk_\sigma(\Delta\vert_{\tau \cup v_j})$ is contractible (Lemma \ref{CurtoLink}).  By Proposition \ref{simplexStructure}, for any $\omega \subset \tau$, if $v_i \cup \omega \in \Delta$, then $v_i\cup v_j \cup \omega \in \Delta$.  Thus, every facet of $\Lk_{\sigma}(\Delta\vert_{\tau\cup v_j})$ contains $v_j$.  It follows that $\Lk_{\sigma}(\Delta \vert_{\tau \cup v_j})$ is a cone and hence contractible.  2.  Assume $j\not\equiv i \mod{(k+m)}$ for any $i$ such that $v_i \in \sigma$.  Then $j \equiv \ell \mod{(k+m)}$ for some $\ell$ such that $v_\ell \in \tau \setminus \sigma$.  We have two cases.  Case 1: Assume that $v_\ell \notin \Lk_\sigma(\Delta\vert_\tau)$.  This implies $v_\ell \cup \sigma \notin \Delta \vert_\tau$, and so $v_\ell\cup \sigma \notin \Delta \vert_{\tau\cup v_j}$; by Proposition \ref{simplexStructure}, $v_j \cup \sigma \notin \Delta\vert_{\tau \cup v_j}$.  This implies $v_j \notin \Lk_\sigma(\Delta\vert_{\tau \cup v_j})$, and we have $\Lk_\sigma(\Delta \vert_{\tau\cup v_j}) = \Lk_\sigma(\Delta \vert_\tau)$, which is non-contractible.  Case 2: Assume that $v_\ell \in \Lk_\sigma(\Delta\vert_\tau)$.  Let $\Lambda$ be the closure of the set $\{\rho \in \Lk_\sigma(\Delta\vert_\tau) \mid v_\ell \subseteq \rho\}$ -- that is, the smallest simplicial complex containing the faces of $\Lk_\sigma(\Delta \vert_\tau)$ that contain $v_\ell$.  By Proposition \ref{simplexStructure}, for any $\omega \in \Delta \vert_{\tau \cup v_j}$, if $v_\ell \cup \omega \in \Delta\vert_{\tau \cup v_j}$ then $v_j \cup v_\ell \cup \omega \in \Delta\vert_{\tau \cup v_j}$.  This implies that $\Lk_\sigma(\Delta\vert_{\tau\cup v_j}) = \Lk_\sigma(\Delta \vert_\tau) \cup \cone_{v_j}(\Lambda)$, where $\cone_{v_j}(\Lambda) = \{\gamma \cup v_j \mid \gamma \in \Lambda\}$.  Observe that $\Lambda$ itself is a cone since all the facets contain $v_\ell$, and hence we have coned off a contractible subcomplex of $\Lk_\sigma(\Delta \vert_\tau)$.  This implies that the homotopy type of $\Lk_\sigma(\Delta\vert_{\tau\cup v_j})$ is the same as the homotopy type of $\Lk_\sigma(\Delta \vert_\tau)$, which is non-contractible.
\end{proof}

We are now able to extend the results of Lemma \ref{subbandLink} to a code on $n$ neurons, completing the proof of Proposition~\ref{subcodeMandatory}.

\begin{proof}[Proof of Proposition~\ref{subcodeMandatory}]
By Lemma \ref{subbandLink}, the link of every $\sigma_{i,j}(k+m)$ with $|\sigma_{i,j}(k+m)| = k-1$ is non-contractible in $\Delta(C_{k,m}(k+m))$.  Observe that the codewords corresponding to $\sigma_{i,j}(k+m)$ of dimension $k-1$ are the codewords of $C_{k-1, m+1}(n)$ restricted to the first $k+m$ neurons.  Each codeword in $C_{k-1,m+1}(n)$ is the set of neurons which are equivalent modulo $k+m$ to some $\sigma_{i,j}(k+m)$.  Let $\omega$ be a face corresponding to a codeword in $C_{k-1, m+1}(n)$ with corresponding $\sigma' = \sigma_{i,j}(k+m)$.  By Proposition \ref{linkLemma}, since $\Lk_{\sigma'}(\Delta(C_{k,m}(n)))$ is non-contractible, $\Lk_\omega(\Delta(C_{k,m}(n)))$ is non-contractible.  Thus, the link of the face corresponding to every codeword in $C_{k-1, m+1}(n)$ is non-contractible, so $C_{k-1, m+1}(n) \subset C_{\min}(\Delta)$. 
\end{proof}

\subsubsection*{Proof of Proposition~\ref{convexRealization}}
To prove Proposition~\ref{convexRealization}, we need to show $C_{k,m}(n) \cup C_{k-1, m+1}(n)$ has a convex realization.  We first explicitly construct a convex realization for $n = k+m$ (Proposition~\ref{convexReal}) and then show that this realization can be extended to a code on $n$ neurons (Lemma~\ref{convexityDependence}).

We first begin with a construction which will help us construct a convex realization of $C_{k,m}(k+m) \cup C_{k-1, m+1}(k+m)$.

\begin{definition}
A \textit{circular cover}, $\Gamma$, is a collection of open arcs $\{\gamma_i\}$ with $\gamma_i = \gamma(a, b)$ for $-2\pi \leq a < b\leq 2\pi$ and $b-a\leq 2\pi$, where $\gamma(a, b) \od \{(\cos{\theta}, \sin{\theta})\mid a < \theta < b\}$.
\end{definition}

Note that this cover is circular in the sense that it is composed of circular arcs, but the definition does not require that the union of the arcs covers $S^1$.  Analogously to the way we defined the code of a cover, we can define the code of a circular cover.

\begin{definition}The \textit{code of a circular cover} $\Gamma = \{\gamma_i\}$ is the neural code 
$$C(\Gamma) \od \left\{\sigma \subseteq [n] \mid \bigcap_{i\in \sigma} \gamma_i \setminus \cup_{j \in [n] \setminus \sigma} \gamma_j \neq \emptyset\right\}.$$ 
\end{definition}

\begin{figure}
	\centering
		\includegraphics[width = .9\textwidth]{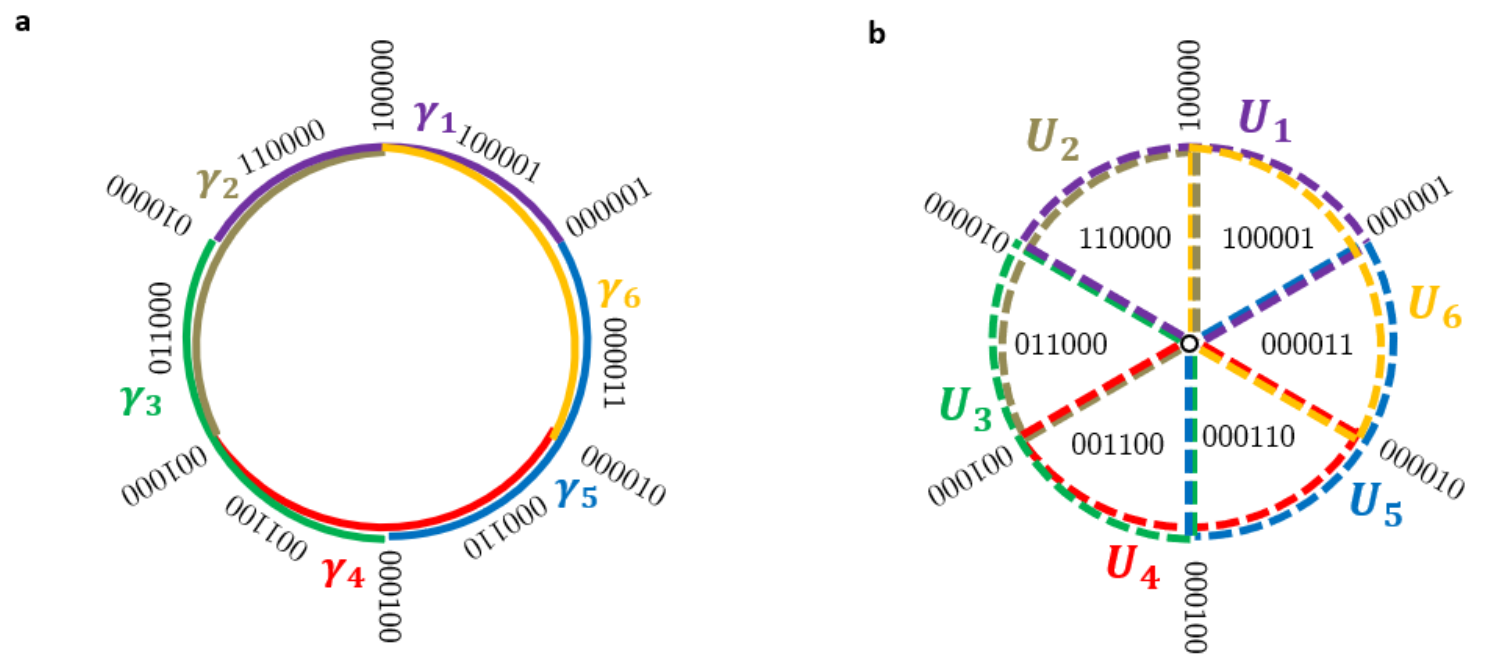}
		\caption{\footnotesize \textit{Circular and Convex Realizations of $\bar{C} = C_{2,4}(6) \cup C_{1,5}(6)$}.  \textbf{(a)} A circular realization of $\bar{C}$.  Each colored open arc, $\gamma_i$ is an element of $\Gamma$ and can be identified with an open sector $U_i$ of the same color in $C(\mathcal{U})$. \textbf{(b)} A convex realization of $\bar{C}$, derived from the circular realization.}
	\label{CircRealization}
\end{figure}

An example of the code of a circular cover is shown in Figure~\ref{CircRealization}.  Observe that any point $p$ written in polar coordinates as $(r, \theta)$ with $r=1$ corresponds to a codeword $c_1 \cdots c_n \in C(\Gamma)$ where $c_i = 1$ if $(\cos{\theta},\sin{\theta}) \in \gamma_i$ and $c_i = 0$ otherwise.  To prove when such circular realizations are convex, we introduce the \textit{support} of a codeword $c$, denoted $\supp(c)$ as the set $\{i\mid c_i = 1\}$.

\begin{proposition} \label{circConvex}
If $\len(\gamma_i)\leq \pi$ for all $\gamma_i \in \Gamma$, then $C(\Gamma)$ has a convex realization.
\end{proposition}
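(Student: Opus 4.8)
The plan is to convert the circular cover $\Gamma$ into a genuine convex cover $\mathcal U$ in the plane by a geometric transformation that does not change the combinatorics of intersections, and then observe that the resulting open sets are convex precisely because each arc has length at most $\pi$. Concretely, for each arc $\gamma_i = \gamma(a_i,b_i)$ with $b_i - a_i \le \pi$, I would replace $\gamma_i$ by the open circular sector
$$U_i \od \{(r\cos\theta, r\sin\theta) \mid 0 < r < 1,\ a_i < \theta < b_i\},$$
i.e. the ``pie slice'' obtained by coning the arc toward (but not including) the origin. One then checks: (i) a point $(\cos\theta,\sin\theta)$ on the unit circle lies in $\gamma_i$ if and only if the radial segment through it (minus endpoints) lies in $U_i$, so that for any $\sigma$ the region $\bigcap_{i\in\sigma}\gamma_i \setminus \bigcup_{j\notin\sigma}\gamma_j$ is nonempty on the circle exactly when $\bigcap_{i\in\sigma}U_i \setminus \bigcup_{j\notin\sigma}U_j$ is nonempty in the disk; hence $C(\mathcal U) = C(\Gamma)$; and (ii) each $U_i$ is convex.

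For step (ii), the key geometric fact is that an open circular sector of angular width $\beta$ is convex if and only if $\beta \le \pi$: given two points in the sector, the chord between them stays within the angular wedge (since the wedge of angle $\le \pi$ is itself a convex cone) and stays inside the unit disk (since the disk is convex and both endpoints are in it), so the chord lies in the intersection, which is exactly $U_i$. If $\beta > \pi$ the wedge is not convex and the argument fails — this is precisely why the hypothesis $\len(\gamma_i) \le \pi$ is needed. I would note the mild edge cases: a degenerate arc of length $0$ gives an empty $U_i$ (vacuously convex, and contributes nothing to the code), and a full arc of length $2\pi$ is excluded by the hypothesis but in any case would just be the punctured disk; since we only care about $\len \le \pi$ these do not arise.

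The main subtlety — and the step I would be most careful about — is verifying $C(\mathcal U) = C(\Gamma)$ rather than merely $C(\mathcal U) \supseteq C(\Gamma)$ or some inclusion. The ``$\supseteq$'' direction (a witness point on the circle gives witness points on the open radius) is immediate; for ``$\subseteq$'' one must argue that any witness point $p = (r\cos\theta,r\sin\theta) \in \bigcap_{i\in\sigma}U_i\setminus\bigcup_{j\notin\sigma}U_j$ with $0<r<1$ can be pushed radially out to $(\cos\theta,\sin\theta)$ without changing which sets contain it, which holds because membership in each $U_i$ depends only on the angle $\theta$, not on $r$. Thus the radial ray $\{(\rho\cos\theta,\rho\sin\theta)\mid 0<\rho<1\}$ is entirely inside $U_i$ or entirely outside it, and in particular the atom pattern is constant along the ray and agrees with the pattern of the corresponding circle point under $\Gamma$. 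Once this is pinned down, the proposition follows, and it is exactly the tool needed to pass from the circular realization of $C_{k,m}(k+m)\cup C_{k-1,m+1}(k+m)$ (to be constructed in Proposition \ref{convexReal}) to an honest convex realization, as illustrated in Figure \ref{CircRealization}(b).
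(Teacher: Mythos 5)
Your proposal is correct and follows essentially the same route as the paper: replace each arc $\gamma(a_i,b_i)$ by the open sector $\{(r\cos\theta,r\sin\theta)\mid 0<r<1,\ a_i<\theta<b_i\}$, use $\len(\gamma_i)\le\pi$ to get convexity of each sector, and check $C(\mathcal U)=C(\Gamma)$ via the radial correspondence between circle points and sector points. Your extra care in noting that membership in $U_i$ depends only on the angle (so atom patterns are constant along radial rays) is a slightly more explicit version of the paper's two-way point-mapping argument, but it is the same proof.
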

\begin{proof}
Let $\Gamma$ be a circular cover.  Define an open cover $\mathcal{U} = \{U_i\}$ in $\reals^2$ by the following method.  For each open arc $\gamma_i = \gamma(a_i, b_i) \in \Gamma$, let $U_i$ be the open sector $\{(r\cos{\theta}, r\sin{\theta}) \mid 0<r<1 $ and $a_i<\theta<b_i\}$.  Since $b_i - a_i = \len(\gamma_i) \leq \pi$, each $U_i$ is convex.  Observe that the origin is not contained in any $U_i$.  We want to show that $C(\mathcal{U}) = C(\Gamma)$ and thus $C(\Gamma)$ has a convex realization.  Suppose $c \in C(\mathcal{U})$.  There exists some point $p \in \cap_{i\in \supp(c)}U_i$, which can be written as $(r\cos{\theta},r \sin{\theta})$.  Map $p$ to the corresponding point $p_{\theta} = (\cos{\theta}, \sin{\theta})$ on the unit circle.  We have $p_{\theta} \in \cap_{i\in \supp(c)} \gamma_i$, and thus $c \in C(\Gamma)$ also.  This implies $C(\mathcal{U}) \subseteq C(\Gamma)$.  To see that $C(\Gamma) \subseteq C(\mathcal{U})$, let $c \in C(\Gamma)$.  There exists some point $p_{\theta} \in \cap_{i\in \supp(c)}\gamma_i$.  Write $p_{\theta}$ as $(\cos{\theta}, \sin{\theta})$.  Map $p_{\theta}$ to the point $p = (\frac{1}{2} \cos{\theta}, \frac{1}{2}\sin{\theta})$.  We have $p \in \cap_{i \in \supp(c)}U_i$, and thus $c \in C(\mathcal{U})$.  This gives $C(\Gamma) \subseteq C(\mathcal{U})$, and so $C(\Gamma) = C(U)$. 
\end{proof}

Figure~\ref{CircRealization} shows an example of $C(U) = C(\Gamma)$ and the correspondence between the open arcs and open sectors.  We can now show that the code $C_{k,m}(k+m) \cup C_{k-1, m+1}(k+m)$ is convex by showing that it arises from a circular cover and applying Proposition~\ref{circConvex}.

\begin{proposition} \label{convexReal}
The code $C_{k,m}(k+m) \cup C_{k-1, m+1}(k+m)$ has a convex realization for $k \leq m$.
\end{proposition}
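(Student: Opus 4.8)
The plan is to exhibit an explicit circular cover $\Gamma$ whose code equals precisely $C_{k,m}(k+m) \cup C_{k-1,m+1}(k+m)$ and in which every arc has length at most $\pi$, and then to invoke Proposition~\ref{circConvex}. Write $N = k+m$ and mark $N$ equally spaced reference angles $\theta_j = \frac{2\pi j}{N}$, $j \in \integers/N$, on the unit circle. Fix any real number $\lambda$ with $k-1 < \lambda < k$ (for instance $\lambda = k - \tfrac12$) and set $L = \frac{2\pi}{N}\lambda$. For each neuron $i \in [N]$ define the arc $\gamma_i = \gamma\!\left(\frac{2\pi(i-1)}{N},\ \frac{2\pi(i-1)}{N} + L\right)$, so that $\Gamma = \{\gamma_1, \dots, \gamma_N\}$ is a ring of $N$ congruent arcs whose left endpoints are exactly the reference angles. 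Since $\lambda < k \le m$ forces $2\lambda < k+m = N$, we get $\len(\gamma_i) = L = \frac{2\pi\lambda}{N} < \pi$ for every $i$ --- which is precisely the hypothesis needed in Proposition~\ref{circConvex}, and is the only place the assumption $k \le m$ enters.

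The heart of the argument is the claim that $C(\Gamma) = C_{k,m}(N) \cup C_{k-1,m+1}(N)$. To verify it, sweep a point $p_\theta = (\cos\theta, \sin\theta)$ around the circle and track the support of the codeword it realizes; writing $\theta = \frac{2\pi}{N}t$ with $t \in [0,N)$, neuron $i$ is active at $p_\theta$ exactly when $t \in (i-1,\ i-1+\lambda) \pmod{N}$. The left endpoints of the arcs are the integers $t = 0, 1, \dots, N-1$, i.e.\ the reference angles, while the right endpoints are the numbers $t = j + \lambda$, each interior to the sector between two consecutive reference angles. Hence every sector $(s, s+1)$ contains exactly one arc endpoint, namely the right endpoint of $\gamma_{s-k+2}$ at $t^\ast = (s-k+1) + \lambda$; a direct check shows the active set is the consecutive run of $k$ neurons $\{s-k+2, \dots, s+1\}$ (indices mod $N$) for $t \in (s, t^\ast)$ and the consecutive run of $k-1$ neurons $\{s-k+3,\dots,s+1\}$ for $t \in (t^\ast, s+1)$, with the boundary point $t = s$ realizing the same length-$(k-1)$ run as the atom just to its left, so it contributes nothing new. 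Letting $s$ range over $\integers/N$, the length-$k$ runs that appear are exactly the cyclic translates $\sigma_{i,i+k-1}(N)$, that is the codewords of $C_{k,m}(N)$, and the length-$(k-1)$ runs are the $\sigma_{i,i+k-2}(N)$, that is the codewords of $C_{k-1,m+1}(N)$ --- using the identification of the size-$x$ faces $\sigma_{i,j}(k+m)$ with $C_{x,k+m-x}(k+m)$ noted earlier. Moreover, when $k \ge 2$ each arc is longer than one sector ($\lambda > k-1 \ge 1$), so $\bigcup_i \gamma_i$ is the whole circle and the empty codeword never occurs. This establishes the claim.

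Combining the two points: $C_{k,m}(k+m) \cup C_{k-1,m+1}(k+m) = C(\Gamma)$ and every arc of $\Gamma$ has length $< \pi$, so Proposition~\ref{circConvex} produces a convex realization of $C(\Gamma)$, and hence of $C_{k,m}(k+m) \cup C_{k-1,m+1}(k+m)$, as required. (The extreme case $k = 1$ is degenerate and immediate: then $\lambda \in (0,1)$, the arcs are too short to cover the circle, the length-$k$ runs are singletons and the length-$(k-1)$ runs are empty, so $C(\Gamma) = C_{1,m}(1+m) \cup C_{0,m+1}(1+m)$, still a convex code, being a union of disjoint intervals together with the all-zeros codeword.)

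I expect the main obstacle to be the combinatorial bookkeeping in the second paragraph: the arc length $L$ must be tuned to lie strictly between ``$k-1$ sectors'' and ``$k$ sectors'' so that the length-$(k-1)$ runs appear as genuine open atoms rather than isolated boundary points, while at the same time one must check that no spurious codewords --- runs of other lengths, or the empty word --- slip into $C(\Gamma)$. Once the endpoint combinatorics are organized sector by sector as above, the remaining verifications are routine.
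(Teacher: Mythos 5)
Your proof is correct and follows essentially the same route as the paper: both build a circular cover by a ring of $k+m$ equally spaced congruent arcs and then invoke Proposition~\ref{circConvex}, with the $k$-runs appearing generically and the $(k-1)$-runs at the arc endpoints. The only difference is cosmetic --- the paper takes arcs of length exactly $k$ sectors, so the $C_{k-1,m+1}$ codewords occur precisely at the reference angles, whereas you shrink the arcs to length $\lambda\in(k-1,k)$ sectors so those codewords occur on open atoms; both verifications go through.
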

\begin{proof}
We define an open arc $\gamma_i$ for each neuron $i$ in the following manner.  If $i\leq m$, let $\gamma_i = \gamma(i \frac{2\pi}{k+m}, (i+k) \frac{2\pi}{k+m})$.  If $i>m$, let $\gamma_i = \gamma(i \frac{2\pi}{k+m} - 2\pi, (i+k) \frac{2\pi}{k+m}-2\pi)$.  This collection of receptive fields gives a circular cover, $\Gamma$, with $\len(\gamma_i) = k\frac{2\pi}{k+m} \leq \pi$ for all $i$.  By Proposition \ref{circConvex}, $C(\Gamma)$ is convex.  We need only show that $C(\Gamma) = C_{k,m}(k+m) \cup C_{k-1, m+1}(k+m)$.  At every angle which is not a multiple of $\frac{2\pi}{k+m}$, exactly $k$ of the $\gamma_i$ intersect, corresponding to codewords in $C_{k,m}(k+m)$.  At angles that are a multiple of $\frac{2\pi}{k+m}$, exactly $k-1$ of the $\gamma_i$ intersect, corresponding to codewords in $C_{k-1, m+1}(k+m)$.  Thus, $C(\Gamma) =C_{k,m}(n) \cup C_{k-1, m+1}(k+m)$.
\end{proof}

We want to extend this convex realization on $k+m$ neurons to a code on $n$ neurons.

\begin{lemma}{Let $C$ be a code where for every codeword $c_1\cdots c_n \in C$, if $i \equiv j \mod{z}$, then $c_i = c_j$.  The code $C$ is convex if and only if $C$ restricted to any $z$ consecutive neurons is convex.}
\label{convexityDependence}
\end{lemma}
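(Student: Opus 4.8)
The plan is to prove both directions by relating receptive fields for the full code $C$ on $n$ neurons to receptive fields for a "folded" version on $z$ consecutive neurons, exploiting the hypothesis that coordinates congruent mod $z$ are forced to agree. Fix any block $B = \{a+1, a+2, \dots, a+z\}$ of $z$ consecutive neurons (indices read cyclically or just within $[n]$ as needed), and let $C|_B$ denote the restriction. Because $c_i = c_j$ whenever $i \equiv j \pmod z$, the restriction map $C \to C|_B$ is a bijection: a codeword of $C$ is completely determined by its values on any $z$ consecutive positions. So the content is purely about realizability, not about which codewords appear.

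For the easy direction ($C$ convex $\Rightarrow$ $C|_B$ convex), I would invoke the fact that any restriction of a convex code to a subset of neurons is again convex: if $\mathcal{U} = \{U_1,\dots,U_n\}$ is a convex realization of $C$, then $\{U_i\}_{i \in B}$ is a convex realization of $C|_B$. This is standard (restriction just deletes the forgotten atoms' labels and merges regions), and it holds for any $C$, not using the periodicity hypothesis at all.

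For the main direction ($C|_B$ convex $\Rightarrow$ $C$ convex), I would start from a convex realization $\{V_1,\dots,V_z\}$ of $C|_B$ in some $\reals^d$, indexed so that $V_\ell$ corresponds to the $\ell$-th neuron of the block. Now build a realization $\{U_1,\dots,U_n\}$ of $C$ by setting $U_i = V_{\ell}$ where $\ell \in \{1,\dots,z\}$ is the unique index with $i \equiv a+\ell \pmod z$ — that is, every neuron in a fixed residue class mod $z$ gets the \emph{same} convex open set. Each $U_i$ is convex by construction. The claim is that $C(\mathcal{U}) = C$. To check this: for a point $p$, the codeword it realizes in $\mathcal{U}$ has $i$-th bit equal to $\mathbf{1}[p \in U_i] = \mathbf{1}[p \in V_{\ell(i)}]$, which depends only on the residue of $i$ mod $z$; hence every codeword produced by $\mathcal{U}$ satisfies the congruence constraint, and its restriction to $B$ is exactly the codeword $p$ realizes in $\{V_\ell\}$, i.e. lies in $C|_B$. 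Conversely every codeword of $C|_B$ is realized by some $p$, and its unique extension to a congruence-respecting length-$n$ word is the codeword $p$ gives in $\mathcal{U}$. Since the congruence-respecting extensions of $C|_B$ are precisely $C$ (by the bijection above), we get $C(\mathcal{U}) = C$, so $C$ is convex.

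I don't expect a serious obstacle here; the one point that needs a clean statement is the "restriction of a convex code is convex" fact used in the forward direction — I'd either cite it from \cite{CurtoConvex} or give the one-line argument that deleting neurons from a convex realization yields a convex realization of the restricted code. The reindexing in the reverse direction is purely bookkeeping (matching residue classes mod $z$ to positions $1,\dots,z$ within the chosen block), and the verification $C(\mathcal{U}) = C$ is the short computation sketched above.
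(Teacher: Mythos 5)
Your proposal is correct and follows essentially the same route as the paper: restrict the cover for the forward direction, and for the reverse direction assign every neuron in a residue class mod $z$ the same convex set from a realization of the restricted code. Your explicit verification that $C(\mathcal{U}) = C$ via the bijection between $C$ and $C|_B$ is a detail the paper leaves mostly implicit, but it is the same argument.
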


\begin{proof}
($\Rightarrow$)  Suppose $C$ is convex.  There exists a convex, open cover $\mathcal{U} = \{U_i\}_{i\in[n]}$ such that $C(\mathcal{U}) = C$.  Let $C \vert_Z$ be $C$ restricted to a set $Z$ consisting of $z$ consecutive neurons.  Then $\mathcal{U}'=\{U_i\}_{i\in Z}$ is an open cover with $C(\mathcal{U}) = C \vert_Z$, so $C \vert_Z$ is convex.  ($\Leftarrow$)  Suppose $C \vert_Z$ is convex on any set $Z$ of $z$ consecutive neurons.  Without loss of generality, there exists a convex, open cover $\mathcal{U} = \{U_i\}_{1\leq i\leq z}$.  Define a convex, open cover $\mathcal{U}'=\{U_j \mid U_j = U_i$ if $i \equiv j\mod{z}\}$.  By assumption, if $i \equiv j\mod{z}$, then $c_i = c_j$ in every codeword in $C$, so $C(\mathcal{U}') = C$.  Therefore, $C$ has a convex realization.  
\end{proof}

Lemma~\ref{convexityDependence} implies that the convexity of a $\km$ periodic code depends only on whether the code can be realized convexly on the first $k+m$ neurons.  We can now complete the proof of Proposition~\ref{convexRealization}.

\begin{proof}[Proof of Proposition~\ref{convexRealization}]
The code $C_{k,m}(n) \cup C_{k-1, m+1}(n)$ has the property that $c_i = c_j$ if $i \equiv j \mod{(k+m)}$ for every codeword.  By Lemma \ref{convexityDependence}, since $C_{k,m}(k+m) \cup C_{k-1, m+1}(k+m)$ has a convex realization for $k \leq m$ (Proposition~\ref{convexReal}), $C_{k,m}(n) \cup C_{k-1, m+1}(n)$ has a convex realization for $k \leq m$.
\end{proof}

Observe that for any $k$ and $m$, we can construct a circular realization of the code by defining open arcs in the same construction as above.  However, for $k>m$, the method of constructing sectors as open sets no longer generates convex sets because we do not have $\len(\gamma_i) \leq \pi$.  The question of how and whether a convex realization can be constructed for $C_{k,m}(n)\cup C_{k-1, m+1}(n)$ and $k>m$ remains open, as some but not all of these codes do not contain $C_{\min}(\Delta(C))$.  For example, let $C' = C_{3,2}(5) \cup C_{2,3}(5)$ and $C'' = C_{4,1}(5) \cup C_{3,2}(5)$.  We have $C' \supseteq C_{\min}(\Delta(C'))$.  On the other hand, $C'' \not \supseteq C_{\min}(\Delta(C''))$, as $\Lk_{1,4}(\Delta) = \{23, 35, 25\}$, which is noncontractible.

\section{Stochastically convex periodic codes}
Here we address Question 2 from Section 3.2.  In Theorem~\ref{subcompletionTheorem}, we showed that for $1<k\leq m$, the periodic code $C_{k,m}(n)$ can be completed to a convex code by adding the set of codewords in $C_{k-1,m+1}(n)$.  One of the possible ways $C_{k-1, m+1}(n)$ may arise biologically is through the stochasticity of the neural response, where neurons fail to fire.  This stochasticity arises naturally from our system, the nucleus laminaris (NL) of the barn owl, in at least two different ways:  

\begin{enumerate}
\item{\emph{Stochasticity in the stimulus.} Recall that in our description of the NL (see Figure~\ref{AuditorySystem}b), we assumed that a neuron fires if the difference in the phase of the signal coming to the neuron is less than $\varepsilon$.  We also showed that within a column, neurons differ by a time difference of $\delta_l + \delta_r$.  This means that for a given $\varepsilon$, up to $\lceil \frac{\varepsilon}{\delta_l + \delta_r} \rceil$ neurons could fire.  Depending on the incoming time difference, one fewer neuron than this upper bound could fire.  Thus, the stochasticity of the signal in time could give rise to the additional codewords necessary for the convex closure.}
\item{\emph{Stochasticity in neural firing.}  In addition to the stochasticity of the incoming signal, there is stochasticity in the neural response; a neuron may fail to fire when it should (false negative), or fire when it should not (false positive), based on the stimulus.}
\end{enumerate}

To attain the convex closure and form the words in $C_{k-1, m+1}(n)$, we clearly need some neuron to fail to fire, but adding codewords resulting from other neurons failing to fire could potentially alter the convexity of the code.  Lemma~\ref{subcodePreserve} guarantees that if the convex closure has been attained, adding codewords that preserve the simplicial complex, or equivalently result from neurons failing to fire, maintains the convexity of the convex closure.

\begin{lemma}
\cite[Theorem~1.3]{NewPaper}
Let $C$ be a convex code.  If $\Delta(C) \supseteq \widetilde{C} \supseteq C$, then $\widetilde{C}$ is convex.
\label{subcodePreserve}
\end{lemma}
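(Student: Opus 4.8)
First note that $C\subseteq\widetilde C\subseteq\Delta(C)$ forces $\Delta(\widetilde C)=\Delta(C)$: the inclusion $\widetilde C\supseteq C$ gives $\Delta(\widetilde C)\supseteq\Delta(C)$, while every $c\in\widetilde C$ is a face of $\Delta(C)$, so (as $\Delta(C)$ is downward closed) $\Delta(\widetilde C)\subseteq\Delta(C)$. Thus it suffices, by induction on $|\widetilde C\setminus C|$, to prove the single-codeword case: if $C$ is convex and $\sigma\in\Delta(C)\setminus C$, then $C\cup\{\sigma\}$ is convex. Indeed, given that, pick $\sigma\in\widetilde C\setminus C$, observe $\sigma\in\Delta(C)$, conclude $C\cup\{\sigma\}$ is convex, and re-apply the statement with $C\cup\{\sigma\}$ in place of $C$ (legitimate, since $\Delta(C\cup\{\sigma\})=\Delta(C)$ and $|\widetilde C\setminus(C\cup\{\sigma\})|$ has dropped). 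So fix a convex open realization $\mathcal U=\{U_i\}_{i\in[n]}$ of $C$ in $\reals^d$; intersecting every $U_i$ with a large ball, we may assume each $U_i$ is bounded without changing the code.

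\textbf{Setup for the single-codeword case.} Since $\sigma\in\Delta(C)$, the set $W:=\bigcap_{i\in\sigma}U_i$ is nonempty, open and convex; since $\sigma\notin C$, the atom of $\sigma$ is empty, i.e.\ $W\subseteq\bigcup_{j\notin\sigma}U_j$. The plan is to pass to $\reals^{d+1}$ and build a new convex realization $\{U_i'\}$ in which some point $(x_0,t_0)$ with $x_0\in W$ acquires codeword exactly $\sigma$: keep the sets indexed by $i\in\sigma$ ``tall'' over $W$ while lifting the sets indexed by $j\notin\sigma$ upward and away from $(x_0,t_0)$ by means of convex graphs, e.g.\ $U_j'=(U_j\times\reals)\cap\{t>\phi_j(x)\}$ for suitable convex $\phi_j$ (keeping $U_i'=U_i\times\reals$ for $i\in\sigma$), arranged so that $(x_0,t_0)\in U_i'$ for all $i\in\sigma$ but $(x_0,t_0)\notin U_j'$ for all $j\notin\sigma$. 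At large heights the arrangement reverts to $\mathcal U$, so every codeword of $C$ is still realized, and every newly created codeword is a face of $\Delta(C)$, so $\Delta$ is unchanged.

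\textbf{The main obstacle.} The delicate point -- and the real content of the cited theorem -- is controlling which \emph{other} codewords such a modification introduces. A naive version of the construction also exposes the codewords of the subarrangement $\{U_i\}_{i\in\sigma}$ (roughly, codewords of the form $c\cap\sigma$ or $c\setminus\sigma$ for $c\in C$), and these need not lie in $C\cup\{\sigma\}$; moreover one cannot simply ``drill a hole'' near an interior point of a single convex set and remain convex, so the new $\sigma$-atom cannot be carved out by a purely local surgery. Making the construction produce nothing outside $C\cup\{\sigma\}$ requires a careful choice of the functions $\phi_j$ together with the location of the new atom (and, for $|\widetilde C\setminus C|>1$, the order in which the codewords of $\widetilde C\setminus C$ are introduced); this is exactly \cite[Theorem~1.3]{NewPaper}, and we omit the details. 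Granting it, the remaining verifications are routine -- all original codewords persist (high slices reproduce $\mathcal U$), $\sigma$ is realized at $(x_0,t_0)$, and $\Delta(\widetilde C)=\Delta(C)$ -- which completes the induction and hence the proof.
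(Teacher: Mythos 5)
There is a genuine gap: you never prove the statement. Your reduction to the single-codeword case is fine (the observation that $\Delta(\widetilde C)=\Delta(C)$ and the induction on $|\widetilde C\setminus C|$ are correct and routine), but the single-codeword case \emph{is} the entire content of the lemma, and at exactly that point you write that the delicate step ``is exactly \cite[Theorem~1.3]{NewPaper}, and we omit the details.'' Invoking the theorem being proved to supply its own key step is circular, so what you have is a plausible strategy sketch (lift the realization to $\reals^{d+1}$, cut the sets indexed by $j\notin\sigma$ with convex graphs $\phi_j$ so that a chosen point over $W=\bigcap_{i\in\sigma}U_i$ acquires codeword $\sigma$), together with an honest acknowledgment of the obstacle you cannot yet handle: such a modification generically exposes additional codewords (faces of $\Delta(C)$ not in $C\cup\{\sigma\}$), and you give no construction of the $\phi_j$, no choice of base point, and no argument that the extra atoms can be avoided or absorbed. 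Without that, convexity of $C\cup\{\sigma\}$ — and hence of $\widetilde C$ — is not established.

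For context, the paper itself offers no proof of this lemma either: it is quoted as Theorem~1.3 of the cited reference (Cruz--Giusti--Itskov--Kronholm), where the monotonicity of convexity is proved by an explicit geometric construction of the kind you are gesturing at. So there is no internal argument to compare yours against; but judged as a standalone proof, yours is incomplete precisely at the step that carries all the difficulty. If you want a self-contained argument, you must either carry out the lifted construction in full (specifying the $\phi_j$ and verifying that the resulting code is exactly $C\cup\{\sigma\}$, or at least is contained in $\widetilde C$ after a suitable ordering of the added codewords), or cite the external theorem up front and drop the appearance of giving a proof.
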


Thus, we can attain the convex closure of a periodic code by introducing some probability that neurons fail to fire; let $p$ be the probability that a neuron fires correctly so $1-p$ is the probability that the neuron fails to fire.  Given this probability and the fact that other neurons' failure to fire does not create non-convexity, we ask how likely it is to attain the convex closure via failure to fire stochasticity.

\begin{proposition}
\label{totalProb}
{Suppose each codeword in $C_{k,m}(k+m)$ is sampled $N$ times.  The probability of receiving all of the codewords in $C_{k-1,m+1}(k+m)$ is $$\mathcal{P}(k, m, N, p) = \left(1-(1-\frac{2}{k+m}(1-p)p^{k-1})^N\right)^{k+m}.$$}
\end{proposition}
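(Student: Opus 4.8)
The plan is to compute the probability directly by decomposing the event into independent pieces indexed by the $k+m$ codewords of $C_{k,m}(k+m)$. First I would identify, for a fixed codeword $c \in C_{k,m}(k+m)$, exactly which codewords of $C_{k-1,m+1}(k+m)$ can be produced from $c$ by a single neuron failing to fire. Since $c$ is a cyclic permutation of $s_{k,m}$, it has exactly one band of $k$ consecutive firing neurons (on $k+m$ neurons the bands do not wrap in a way that splits, so there is a single length-$k$ block of $1$'s); dropping either the first or the last neuron of that block yields a cyclic permutation of $s_{k-1,m+1}$, i.e. a codeword of $C_{k-1,m+1}(k+m)$, while dropping any interior firing neuron breaks the pattern. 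So each $c$ maps to exactly $2$ of the $k+m$ codewords of $C_{k-1,m+1}(k+m)$, and conversely each codeword of $C_{k-1,m+1}(k+m)$ arises from exactly $2$ of the $k+m$ codewords of $C_{k,m}(k+m)$ (its unique band of $k-1$ ones can be extended at either end).

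Next I would compute the probability that a single sample of a fixed codeword $c$ produces a \emph{specified} target $\tau \in C_{k-1,m+1}(k+m)$ reachable from $c$: this requires the one designated neuron to fail ($1-p$) and the other $k-1$ firing neurons of $c$ to fire correctly ($p^{k-1}$), so that probability is $(1-p)p^{k-1}$. The probability that a single sample of $c$ produces \emph{some} codeword of $C_{k-1,m+1}(k+m)$ is then $2(1-p)p^{k-1}$, but I need to track \emph{which} target is produced. Here I would use the symmetry of the construction: the $k+m$ codewords of $C_{k,m}(k+m)$ are sampled $N$ times each, giving $N(k+m)$ total samples; by the correspondence above, for each target $\tau$ there are exactly $2$ source codewords that can yield it, so $\tau$ is "targeted" by $2N$ of these samples, each succeeding independently with probability $(1-p)p^{k-1}$. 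Hence the probability that a fixed target $\tau$ is \emph{never} produced is $(1-(1-p)p^{k-1})^{2N}$, which I would rewrite as $(1 - \tfrac{2}{k+m}\cdot\tfrac{k+m}{2}(1-p)p^{k-1})^{2N}$ — wait, more carefully: I would instead organize the $N$ samples per source so that the probability target $\tau$ is missed equals $\bigl((1-(1-p)p^{k-1})^{N}\bigr)^{2} = (1-(1-p)p^{k-1})^{2N}$. To match the stated formula $\bigl(1-(1-\tfrac{2}{k+m}(1-p)p^{k-1})^N\bigr)^{k+m}$, I would group the samples differently: across all $N(k+m)$ samples, distribute attention uniformly, so a single sample from a uniformly-random source produces a fixed $\tau$ with probability $\tfrac{2}{k+m}(1-p)p^{k-1}$; then $\tau$ is missed in $N$ such trials with probability $(1-\tfrac{2}{k+m}(1-p)p^{k-1})^N$, hence produced with probability $1-(1-\tfrac{2}{k+m}(1-p)p^{k-1})^N$. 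Finally, by independence across the $k+m$ distinct targets, I would multiply to get $\mathcal{P}(k,m,N,p) = \bigl(1-(1-\tfrac{2}{k+m}(1-p)p^{k-1})^N\bigr)^{k+m}$.

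The main obstacle is justifying the independence across the $k+m$ targets and the precise bookkeeping of which samples "count toward" which target: the events "target $\tau$ is produced" for different $\tau$ are not obviously independent, since a single sample of a source codeword $c$ produces at most one target, and $c$ can reach two different targets. I expect the resolution is to note that the two targets reachable from a given $c$ correspond to disjoint failure events (the first-neuron-fails event versus the last-neuron-fails event are mutually exclusive in a single sample), and that distinct targets are reached only through distinct designated-neuron-failure events; combined with the paper's already-established fact (Lemma~\ref{subcodePreserve}) that additional spurious codewords never destroy convexity, one can cleanly separate the probability space. I would be careful to state explicitly the sampling model being assumed — that the $N$ samples of each of the $k+m$ source codewords are independent, and that within a sample the firing/failure outcomes of the $k$ active neurons are independent Bernoulli($p$) — since the clean product formula depends on exactly this model. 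The remaining steps (the two combinatorial counting facts and the Bernoulli probability $(1-p)p^{k-1}$) are routine once the model is pinned down.
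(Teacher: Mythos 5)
Your proposal takes essentially the same route as the paper: its Lemma~\ref{singleProb} identifies the same two source codewords per target and obtains the per-trial probability $\tfrac{2}{k+m}(1-p)p^{k-1}$ under a uniform-stimulus assumption, and its Lemma~\ref{codeProb} then computes the miss probability $(1-\rho)^N$ and raises the complement to the power $k+m$, assuming independence across the targets exactly as you do. The two issues you flag as obstacles --- independence across the $k+m$ targets, and the tension between ``each codeword sampled $N$ times'' and the uniform-random-source per-trial model (your $(1-(1-p)p^{k-1})^{2N}$ computation) --- are precisely the assumptions the paper's proof makes implicitly, so your argument matches it in both substance and level of rigor.
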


Before proving this formula in the following section, we use it to estimate the number of times $N$ that each codeword needs to be sampled to attain a probability greater than .95 of seeing all the codewords in $C_{k-1, m+1}(k+m)$, rendering the resulting code convex. The results are plotted in Figure~\ref{kbehavior}, showing that $N$ grows supralinearly with $k$ but approximately linearly with $m$.  In exploring the convex closure, we assumed that $k\leq m$ to reflect the fact that neural codes are generally sparse.  The relationship between sparsity and convexity has only begun to be investigated \cite{YoungsInvolve}.  By comparing the way $N$ grows in comparison to $k$ and $m$, we argue that this sparsity is not only a general property of codes but is necessary for convexity to arise through stochasticity in this system.  By increasing $k$, the number of consecutive active neurons, the number of times each codeword must be sampled increases rapidly.  In contrast, a code can be expanded by increasing $m$, the number of consecutive silent neurons, without the number of samples of each codeword growing too rapidly that all the codewords in the convex closure would likely never all be seen.  Thus, sparsity allows the convex closure to be obtained through failure to fire stochasticity.  The importance of sparsity to the ability to achieve the convex closure is further seen in Figure~\ref{kbehavior}c.  Here, we see that there is an optimal value for $1-p$, the rate of failure to fire, which minimizes the number of times each codeword needs to be sampled to achieve the convex closure.  Intuitively, this optimal $1-p$ results from the fact that there must be some failure rate so that single neurons misfire, but if the failure rate is too high, multiple neurons will misfire at the same time.  By finding the minimum of $N$ with respect to $p$, we are able to see how this relates to the sparsity of the code.

\begin{corollary}
\label{optimalP}
The optimal failure to fire rate $1-p$ for which both $N$ is minimized for a given $\mathcal{P}$ and $\mathcal{P}$ is maximized for a given $N$ is
$$1-p = \frac{1}{k}.$$
\end{corollary}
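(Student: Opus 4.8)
The plan is to reduce both optimization problems in the statement — minimizing $N$ for a prescribed $\mathcal{P}$, and maximizing $\mathcal{P}$ for a prescribed $N$ — to a single one-variable calculus problem. Write $q = q(p) \od \frac{2}{k+m}(1-p)p^{k-1}$, so that the formula of Proposition~\ref{totalProb} becomes $\mathcal{P} = \bigl(1 - (1-q)^N\bigr)^{k+m}$. For $p \in (0,1)$ we have $q \in (0,1)$, and on this range the map $q \mapsto \bigl(1-(1-q)^N\bigr)^{k+m}$ is strictly increasing: $1-q$ is decreasing, hence $(1-q)^N$ is decreasing (as $N \geq 1$), hence $1-(1-q)^N$ is increasing, and raising a positive quantity to the positive power $k+m$ preserves this. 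Consequently, for fixed $k,m,N$, maximizing $\mathcal{P}$ over $p$ is equivalent to maximizing $q(p)$, equivalently maximizing $f(p) \od (1-p)p^{k-1}$.

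For the other formulation I would first solve the formula of Proposition~\ref{totalProb} for $N$: taking $(k+m)$th roots and then logarithms gives
$$N = \frac{\ln\!\bigl(1 - \mathcal{P}^{1/(k+m)}\bigr)}{\ln(1-q)},$$
where both numerator and denominator are negative (since $\mathcal{P},q \in (0,1)$), so $N > 0$. As $q$ increases, $\ln(1-q)$ decreases (toward $-\infty$), so this ratio decreases; hence for fixed $\mathcal{P}$, the value of $N$ is minimized exactly when $q(p)$, and therefore $f(p)$, is maximized. (If one insists $N$ be integer-valued, the smallest admissible $N$ is the ceiling of the expression above, which is minimized at the same $p$.) Thus both halves of the corollary come down to the same extremal condition on $f$.

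It then remains to maximize $f(p) = (1-p)p^{k-1} = p^{k-1} - p^{k}$ on $(0,1)$. Differentiating, $f'(p) = (k-1)p^{k-2} - k p^{k-1} = p^{k-2}\bigl((k-1) - kp\bigr)$, whose only zero in $(0,1)$ is $p = \tfrac{k-1}{k}$. Since $f(0) = f(1) = 0$ while $f > 0$ on $(0,1)$, and $f'$ changes sign from positive to negative at $\tfrac{k-1}{k}$, this critical point is the global maximizer. Hence the optimal firing probability is $p = \tfrac{k-1}{k}$, i.e. $1 - p = \tfrac{1}{k}$, as claimed.

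I do not anticipate a genuine obstacle here; the only point requiring a little care is organizing the argument so that the ``minimize $N$'' and ``maximize $\mathcal{P}$'' claims are dispatched simultaneously — via the monotone reduction to $f(p)$ — rather than treated as two separate computations. The remaining differentiation is routine, and the boundary behavior $f(0)=f(1)=0$ immediately upgrades the interior critical point to a global maximum.
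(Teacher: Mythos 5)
Your proposal is correct, and it reaches the same critical point by what is at heart the same calculus on the formula of Proposition~\ref{totalProb}, but the organization is genuinely cleaner than the paper's. The paper proves the corollary by two separate derivative computations: it finds the roots of $\frac{\partial \mathcal{P}}{\partial p}$ (obtaining $p=1$, a minimum with $\mathcal{P}=0$, and $p=\frac{k-1}{k}$, the maximum) and then, separately, checks that the root of $\frac{\partial N}{\partial p}$ is again $p=\frac{k-1}{k}$. You instead factor both claims through the single quantity $q(p)=\frac{2}{k+m}(1-p)p^{k-1}$: since $\mathcal{P}$ is strictly increasing in $q$ for fixed $N$, and $N=\ln\bigl(1-\mathcal{P}^{1/(k+m)}\bigr)/\ln(1-q)$ is strictly decreasing in $q$ for fixed $\mathcal{P}$, both optimizations collapse to maximizing $f(p)=(1-p)p^{k-1}$, which one elementary differentiation handles. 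What this buys is one computation instead of two (no need to differentiate the composite expressions at all), plus an explicit verification that the interior critical point is a global maximum via $f(0)=f(1)=0$ and the sign change of $f'$ --- a point the paper leaves implicit. Your argument tacitly assumes $k\geq 2$ (so that $p^{k-2}$ and the interior critical point make sense), but that matches the regime $1<k\leq m$ in which the convex closure and the failure-to-fire mechanism are relevant, and the paper makes the same implicit assumption.
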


Observe, that this result is not surprising; the number of neurons which fail to fire is binomially distributed as $\mathrm{Bin}(k, 1-p)$, so for the expected number of neurons that fail to fire to be exactly one, we need $1-p = \frac{1}{k}$.  This provides an additional argument for sparsity in our code because as $k$ increases, the probability of failure to fire decreases, meaning that the biological system must be increasingly precise in its firing as the number of active neurons increases.

\begin{figure}
 \begin{center}
  \includegraphics[width=\textwidth]{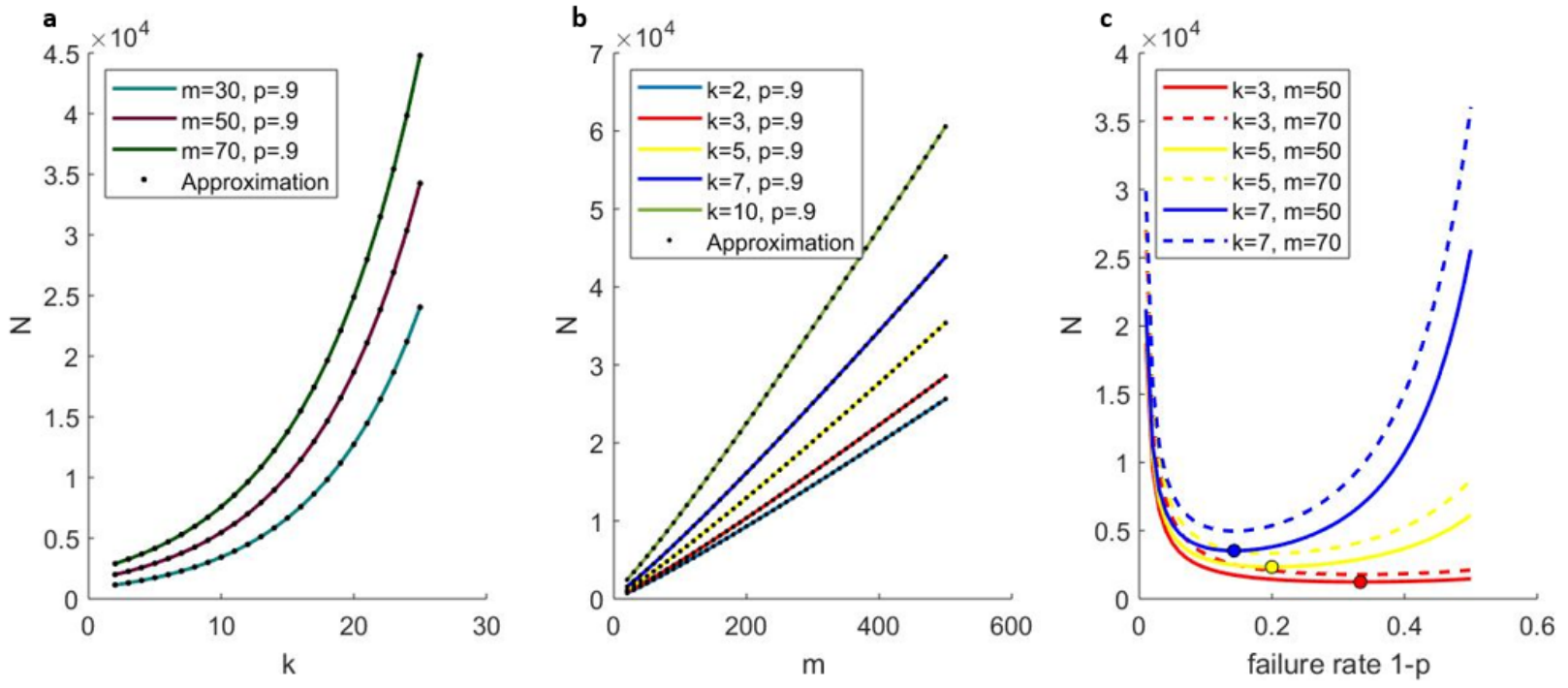}
  \end{center}
\caption{ \footnotesize \textit{Number of Samples N of Each Codeword Needed for $\mathcal{P}>.95$.} \textbf{(a)} \textit{Effect of k on N.}  For fixed $m$ and $p$, as $k$ increases, the number of times, $N$, that each codeword needs to be sampled for the probability that each codeword in the convex closure has been received grows supralinearly.  This suggests the importance of sparsity in keeping $N$ small. The dotted black lines show our approximation, which closely tracks the exact solutions (colored lines). \textbf{(b)} \textit{Effect of m on N.}  For fixed $k$ and $p$, as $m$ increases, $N$ grows approximately linearly.  Compared with the rapid growth of $N$ with $k$, expanding the code with additional silent neurons does not dramatically increase the number of times that each codeword must be sampled for a convex realization.  As before, the dotted black lines are our approximation.  \textbf{(c)} \textit{Effect of Failure to Fire Rate $1-p$ on N.} For fixed $k$ and $m$, we see that there is an optimal failure rate $1- p$ for which $N$ is minimized, $1-p = \frac{1}{k}$ (solid dots).}
\label{kbehavior}
\end{figure}

\subsection{Proof of Proposition~\ref{totalProb}}
We assume the probability of a neuron firing correctly is $p$, so the probability of a 1 being switched to a 0 is $1-p$.  Here, we consider only the case where neurons fail to fire; neurons have 0 probability of \textit{misfiring} (i.e., firing when they should be silent).  In addition, we assume that the stimulus space has a uniform distribution over all possible stimuli, so all codewords in $C_{k,m}(n)$ are equally probable to be the correct codeword, and we assume that the brain always stores the codewords from $C_{k,m}(n)$ in memory.  Let $N$ be the number of times that each codeword is sampled that the brain stores.

In Theorem~\ref{subcompletionTheorem}, we showed that for $1< k \leq m$, the convex closure is $C_{k,m}(n) \cup C_{k-1,m+1}(n)$.  We give the probability of this convex closure for $n=k+m$.  We first consider the probability of seeing one of the needed mandatory codewords.
 
\begin{lemma}
\label{singleProb}
{Let $c \in C_{k-1,m+1}(k+m)$.  Assuming that neurons never misfire and that the stimulus space is uniform such that all codewords in $C_{k,m}(k+m)$ are equally probable to be the true sent codeword, the probability that $c$ is received on a given trial is $$\frac{2}{k+m}(1-p)p^{k-1}.$$}
\end{lemma}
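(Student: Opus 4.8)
The plan is to condition on which codeword of $C_{k,m}(k+m)$ is actually sent, and then reduce the problem to a short combinatorial count enabled by the no‑misfire assumption. Write $d$ for the sent codeword. By Definition~\ref{codeword} (equivalently Lemma~\ref{equivChar}), on $k+m$ neurons every codeword of $C_{k,m}(k+m)$ is one of the cyclic permutations of $s_{k,m}$, so $\supp(d)$ is a band of $k$ consecutive indices modulo $k+m$; by Proposition~\ref{size} there are exactly $k+m$ such codewords, and by the uniform‑stimulus hypothesis each is sent with probability $\frac{1}{k+m}$. Likewise $\supp(c)$, for $c \in C_{k-1,m+1}(k+m)$, is a band of $k-1$ consecutive indices modulo $k+m$.

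Next I would compute the conditional probability that the received word is exactly $c$ given that $d$ was sent. Since neurons never misfire, any received word $c'$ satisfies $\supp(c') \subseteq \supp(d)$; and given $d$, the received word equals a prescribed target $c'$ precisely when every neuron of $\supp(c')$ fires and every neuron of $\supp(d) \setminus \supp(c')$ fails, an independent event of probability $p^{|\supp(c')|}(1-p)^{|\supp(d)|-|\supp(c')|}$. Taking $c' = c$, with $|\supp(c)| = k-1$ and $|\supp(d)| = k$, this conditional probability is $0$ unless $\supp(c) \subseteq \supp(d)$, in which case it equals $p^{k-1}(1-p)$.

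The key step is then to count how many $d \in C_{k,m}(k+m)$ contain $\supp(c)$. Writing $\supp(c) = \{i, i+1, \dots, i+k-2\}$ modulo $k+m$, a band of $k$ consecutive indices containing this set must be either $\{i-1, i, \dots, i+k-2\}$ or $\{i, i+1, \dots, i+k-1\}$; both are legitimate codewords of $C_{k,m}(k+m)$, and they are distinct because $m \neq 0$ (which holds since $1 < k \le m$). So exactly two values of $d$ contribute, and summing over $d$ gives
\[
P(\text{received}=c) = \sum_{d} P(\text{sent}=d)\, P(\text{received}=c \mid \text{sent}=d) = 2 \cdot \frac{1}{k+m} \cdot p^{k-1}(1-p),
\]
which is the asserted value. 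I expect the only delicate point to be this last count: verifying that the two one‑step band extensions are the only codewords whose support contains $\supp(c)$, and that they are genuinely distinct — the latter being exactly where the hypothesis $m \neq 0$ enters.
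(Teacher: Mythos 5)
Your proof is correct and follows essentially the same route as the paper: both arguments come down to identifying the exactly two codewords of $C_{k,m}(k+m)$ (the two one-step band extensions of $\supp(c)$) from which $c$ can arise by a single failure to fire, each contributing $\frac{1}{k+m}\,(1-p)p^{k-1}$. Your law-of-total-probability phrasing, summing over the sent codeword $d$, is a slightly cleaner packaging of the paper's conditional-probability computation, and you correctly flag and verify the one delicate point (that the two extensions are the only containing bands and are distinct since $m \neq 0$).
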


\begin{proof}
Without loss of generality, let $c = 1_1 \cdots 1_{k-1}0_k\cdots 0_{k+m}$.  Observe that, since we assume that neurons never misfire, $c$ can only be formed by the failure of one of the neurons in a codeword in $C_{k,m}(k+m)$ failing to fire.  There are exactly two codewords in $C_{k,m}(k+m)$ where the failure of one neuron produces $c$, the codewords $c_1 = 1_1\cdots1_k0_{k+1}\cdots0_{k+m}$ and $c_2 = 1_1\cdots1_{k-1}0_k\cdots0_{k+m-1}1_{k+m}$.  We have $\Pr(\text{$c$ received}\mid\text{$c_1$ or $c_2$ sent}) = \frac{\Pr(\text{$c$ received})}{\Pr(\text{$c_1$ or $c_2$ sent})}$.  Since we assumed the stimulus space was uniform, each codeword is equally likely to have been sent, so $\Pr(\text{$c_1$ or $c_2$ sent}) = \frac{2}{k+m}$.  If $c_1$ or $c_2$ was sent and $c$ was received, then exactly one neuron failed to fire and all the other neurons fired correctly, so $\Pr(\text{$c$ received}\mid\text{$c_1$ or $c_2$ sent})=(1-p)p^{k-1}$.  Therefore, we have $\Pr(\text{$c$ received})= \frac{2}{k+m}(1-p)p^{k-1}$.
\end{proof}

We now consider the set of codewords in $C_{k-1,m+1}(k+m)$.

\begin{lemma}
\label{codeProb}
{Let $q$ be the probability of seeing a codeword in $C_{k-1, m+1}(k+m)$ on a given trial.  If each of the codewords in $C_{k,m}(k+m)$ is sampled $N$ times, the probability of seeing all the codewords in $C_{k-1, m+1}(k+m)$ is $(1-(1-q)^N)^{k+m}$.}
\end{lemma}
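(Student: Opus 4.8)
The plan is a short coupon‑collector computation built directly on top of Lemma~\ref{singleProb}. First I would pin down the count: by Proposition~\ref{size} (applicable since we are in the regime $1 < k \leq m$, so the new parameters $k-1$ and $m+1$ are both nonzero), the code $C_{k-1,m+1}(k+m)$ has exactly $(k-1)+(m+1) = k+m$ codewords, so the exponent $k+m$ in the claimed formula is just $|C_{k-1,m+1}(k+m)|$. Next I would upgrade Lemma~\ref{singleProb} from one fixed codeword to all of them at once: the codewords of $C_{k-1,m+1}(k+m)$ are precisely the $k+m$ cyclic permutations of $s_{k-1,m+1}$, the code $C_{k,m}(k+m)$ is cyclic (by Proposition~\ref{periodicCyclic}, since $k+m$ is a multiple of $k+m$), hence the uniform stimulus distribution on it is invariant under cyclic shifts of the neurons, and the failure‑to‑fire noise acts on each neuron independently and identically. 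Consequently every $c \in C_{k-1,m+1}(k+m)$ is received on a given trial with the same probability $\rho = \frac{2}{k+m}(1-p)p^{k-1}$ computed in Lemma~\ref{singleProb}.

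Given this, the remainder is complement‑and‑multiply. Fix $c \in C_{k-1,m+1}(k+m)$. Since the $N$ samples are independent and each fails to produce $c$ with probability $1-\rho$, the probability that $c$ is never received over all $N$ samples is $(1-\rho)^N$, so $P(c \text{ received at least once}) = 1-(1-\rho)^N$. Treating the events $\{\,c \text{ received at least once}\,\}$ for the $k+m$ codewords $c \in C_{k-1,m+1}(k+m)$ as independent, the probability that all of them occur is the product
$$\prod_{c \in C_{k-1,m+1}(k+m)} \bigl(1-(1-\rho)^N\bigr) = \bigl(1-(1-\rho)^N\bigr)^{k+m},$$
which is exactly the asserted expression.

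The step that deserves real care --- and the natural place for an objection --- is treating those $k+m$ receipt events as independent: on a single sample at most one received word can equal a given codeword, so distinct codewords' receipt events are in fact weakly negatively correlated rather than independent, and the product above is in truth an over‑estimate. I would handle this as the surrounding section handles its other stochastic claims, namely as the standard independence approximation under the modeling assumptions already in force (uniform stimulus space, per‑neuron independent failures, independent samples); this is what then feeds cleanly into Proposition~\ref{totalProb}. Aside from that modeling choice, the argument is entirely routine.
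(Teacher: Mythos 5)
Your argument is essentially the paper's own proof: compute the probability $1-(1-\rho)^N$ that a fixed codeword of $C_{k-1,m+1}(k+m)$ is received at least once, then raise it to the power $k+m = |C_{k-1,m+1}(k+m)|$ by treating the per-codeword receipt events as independent. The paper performs this last multiplication silently (and takes the equal value of $\rho$ across codewords for granted via Lemma~\ref{singleProb}), so your explicit note that these events are only approximately independent is, if anything, a more careful reading of the same argument rather than a different one.
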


\begin{proof}
Let $x$ be the probability of seeing a codeword at least once.  There are $k+m$ codewords in $C_{k,m}(k+m)$, so the probability of seeing all $k+m$ codewords at least once is $x^{k+m}$.  We have 
$$x = 1 - \Pr(\text{never seeing a codeword in $N$ trials}).$$
We also have 
$$\Pr(\text{never seeing a codeword in $N$ trials}) = (1-q)^N.$$
Thus, the probability of seeing all the codewords is 
$$x^{k+m} = \left(1 - (1-q)^N\right)^{k+m}.$$
\end{proof}

We are able to combine the results of Lemma~\ref{singleProb} and Lemma~\ref{codeProb} to give us the probability of receiving the convex closure,
$$\mathcal{P}(k, m, N, p) = \left(1-\left(1-\frac{2}{k+m}(1-p)p^{k-1}\right)^N\right)^{k+m}.$$

From this formula for the probability, it is natural to ask how many times each codeword needs to be sampled to achieve some probability that all the codewords in the convex closure have been received.  Using this formula, we derive the number of times $N$ which each codeword needs to be sampled in order to achieve some probability bound $\mathcal{P}$ of receiving all the codewords in the convex closure, finding
$$N = \log_{1-\frac{2}{k+m}(1-p)p^{k-1}}{\left(1-\mathcal{P}^{\frac{1}{k+m}}\right)} = \frac{\ln\left(1-\mathcal{P}^{\frac{1}{k+m}}\right)}{\ln\left(1-\frac{2}{k+m}(1-p)p^{k-1}\right)}.$$

To develop better intuition about the dependence of $N$ on $k$ and $m$, we can approximate $\ln\left(1-\frac{2}{k+m}(1-p)p^{k-1}\right)$ as $-\frac{2}{k+m}(1-p)p^{k-1}$.  Using this approximation, we find
$$N \approx \frac{\ln(\mathcal{P})+(k+m)\ln\left(\frac{1}{\mathcal{P}^{\frac{1}{k+m}}} - 1\right)}{-2(1-p)p^{k-1}} = f(k) + mg(k)\ln\left(\frac{1}{\mathcal{P}^{\frac{1}{k+m}}} - 1\right),$$
where $f(k)= \frac{\ln{\mathcal{P}}+k}{-2(1-p)p^{k-1}}$ and $g(k)= \frac{1}{-2(1-p)p^{k-1}}$.  The plots show that this approximation closely follows the analytic solution.  Through this approximation, we are able to see why we might expect $N$ to depend almost linearly on $m$, as seen in the plot in Figure~\ref{kbehavior}.

Recall our expression for $\mathcal{P}$ from Proposition~\ref{totalProb}.  By optimizing this expression for fixed $k$, $m$, and $N$, we are able to solve for the $p$ which gives the highest probability of observing the convex closure (Corollary~\ref{optimalP}).  The probabilty of observing the convex closure $\mathcal{P}$ has a minimum of 0 at $p=1$ and a maximum at $p = \frac{k-1}{k}$.  Similarly, by optimizing $N$ for fixed $k$, $m$, and $\mathcal{P}$, we find that $N$ is minimized at $p= \frac{k-1}{k}$.

These results are limited to the case where the stochasticity is only for neurons failing to fire with the assumption that neurons never fire when they should not. The more challenging question combinatorially is what the probability is that a code becomes convex when there is some nonzero probability that neurons fire when they should not because this additional firing changes the simplicial complex and hence which codewords are mandatory.  This combinatorial question remains open and is further complicated by the fact that for $n>k+m$, receiving a $(k-1)$-$(m+1)$ codeword requires a pattern of repeated errors at each firing band.

\subsection{Convex completions of Hamming distance $d$}
While less probable, it is also possible for neurons to fire incorrectly, which would correspond to codewords of greater weight that no longer preserve the simplicial complex.  Observe that while multiple errors in firing are possible, the probability of each additional error decreases.  We use Hamming distance as a measure of the degree to which two codewords differ, counting the number of errors that would be needed for one codeword to be transmitted as another.  Recall that the \textit{Hamming distance} between two codewords $a$ and $b$, denoted $d_H(a,b)$, is given by $d_H(a,b) = w_H(a-b)$ where subtraction is performed over $\mathbb{F}_2$.  Unlike in Theorem~\ref{subcompletionTheorem}, we no longer require that the new code preserve $\Delta(C)$, but instead require that the added codewords have small Hamming distance from the original codewords.
 
\begin{definition}
A \textit{Hamming distance d convex completion of $C$} is a code $\widehat{C}\supseteq C$ such that $\widehat{C}$ is convex, and for all $a\in \widehat{C}\setminus C$ there exists $c\in C$ such that $d_H(a,c) \leq d$.  We say $\widehat{C}$ is \textit{minimal} if $|\widehat{C}|$ is minimal.
\end{definition}

From \cite[Lemma~2.5]{CurtoConvex}, we know that any code which contains the all-ones codeword, $11\cdots1$, is convex.  Thus, for any code on $n$ neurons where the maximal weight codeword has weight $w$, we have a minimal Hamming distance $n-w$ convex completion given by simply adding the all-ones codeword to the code.  In particular, for the case of a $\km$ periodic code on $k+m$ neurons, we have a minimal Hamming distance $m$ convex completion given by adding the all-ones codeword. We can also guarantee a Hamming distance $k-1$ convex completion by adding all codewords which are subsets of some codeword in $C_{k,m}(k+m)$, but this method is rarely minimal.

As $d$ increases, the probability of a codeword of Hamming distance $d$ from an original codeword being received decreases, so while codewords of Hamming distance $k-1$ and $m$ are possible, these convex completions are often less probable.  For this reason, we give special attention to the cases of a Hamming distance 1 convex completion of a $\km$ periodic code.

Observe that for $k\leq m$, the convex closure of $C_{k,m}(k+m)$ is a Hamming distance 1 convex completion, where the codewords of $C_{k-1, m+1}(k+m)$ result from a single neuron in a codeword in $C_{k,m}(k+m)$ failing to fire (Theorem \ref{subcompletionTheorem}).  On the other hand, for $k \leq m-2$, the code obtained by adding higher weight codewords $C_{k,m}(k+m)\cup C_{k+1, m-1}(k+m)$ is a Hamming distance 1 convex completion, since this code is precisely the convex closure of $C_{k+1, m-1}(k+m)$.  Both of these examples of Hamming distance 1 convex completions require $k+m$ additional codewords, so it is natural to ask whether the convex closure is a minimal Hamming distance 1 convex completion.

\begin{theorem}{For $k\leq m$, the convex closure of $C_{k,m}(k+m)$ is a minimal Hamming distance 1 convex completion.}
\label{HammingSize}
\end{theorem}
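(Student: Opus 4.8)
The plan is to split the statement into an easy upper bound --- that the convex closure $\bar{C}$ of $C = C_{k,m}(k+m)$ is a Hamming distance $1$ convex completion using exactly $k+m$ added codewords --- and a matching lower bound that no Hamming distance $1$ convex completion of $C$ can use fewer. The cases $k = 0$ and $k = 1$ are degenerate: there $\bar{C} = C$ by Theorem~\ref{subcompletionTheorem}, so $\bar{C}$ is itself a Hamming distance $1$ convex completion with $A = \emptyset$, which is trivially minimal, and I may assume $1 < k \leq m$ from now on. For the upper bound, Theorem~\ref{subcompletionTheorem} gives $\bar{C} = C \cup C_{k-1,m+1}(k+m)$, which is convex with $\Delta(\bar{C}) = \Delta(C)$; each codeword of $C_{k-1,m+1}(k+m)$ is a band of $k-1$ consecutive neurons, obtained from a length-$k$ band in $C$ by switching a single $1$ to a $0$, hence lies at Hamming distance $1$ from $C$, and by Proposition~\ref{size} there are exactly $k+m$ of them.

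For the lower bound, let $\hat{C} = C \cup A$ be an arbitrary Hamming distance $1$ convex completion, realized by a convex open cover $\mathcal{U} = \{U_i\}$. Since every codeword of $A$ differs in exactly one bit from a weight-$k$ codeword, $A$ consists only of codewords of weight $k-1$ or $k+1$. Indexing neurons by $\integers/(k+m)$, put $\tau^{(j)} = \{j, j+1, \dots, j+k-2\}$ --- so the $\tau^{(j)}$ are precisely the $k+m$ codewords of $C_{k-1,m+1}(k+m)$ --- and $\beta^{(j)} = \{j-1, j, \dots, j+k-1\}$. I will assign to each $j$ a codeword $w(j) \in A$ and show $j \mapsto w(j)$ is injective, which forces $|A| \geq k+m$ and completes the proof. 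If $\beta^{(j)} \in \hat{C}$, set $w(j) = \beta^{(j)}$ (it lies in $A$, having weight $k+1$). Otherwise $\beta^{(j)} \notin \Delta(\hat{C})$, because a weight-$(k+1)$ face must itself be a codeword of $\hat{C}$ (no codeword has larger weight), so $\bigcap_{i \in \beta^{(j)}} U_i = \emptyset$; meanwhile $U_{\tau^{(j)}} := \bigcap_{i \in \tau^{(j)}} U_i$ is nonempty and convex and meets both $U_{j-1}$ and $U_{j+k-1}$ (via the codewords $\{j-1\} \cup \tau^{(j)}$ and $\tau^{(j)} \cup \{j+k-1\}$ in $C$), so by connectedness there is a point $p \in U_{\tau^{(j)}}$ lying in neither $U_{j-1}$ nor $U_{j+k-1}$; set $w(j)$ equal to the codeword $c_p$ of $p$.

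It remains to identify $c_p$ and verify injectivity. By construction $\tau^{(j)} \subseteq c_p$ while $j-1, j+k-1 \notin c_p$, and $c_p \notin C$ because the only length-$k$ bands containing $\tau^{(j)}$ are $\{j-1\} \cup \tau^{(j)}$ and $\tau^{(j)} \cup \{j+k-1\}$. So $c_p \in A$ has weight $k-1$ or $k+1$. If the weight is $k-1$, then $c_p = \tau^{(j)}$. If it is $k+1$, then, being at Hamming distance $1$ from a weight-$k$ band, $c_p$ is either a length-$(k+1)$ band or a length-$k$ band together with one isolated extra $1$; since $\tau^{(j)}$ is a run of $k-1$ consecutive neurons contained in $c_p$, it must lie in the band portion of $c_p$ (it can equal the isolated neuron only when $k = 2$), and a band containing the run $\tau^{(j)}$ necessarily contains $j-1$ or $j+k-1$ --- a contradiction --- except in the single exceptional case $k = 2$, where $\tau^{(j)} = \{v_j\}$ is the isolated neuron and $c_p$ is a length-$2$ band, not adjacent to $v_j$ and disjoint from $\{v_{j-1}, v_{j+1}\}$, together with $v_j$. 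Hence the values of $w$ fall into three pairwise-disjoint families --- the bands $\tau^{(j)}$ (weight $k-1$), the bands $\beta^{(j)}$ (weight $k+1$, connected), and, only when $k = 2$, disconnected weight-$3$ codewords whose unique isolated neuron is $v_j$ --- and in each family the index $j$ is recovered from the codeword, so $w$ is injective, as needed.

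The main obstacle is the classification step of the third paragraph: it rests on enumerating the binary strings of weight $k \pm 1$ at Hamming distance $1$ from a length-$k$ band and on the structural fact that a connected run of neurons cannot straddle the two components of a ``band-plus-isolated-bit'' codeword once $k \geq 3$. Two further details that need care are the existence of the separating point $p$ --- equivalently, that $U_{\tau^{(j)}}$ is forced to disconnect unless $\beta^{(j)}$ belongs to $\hat{C}$ --- and checking that the $k = 2$ exceptional codewords are genuinely pinned to a single index $j$ (in particular that they require $k + m \geq 5$, so the smallest case $k = m = 2$ collapses back to $c_p = \tau^{(j)}$).
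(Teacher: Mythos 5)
Your proposal is correct, and it reaches the bound by a genuinely different route than the paper. The paper stays combinatorial: using the mandatory-codeword framework (Lemma~\ref{mandCodewords} together with Lemma~\ref{subbandLink}), it fixes one representative $(k-1)$-band $\tau$ and argues that any Hamming-distance-$1$ additions that make $\Lk_\tau$ contractible cost at least one extra codeword while leaving the remaining $k+m-1$ bands of $C_{k-1,m+1}(k+m)$ mandatory, so $|A'|\geq 1+(k+m-1)=k+m$. You instead work directly with an arbitrary convex realization of the completion: for each cyclic position $j$, either the $(k+1)$-band $\beta^{(j)}$ is a codeword, or the convexity (hence connectedness) of $U_{\tau^{(j)}}$, together with $\bigcap_{i\in\beta^{(j)}}U_i=\emptyset$, forces a point whose codeword contains $\tau^{(j)}$ and omits both neighbors; your weight-and-adjacency classification then pins that codeword down to $\tau^{(j)}$ itself, or, only when $k=2$, to a disconnected weight-$3$ word whose unique isolated vertex is $j$, and injectivity of $j\mapsto w(j)$ across these three families gives $|A|\geq k+m$, matching the $k+m$ words added by the closure (Theorem~\ref{subcompletionTheorem}, Proposition~\ref{size}). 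Your argument is more elementary and self-contained---it never invokes $C_{\min}$ or link contractibility, and because it uses full convexity rather than only the necessary local-obstruction condition, it additionally shows that for $k\geq 3$ the paper's ``two added codewords making the link contractible'' option cannot by itself avoid including the $(k-1)$-band, whereas the paper's proof is shorter given the machinery it has already built. The delicate points you flag (existence of the separating point, the $k=2$ disconnected case, and its nonexistence when $k+m=4$) are exactly the right ones, and your treatment of them is sound.
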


\begin{proof}
Let $\widehat{C}$ be a Hamming distance 1 convex completion.  Define $A = \widehat{C}\setminus C$, so proving $\widehat{C}$ is minimal is equivalent to proving $|A|$ is minimal.

Let $C = C_{k,m}(k+m)$ for $k \leq m$ and let $\Delta = \Delta(C)$.  For the cases of $k=0$ and $k=1$, $\bar{C} = C$ (Theorem~\ref{subcompletionTheorem}), so $A$ is the empty set and must be minimal.  Now consider the case $1<k\leq m$.  We know that the convex closure is the minimal Hamming distance 1 convex completion which also preserves the simplicial complex of the code, where $A = C_{k-1, m+1}(k+m)$ and $|A| = k+m$.  Thus, in order to show that the convex closure is minimal, we will prove that there is no smaller $A'$ of Hamming distance 1 codewords such that $C_{\min}(\Delta(C \cup A')) \subseteq C \cup A'$.  Define $\widehat{\Delta} = \Delta(C \cup A')$ and $A^* = C_{\min}(\widehat{\Delta}) \setminus C$.  For $C\cup A'$ to be convex, we must have $A^* \subset A'$.

For any $A'$ such that $\widehat{\Delta} = \Delta$, we have $A^* = C_{k-1, m+1}(k+m)$.  We want to show that by adding a single Hamming distance 1 codeword to $A'$, we can reduce $|A^*|$ by at most 1. Without loss of generality, consider the face $\tau \in C_{k-1, m+1}(k+m)$ with $\tau = \sigma_{2, k}(k+m)$, using our notation from Section 3.2.  Recall from the proof of Lemma~\ref{subbandLink} that $\Lk_{\tau}(\Delta) = \{1,(k+1)\}$. To form a convex completion, we must either have $\tau \in A'$ or choose $A'$ such that $\Lk_{\tau}(\Delta(C \cup A'))$ is contractible.  Since $\Lk_\tau(\Delta) \subseteq \Lk_\tau(\widehat{\Delta})$, in order for  $\Lk_\tau(\widehat{\Delta})$ to be contractible, we must have that either the edge $\{1(k+1)\} \in \Lk_\tau(\widehat{\Delta})$ or the edges $\{1j\}$ and $\{(k+1)j\}$ are both in $\Lk_\tau(\widehat{\Delta})$ for some $j>k+1$.  The only way we can add the edge $\{1(k+1)\}$ to $\Lk_\tau(\widehat{\Delta})$ by adding a Hamming distance 1 codeword is if $\sigma_{1, k+1}(k+m) \in A'$.  The only way we can add the edges $\{1j\}$ and $\{(k+1)j\}$ using Hamming distance 1 codewords is if $1\cdots 1_{k}0\cdots01_j0\cdots0 \in A'$ and $01\cdots1_{k+1}0\cdots01_j0\cdots0\in A'$.  The addition of these codewords does not change the link of any of the other codewords in $C_{k-1, m+1}(k+m)$ and hence $A^* = (C_{k-1, m+1}(k+m) - \{\tau\}) \cup C_{\min}(\widehat{\Delta}))\setminus C$.  This gives us $|A^*| \geq |C_{k-1, m+1}(k+m)|-1$.  So for $A'$ to be a convex completion, we must have $A' \supseteq \sigma_{1, k+1}(k+m) \cup A^*$, giving us $|A'| \geq 1+|A^*| \geq k+m$.  Thus, there exists no smaller $A'$ such that $C_{\min}(\widehat{\Delta}) \subseteq C \cup A'$.   
\end{proof}

\section{Algebraic signatures of $\km$ periodic codes}
We defined $\km$ periodic codes as the codes containing all $\km$ periodic codewords, relying on a specific ordering of the vertices.  We showed that we could determine whether another maximal code is permutation equivalent to a periodic code by comparing the simplicial complexes of the codes (Proposition~\ref{simpPermEquivalent}).  In this section, we prove Theorem \ref{fullCF}, which gives an algebraic description of periodic codes and allows us to check if any code is permutation equivalent to a periodic code.

\subsection{The neural code as an algebraic ideal}
The code may also be viewed from an algebraic perspective as an ideal.  To encode a neural code $C$ as an ideal, we associate to each neuron an indeterminant $x_i$. The \textit{neural ideal} is defined by
$$J_C = \{f \in \mathbb{F}_2[x_1, \ldots, x_n] \mid f(c)=0 \text{ for all } c\in C\}\setminus \beta,$$
where $\mathbb{F}_2[x_1, \ldots, x_n]$ is the ring of polynomials with coefficients in $\mathbb{F}_2$, the finite field with 2 elements $\{0,1\}$ and $\beta = \{ x_i(1-x_i)\}_{i=1}^n$ the set of Boolean generators.  

The neural ideal gives us information about the relationships among the receptive fields of the neurons as explained in the following lemma.

\begin{lemma}
\cite[Lemma~4.2]{CurtoRing}
Let $C$ be a neural code and $\mathcal{U}$ a collection of open sets (not necessarily convex) such that $C = C(\mathcal{U})$.  Then for any $\sigma, \tau \subset [n]$ such that $\sigma \cap \tau = \emptyset$,
$$\prod_{i\in\sigma} x_i \prod_{j \in \tau}{(1-x_j)} \in J_{C} \Leftrightarrow \bigcap_{i\in \sigma} U_i \subseteq \bigcup_{j \in \tau} U_j.$$
\end{lemma}

For example, in Figure~\ref{codeEx}, we see $U_3 \subset U_2$, so $x_3(1-x_2) \in J_{C}$.  In the previous lemma, the generators of the neural ideal are given as polynomials of the form $x_\sigma \prod_{i \in \tau}{(1-x_i)}$, which we call \textit{psuedo-monomials} when $\sigma \cap \tau = \emptyset$.  

Viewing the generators of the neural ideal from the perspective of receptive fields, we are able to observe some special properties in $J_{C_{k,m}(n)}$ that result from the periodicity property (Lemma \ref{periodicityProp}).  In particular, since for $i\equiv j \mod{(k+m)}$ we have $x_i = x_j$, we know that $x_i$ and $x_j$ are interchangeable in the elements of the ideal of a $\km$ periodic code.  We define a map $T_{ij}$ between pseudo-monomials, where $T_{ij}(f)$ is $f$ with $x_i$ and $x_j$ interchanged.  For example, $T_{ij}(x_ix_\ell(1-x_j)) = x_jx_\ell(1-x_i)$ and $T_{ij}(x_i(1-x_\ell)) = x_j(1-x_\ell)$.

\begin{lemma}
\label{sameSet}
Let $C = C_{k,m}(n)$ be $\km$ periodic with $n>k+m$.  For any $i,j \in [n]$, $i\equiv j \mod{(k+m)}$ if and only if $x_i(1-x_j) \in J_C$.  Furthermore, if $i \equiv j\mod{(k+m)}$, then for every $f \in J_C$, we also have $T_{ij}(f) \in J_C$.

\end{lemma}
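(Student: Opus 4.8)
The plan is to prove the two claims of Lemma~\ref{sameSet} in sequence, exploiting the periodicity structure of $C_{k,m}(n)$ established in Lemma~\ref{periodicityProp} and the receptive-field interpretation of pseudo-monomials. For the first biconditional, I would argue both directions via the characterization $x_i(1-x_j) \in I_C \Leftrightarrow c_i(1-c_j) = 0$ for every $c \in C$, i.e.\ every codeword with $c_i = 1$ also has $c_j = 1$. If $i \equiv j \pmod{k+m}$, then by Lemma~\ref{periodicityProp} we have $c_i = c_j$ for all $c \in C$, so in particular $c_i = 1 \Rightarrow c_j = 1$, giving $x_i(1-x_j) \in I_C$. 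For the converse, I would prove the contrapositive: if $i \not\equiv j \pmod{k+m}$, I need to exhibit a codeword $c \in C_{k,m}(n)$ with $c_i = 1$ and $c_j = 0$. Since $k, m \neq 0$ is the interesting case (when $k = 0$ or $m = 0$ the statement is vacuous or trivial as $C$ is a single codeword — I should note this), the first $k+m$ bits range over all cyclic permutations of $s_{k,m}$, and reducing $i$ and $j$ modulo $k+m$ to distinct residues $i', j' \in \{1, \dots, k+m\}$, I can pick a cyclic permutation of $s_{k,m}$ that is $1$ in position $i'$ and $0$ in position $j'$ (possible precisely because $i' \neq j'$ and $s_{k,m}$ has both a $1$ and a $0$); the corresponding periodic codeword then has $c_i = c_{i'} = 1$ and $c_j = c_{j'} = 0$ by Lemma~\ref{periodicityProp}.

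For the second claim, that $i \equiv j \pmod{k+m}$ implies $T_{ij}(f) \in I_C$ for every $f \in I_C$, the key observation is that swapping coordinates $x_i \leftrightarrow x_j$ corresponds to the permutation of neurons transposing $i$ and $j$, which by Lemma~\ref{periodicityProp} acts trivially on $C$: every codeword satisfies $c_i = c_j$, so it is fixed under this transposition. Concretely, $f \in I_C$ means $f(c) = 0$ for all $c \in C$; I want $T_{ij}(f)(c) = 0$ for all $c \in C$. Writing $T_{ij}(f)(c_1, \dots, c_n) = f(c_1, \dots, c_j, \dots, c_i, \dots, c_n)$ (with $c_i$ and $c_j$ swapped in the argument), and using $c_i = c_j$, this equals $f(c)$, which is $0$. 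This part is essentially formal once the coordinate-swap-as-permutation picture is set up.

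**The main obstacle** I anticipate is less a conceptual difficulty than a bookkeeping issue: the statement as phrased speaks of "$T_{ij}(f) \in I_C$ for every $f \in I_C$", but $T_{ij}$ was only defined on pseudo-monomials in the text; I should either (a) restrict attention to pseudo-monomial generators, noting that $I_C$ is generated by pseudo-monomials so this suffices, or (b) extend $T_{ij}$ linearly (over $\mathbb{F}_2$) to all polynomials, which is harmless since coordinate permutation is a ring automorphism of $\mathbb{F}_2[x_1, \dots, x_n]$. Either way the argument goes through, but I would flag this and choose the cleaner route — observing that coordinate permutation extends to a ring automorphism fixing $I_C$ setwise whenever the permutation fixes $C$ setwise, which is exactly the content of Lemma~\ref{periodicityProp} here. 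A secondary point worth double-checking is the edge case $n = k+m$, which is excluded by hypothesis ($n > k+m$); I would remark that for $n = k+m$ all of $1, \dots, n$ lie in distinct residue classes so the first claim becomes "$x_i(1-x_j) \in I_C \Leftrightarrow i = j$", which fails to be interesting, explaining the hypothesis.
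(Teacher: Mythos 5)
Your proposal is correct and follows essentially the same route as the paper: the forward direction via the periodicity property (Lemma~\ref{periodicityProp}), the converse by exhibiting a cyclic shift of $s_{k,m}$ that is $1$ at the residue of $i$ and $0$ at the residue of $j$, and the $T_{ij}$ claim from the fact that swapping coordinates $i$ and $j$ fixes every codeword. The paper phrases this in receptive-field language ($U_i \subset U_j$) and only discusses interchangeability in the canonical-form generators, so your direct evaluation argument is if anything slightly cleaner; the one small correction is that for $k=0$ or $m=0$ the biconditional actually fails (it is not merely vacuous or trivial), so the implicit hypothesis $k,m\geq 1$ that both you and the paper rely on is genuinely needed.
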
 
\begin{proof}
($\Leftarrow$) Assume $x_i(1-x_j) \in J_C$, so $U_i \subset U_j$.  Suppose $i \not \equiv j \mod{(k+m)}$, and that $i \equiv \tilde{\imath}\mod{(k+m)}$ and $j \equiv \tilde{\jmath}\mod{(k+m)}$ for $\tilde{\imath}$ and $\tilde{\jmath}$ less than $k+m$.  We can choose a permutation of $s_{k,m}$ such that $\tilde{\imath} = 1$ and $\tilde{\jmath}\neq 1$, so there exists a codeword where $1 = i = \tilde{\imath}\neq \tilde{\jmath} = j$.  Thus, $U_i \not \subset U_j$, a contradiction, so we must have $i \equiv j \mod{(k+m)}$.  ($\Rightarrow$) Assume that $i \equiv j\mod{(k+m)}$.  By Lemma \ref{periodicityProp}, $c_i = c_j$ for all codewords in $C_{k,m}(n)$.  This implies that neuron $i$ and neuron $j$ fire over exactly the same set, so equivalently $U_i \subset U_j$ and $U_j \subset U_i$.  These receptive field relationships correspond to the generators $x_i(1-x_j)$ and $x_j(1-x_i)$.  Moreover, since $U_i$ and $U_j$ are the same set, $x_i$ and $x_j$ are interchangeable in the generators of the canonical form, as occurs under the operation $T_{ij}$. 
\end{proof}

From this result, we are able to define an equivalance relation on $[n]$ from the generators of $J_C$.

\begin{lemma}
Let $C = C_{k,m}(n)$ with $n\geq k+m$.  The relation $i \sim j$ if $x_i(1-x_j) \in J_C$ defines an equivalence relation on $[n]$.
\label{equivRel}
\end{lemma}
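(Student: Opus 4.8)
The plan is to verify directly that $\sim$ is reflexive, symmetric, and transitive, using the concrete reading of $x_i(1-x_j)\in I_C$ as the statement ``every codeword $c\in C$ with $c_i=1$ also has $c_j=1$,'' and invoking Lemma~\ref{sameSet} for the one property that genuinely uses periodicity.

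Reflexivity is immediate: every $c\in C$ is a binary string, so $c_i^2=c_i$ and hence $x_i(1-x_i)=x_i-x_i^2$ vanishes on $C$, giving $x_i(1-x_i)\in I_C$ and $i\sim i$. Transitivity is equally formal and does not use the periodic structure: if $i\sim j$ and $j\sim\ell$, then for any $c\in C$ with $c_i=1$ we successively get $c_j=1$ and then $c_\ell=1$, so $x_i(1-x_\ell)\in I_C$, i.e. $i\sim\ell$.

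The content is in symmetry, and here Lemma~\ref{sameSet} does the work when $n>k+m$: it identifies $i\sim j$ with $i\equiv j\pmod{k+m}$, which is a symmetric condition (and, via the Periodicity Property, also forces $x_j(1-x_i)$ to vanish on $C$). The case $n=k+m$ lies outside the hypotheses of Lemma~\ref{sameSet} and must be treated by hand. If $k=0$ or $m=0$, Proposition~\ref{size} gives $|C|=1$ with the unique codeword constant, so $c_i(1-c_j)=0$ for all $i,j$ and $\sim$ is the total relation on $[n]$, which is symmetric. If $k,m\geq 1$, I would show $i\sim j\Rightarrow i=j$: supposing $i\neq j$, among the $k$ cyclic permutations of $s_{k,m}$ whose block of $1$'s covers coordinate $i$, at most $k-1$ also cover coordinate $j$, so some codeword has $c_i=1$ and $c_j=0$, contradicting $i\sim j$; thus on these codes $\sim$ is equality, which is symmetric. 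Combining the cases gives the lemma.

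The only real obstacle is the degenerate case $n=k+m$ that is not covered by Lemma~\ref{sameSet}; everything else is routine. An alternative that avoids the case split is to prove the uniform statement ``$x_i(1-x_j)\in I_C$ implies $c_i=c_j$ for every $c\in C$'' (trivial when $k=0$ or $m=0$, and otherwise obtained from the same cyclic-window observation together with the Periodicity Property), from which symmetry is immediate and the argument parallels the proof of Lemma~\ref{sameSet}.
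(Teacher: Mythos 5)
Your proof is correct, and its substantive step---symmetry---goes through the same door as the paper's proof, namely Lemma~\ref{sameSet}, which converts $x_i(1-x_j)\in I_C$ into the symmetric condition $i\equiv j \pmod{k+m}$. Two differences are worth noting. First, you handle reflexivity and transitivity purely formally, reading $x_i(1-x_j)\in I_C$ as ``every $c\in C$ with $c_i=1$ has $c_j=1$''; the paper instead routes transitivity through Lemma~\ref{sameSet} as well, so your version makes explicit that the relation is a preorder for an arbitrary code and that periodicity is only needed for symmetry. Second, and more usefully, you notice that Lemma~\ref{sameSet} is stated only for $n>k+m$ while the present lemma allows $n\geq k+m$; the paper's proof silently applies Lemma~\ref{sameSet} at the boundary, whereas you treat $n=k+m$ by hand: when $k=0$ or $m=0$ the code is a single constant codeword and $\sim$ is the total relation, and when $k,m\geq 1$ your window argument (for $i\neq j$, some cyclic permutation of $s_{k,m}$ has a $1$ at coordinate $i$ and a $0$ at coordinate $j$) shows $\sim$ is equality, which is symmetric. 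That case analysis is correct and closes a small gap in the paper's write-up; your proposed uniform statement that $x_i(1-x_j)\in I_C$ forces $c_i=c_j$ for all $c\in C$ also holds in every case and would give an even cleaner route to symmetry.
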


\begin{proof}
We trivially have $x_i(1-x_i) \in J_C$, so $\sim$ is reflexive.  By Lemma~\ref{sameSet}, if $x_i(1-x_j) \in J_C$, then $i \equiv j \mod{(k+m)}$, so we also have $j \equiv i \mod{(k+m)}$, giving us $x_j(1-x_i) \in \CF(C)$ and $\sim$ is symmetric.  Again applying Lemma~\ref{sameSet}, if $x_i(1-x_j) \in J_C$ and $x_j(1-x_\ell)\in J_C$, then $i \equiv j \mod{(k+m)}$ and $j \equiv \ell \mod{(k+m)}$, which implies $i \equiv \ell \mod{(k+m)}$, so we must have $x_i(1-x_\ell) \in J_C$, giving us the transitivity of $\sim$.  
\end{proof}  

Observe that this equivalence relation is not true for a general receptive field code.  In our example from Figure~\ref{codeEx}, $x_3(1-x_2) \in J_{C(\mathcal{U})}$ but $x_2(1-x_3)$ is not.

We also observe that we can find the neural ideal of the $\km$ periodic code for $m>k$ from the neural ideal of the $\km$ periodic code with $k\leq m$.

\begin{lemma}
If $x_{\sigma}\prod_{i \in \tau}(1-x_i)\in J_{C_{k,m}(n)}$, then $x_{\tau}\prod_{j \in \sigma}(1-x_j)\in J_{C_{m,k}(n)}$.
\label{flip}
\end{lemma}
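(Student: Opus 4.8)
The plan is to exploit the elementary observation that bitwise complementation carries $C_{k,m}(n)$ bijectively onto $C_{m,k}(n)$, and then translate ideal membership into a statement purely about supports. First I would establish the complement correspondence: for a binary string $c$, write $\bar c$ for its bitwise complement. Since $\overline{s_{k,m}} = 0^k1^m$ has exactly the same set of cyclic permutations as $s_{m,k} = 1^m0^k$, a length-$(k+m)$ string $t$ is a cyclic permutation of $s_{k,m}$ if and only if $\bar t$ is a cyclic permutation of $s_{m,k}$. Moreover $t$ is a $(k+m)$-substring of $c$ precisely when $\bar t$ is a $(k+m)$-substring of $\bar c$. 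Hence, by the characterization in Definition~\ref{codeword} (equivalently Lemma~\ref{equivChar}, part 2), $c \in C_{k,m}(n)$ if and only if $\bar c \in C_{m,k}(n)$. As complementation is an involution on binary strings, it restricts to a bijection $C_{k,m}(n) \leftrightarrow C_{m,k}(n)$, with $\supp(\bar c) = [n]\setminus\supp(c)$.

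Next I would reformulate membership in the neural ideal. For a product $x_\sigma\prod_{i\in\tau}(1-x_i)$ and a codeword $c$, evaluation at $c$ yields $1$ exactly when $\sigma \subseteq \supp(c)$ and $\tau \cap \supp(c) = \emptyset$, and $0$ otherwise. Therefore $x_\sigma\prod_{i\in\tau}(1-x_i) \in I_{C_{k,m}(n)}$ if and only if no $c \in C_{k,m}(n)$ satisfies $\sigma \subseteq \supp(c)$ and $\tau \cap \supp(c) = \emptyset$; likewise $x_\tau\prod_{j\in\sigma}(1-x_j) \in I_{C_{m,k}(n)}$ if and only if no $d \in C_{m,k}(n)$ satisfies $\tau \subseteq \supp(d)$ and $\sigma \cap \supp(d) = \emptyset$.

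Then I would argue by contraposition. Suppose $x_\tau\prod_{j\in\sigma}(1-x_j) \notin I_{C_{m,k}(n)}$, and pick $d \in C_{m,k}(n)$ with $\tau \subseteq \supp(d)$ and $\sigma\cap\supp(d) = \emptyset$. Since $\bar d \in C_{k,m}(n)$ and $\supp(\bar d) = [n]\setminus\supp(d)$, the condition $\tau\subseteq\supp(d)$ gives $\tau\cap\supp(\bar d) = \emptyset$, and $\sigma\cap\supp(d)=\emptyset$ gives $\sigma\subseteq\supp(\bar d)$. Thus $x_\sigma\prod_{i\in\tau}(1-x_i)$ does not vanish on $\bar d$, so it is not in $I_{C_{k,m}(n)}$; this is exactly the contrapositive of the claim. (If $\sigma\cap\tau\neq\emptyset$, both polynomials are identically zero and hence lie in every ideal, so this degenerate case is immediate; otherwise both are genuine pseudo-monomials.)

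I do not anticipate a serious obstacle. The only point needing care is the verification that complementation really does interchange the two families of periodic codes, including when $n$ is not a multiple of $k+m$, where one cannot simply rotate a fundamental string globally but must instead invoke the ``every $(k+m)$-substring is a cyclic permutation'' characterization of Definition~\ref{codeword}; once that is in hand, the remainder is a routine unwinding of the definition of $I_C$.
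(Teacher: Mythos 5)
Your proof is correct and takes essentially the same route as the paper: both rest on the observation that bitwise complementation carries $C_{k,m}(n)$ onto $C_{m,k}(n)$, which swaps the roles of the factors $x_i$ and $(1-x_i)$ in the pseudo-monomial. Your writeup just spells out the support bookkeeping and the contrapositive more explicitly than the paper's terse argument does.
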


\begin{proof}
To form the neural ideal, we take the set of functions that evaluate to zero on all codewords in the code.  Given a codeword $c \in C_{k,m}(n)$, there is a corresponding codeword $c'\in C_{m,k}(n)$ such that $c_i \neq c'_i$ for all $i$.  This implies that any function that evaluates to 1 on all codewords in $C_{k,m}(n)$ evaluates to 0 on some codeword in $C_{m,k}(n)$.  
\end{proof}

Observe that the $m$-$k$ periodic code can be formed from the $\km$ periodic code by flipping every bit in every codeword.  Lemma \ref{flip} shows that combinatorially the information represented by bits which are 1's and bits which are 0's has  a certain equivalence.  Yet, this information is not equivalent topologically.  For example, for all $x>1$, $C_{1, x}(1+x)$ is convex but $C_{x, 1}(1+x)$ is not in general.  This implies that the information represented by 1's in a code is fundamentally different than that represented by 0's.

In order to compare different codes, it is convenient to use a convention to represent the ideal of a code. In their work, Curto et al. \cite{CurtoRing} develop an algorithm which allows the neural ideal to be expressed in \textit{canonical form}.

\begin{definition}
Let $C$ be a neural code and $J_C$ its neural ideal.  The \textit{canonical form} of the neural ideal is the set of all minimal pseudo-monomial elements in $J_C$, where an element $f \in J_C$ is \textit{minimal} if $f \neq gh$ for any pseudo-monomial $g \in J_C$ with $\deg(g) < \deg(f)$ and some $h \in \mathbb{F}_2[x_1, \ldots, x_n]$.   
\end{definition}

From this canonical form, a description of the receptive field structure can be extracted from knowledge only of the code \cite{CurtoRing}.  The canonical form of the code in Figure~\ref{codeEx} is given by $\{x_3(1-x_2)\}$, corresponding to the receptive field relationship $U_3 \subset U_2$.

\subsection{The canonical form of $\km$ periodic codes}
One question of algebraic interest is whether the canonical forms of $\km$ periodic codes have any significant properties.  In particular, can the canonical form be used to detect whether a code is periodic?  In this section, we first give the canonical form for a periodic code on $k+m$ neurons.  We then prove several lemmas which extend these results from codes on $k+m$ neurons to codes on $n$ neurons, allowing us to present the canonical form of any $\km$ periodic code in Theorem~\ref{fullCF}.

We first introduce the definition of the \textit{interval mod n} between two indices, which will be useful in our formulation of the canonical form of a periodic code.  Recall our notation, 

$$\sigma_{i,j}(n) = \begin{cases}
\{\ell \mid i \leq \ell \leq j\} & \text{if } i\leq j\\
\{\ell \mid i \leq \ell \leq n\} \cup \{\ell \mid 1\leq \ell \leq j\} & \text{if } i>j
\end{cases}
$$

\begin{definition}
The \textit{interval mod n} between indices $i$ and $j$, denoted $\Int[i,j]$, is the set $\sigma_{i,j}(n)$ or $\sigma_{j,i}(n)$, whichever is smaller.  If $|\sigma_{i,j}(n)| = |\sigma_{j,i}(n)|$, we choose $\Int[i,j] = \sigma_{i,j}(n)$ such that $i<j$ by convention.  
\end{definition}

For example, we have $\text{Int}_5[1,2] = \{1, 2\}$, and $\text{Int}_5[1,5] = \{1, 5\}$.

For simplicity, we include the zeros codeword when we give the general structure of the canonical form.  The addition of the zeros codeword removes generators of the form $\prod_{i\in\tau}(1-x_i)$ with no changes to any of the other generators.  This observation in combination with our interval notation allows us to define four natural classes of pseudo-monomial generators of the canonical form of the periodic code $C_{k,m}(n)$.  We define

$$A_1 = \{x_ix_j \mid k<|\Intkm[i,j]|\}.$$

The set $A_1$ consists of generators of the form $x_ix_j$.  A generator $x_ix_j$ corresponds to $U_i \cap U_j = \emptyset$.  We know that the receptive field of two neurons intersect if and only if they both fire in the same codeword.  In $C_{k,m}(k+m)$, neurons $i$ and $j$ only cofire if $|\Intkm[i,j]|\leq k$, so for $U_i \cap U_j = \emptyset$, $|\Intkm[i,j]|>k$.  Thus, $A_1$ consists of generators of the neural ideal. We define

$$A_2 = \{x_ix_j(1-x_z) \mid z\in \Intkm[i,j] \text{ and } k \geq |\Intkm[i,j]|\}$$
and
$$A_3 = \{x_z(1-x_i)(1-x_j) \mid z\in \Intkm[i,j] \text{ and } k \geq |\Intkm[i,j]|\}.$$

We observe that $A_2$ also consists of generators of the neural ideal.  If both neuron $i$ and $j$ are firing and $|\Intkm[i,j]| \leq k$, then any neuron contained in $\Intkm[i,j]$ must also fire or there would be a band of firing neurons of size less than $k$.  Analogously, $A_3$ also consists of generators of the neural ideal since if both neuron $i$ and $j$ are not firing, then any neuron contained in $\Intkm[i,j]$ must also not fire or there would be a band of firing neurons of size less than $k$. We define

\begin{align*}
A_4 = &\{x_ix_jx_z \mid z \notin \Intkm[i,j], j\notin \Intkm[i,z], i\notin \Intkm[j,z],\\
&\text{ and } k \geq \max (|\Intkm[i,j]|, |\Intkm[i,z]|, |\Intkm[j,z]|)\}.
\end{align*}

$A_4$ also consists of generators of the neural ideal.  A generator $x_ix_jx_z$ corresponds to $U_i \cap U_j \cap U_z = \emptyset$.  For this generator to be minimal, we must have that the pairwise intersections are nontrivial, so $|\Intkm[i,j]|\leq k$, $|\Intkm[i,z]|\leq k$, and $|\Intkm[j,z]|\leq k$, but for the triple intersection to be trivial, we must have that the third vertex is not contained in these pairwise intervals.  Using these sets, we can construct the canonical form of the $\km$ periodic code on $k+m$ neurons. 

\begin{proposition}

Let $k\leq m$.  The canonical form of a $\km$ periodic code on $k+m$ neurons is given by
$$\CF(C_{k,m}(k+m) \cup \{\mathbf{0}\}) = A_1 \cup A_2 \cup A_3 \cup A_4.$$

\label{canonicalForm}
\end{proposition}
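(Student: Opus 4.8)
The plan is to translate everything through the receptive-field dictionary and then prove the two inclusions $A_1\cup A_2\cup A_3\cup A_4\subseteq\CF(C)$ and $\CF(C)\subseteq A_1\cup A_2\cup A_3\cup A_4$ separately; the first is bookkeeping, the second is the substance. Set $C=C_{k,m}(k+m)\cup\{\mathbf 0\}$. Its codeword supports are $\emptyset$ together with the $k+m$ \emph{$k$-arcs} $\sigma_{i,i+k-1}(k+m)$ on the cyclically ordered set $[k+m]$. By the receptive-field lemma recalled above, a pseudo-monomial $x_\sigma\prod_{i\in\tau}(1-x_i)$ (with $\sigma\cap\tau=\emptyset$) lies in $I_C$ precisely when no codeword support contains $\sigma$ and avoids $\tau$, i.e.\ when \emph{every $k$-arc containing $\sigma$ meets $\tau$}; the codeword $\mathbf 0$ shows this fails whenever $\sigma=\emptyset$, so every element of $\CF(C)$ has $\sigma\ne\emptyset$. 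I will also use constantly that $\rho\subseteq[k+m]$ is a face of $\Delta$ iff it lies in a $k$-arc iff its complement contains $m$ cyclically consecutive vertices; in particular $\{i,j\}$ is a face iff $|\Int[i,j]|\le k$, and no $x_i(1-x_j)$ with $i\ne j$ lies in $I_C$, since the $k$-arcs through a vertex meet only in that vertex.

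For soundness I check, for each of $A_1,\dots,A_4$, both that it lies in $I_C$ — which is exactly the informal argument stated just before the proposition, re-read through $k$-arcs (e.g.\ a $k$-arc containing $i$ and $j$ with $|\Int[i,j]|\le k$ must contain the whole shorter interval, which gives $A_2$, and dually, using $\mathbf 0$, $A_3$; $A_1$ and $A_4$ come out the same way via the face criterion) — and that it is minimal, i.e.\ that none of its (at most seven) proper pseudo-monomial divisors lies in $I_C$. For the latter the needed facts are: $x_i\notin I_C$ (every vertex sits in a $k$-arc), $1-x_i\notin I_C$ (it fails at $\mathbf 0$), $x_i(1-x_j)\notin I_C$ for $i\ne j$ (above), and $x_ix_j\in I_C$ only when $|\Int[i,j]|>k$ (the face criterion). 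Feeding these into the divisor check settles minimality for all four families.

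For completeness, take a minimal $f=x_\sigma\prod_{i\in\tau}(1-x_i)\in I_C$, so $\sigma\ne\emptyset$; minimality produces, for each $v\in\sigma$, a $k$-arc $A_v\supseteq\sigma\setminus\{v\}$ missing $\tau$, and for each $w\in\tau$, a $k$-arc $A_w\supseteq\sigma\cup\{w\}$ missing $\tau\setminus\{w\}$ (so $\sigma$ lies in a $k$-arc, and $|\sigma|\le k-1$ when $\tau\ne\emptyset$). If $\tau=\emptyset$, then $\sigma$ is a minimal non-face of $\Delta$: with $t=|\sigma|$ cyclic gaps $g_1,\dots,g_t$ of $\sigma$ (so $\sum g_s=k+m-t$, each $g_s\le m-1$), the requirement that deleting each $v$ produce a gap $\ge m$ says the two gaps flanking $v$ sum to $\ge m-1$, and summing over $v$ gives $2(k+m-t)\ge t(m-1)$, whence $t\le 2(k+m)/(m+1)<4$ since $k\le m$; as $t=1$ is impossible ($x_v\notin I_C$), $t\in\{2,3\}$, and the non-face/minimality conditions unwind to precisely the interval inequalities of $A_1$ ($t=2$) and $A_4$ ($t=3$). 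If $\tau\ne\emptyset$, I first show $|\sigma|\le 2$: for $|\sigma|\ge 3$ let $e_1,e_2$ be the endpoints of the interval hull of $\sigma$; the $k$-arcs containing $\{e_1,e_2\}$ are exactly those containing $\sigma$, so $x_{\{e_1,e_2\}}\prod_{i\in\tau}(1-x_i)$ lies in $I_C$ and has smaller degree — unless $\tau$ meets the interior of the hull at a vertex $w$, in which case $x_\sigma(1-x_w)\in I_C$ is a smaller-degree divisor whenever $|\tau|\ge 2$, forcing $|\tau|=1$ and then $|\sigma|=2$ by minimality in $\sigma$. With $|\sigma|\le 2$: when $|\sigma|=1$, say $\sigma=\{z\}$, every element of $\tau$ must lie in the length-$(2k-1)$ window about $z$, and sweeping the $k$ arcs through $z$ shows $f\in I_C$ forces a $\tau$-vertex strictly on each side of $z$ while minimality forbids two on a side (the nearer one dominates), so $\tau=\{i,j\}$ with $i,j$ on opposite sides of $z$, and the condition for $x_z(1-x_i)(1-x_j)\in I_C$ is exactly that of $A_3$; when $|\sigma|=2$, say $\sigma=\{i,j\}$, either $\tau$ meets the hull interior and the same domination argument forces $\tau=\{w\}$ (giving $A_2$), or $\tau$ avoids the hull interior, in which case writing $h=|\Int[i,j]|$ and the excesses of the two forced $\tau$-vertices beyond the hull as $d,d'$, one computes $f\in I_C\Leftrightarrow d+d'\le k-h+1$ whereas minimality (deleting an endpoint of $\sigma$) needs $d+d'\ge k-h+2$, a contradiction ruling this subcase out. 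This exhausts all cases and gives the claimed equality.

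The one genuinely delicate step is the $\tau\ne\emptyset$ branch: the degree bounds $|\sigma|\le 2$, $|\tau|\le 2$ and the extraction of the interval conditions all rest on carefully tracking how the minimality arcs $A_v,A_w$ constrain the cyclic positions of the vertices of $\sigma$ and $\tau$ relative to one another, and it is exactly here (together with the gap count when $\tau=\emptyset$) that the sparsity hypothesis $k\le m$ is used.
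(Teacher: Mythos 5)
Your route to completeness is genuinely different from the paper's: you classify every minimal pseudo-monomial of $I_{C}$ directly from the receptive-field membership criterion (``every $k$-arc containing $\sigma$ meets $\tau$''), via a case analysis on $(\sigma,\tau)$, whereas the paper proves soundness and minimality of the four families and then argues completeness by checking that the proposed generators vanish on no binary string outside $C_{k,m}(k+m)\cup\{\mathbf{0}\}$. Your strategy is the more direct one for a canonical-form statement, and its structural steps check out: the gap-counting bound $t\le 3$ when $\tau=\emptyset$, the reduction to $|\sigma|\le 2$ when $\tau\ne\emptyset$, the nearest-vertex domination arguments bounding $|\tau|$, and the contradiction eliminating the subcase where $\tau$ avoids the hull are all correct, and you correctly locate where $k\le m$ enters.

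There is, however, a concrete failure at the step where you assert that the $|\sigma|=1$ case ``unwinds to exactly the condition of $A_3$.'' Carrying out your own sweep of the $k$ arcs through $z$, with $i$ at distance $a$ and $j$ at distance $b$ on opposite sides of $z$, gives $x_z(1-x_i)(1-x_j)\in I_C$ if and only if $a+b\le k$, i.e.\ $|\Int[i,j]|\le k+1$; and since none of its proper pseudo-monomial divisors lies in $I_C$, every such element is minimal. The family $A_3$ in the statement requires $|\Int[i,j]|\le k$, so your classification in fact produces minimal elements outside $A_1\cup A_2\cup A_3\cup A_4$ as written: for instance $x_2(1-x_1)(1-x_3)$ is a minimal element of $I_{C_{2,3}(5)\cup\{\mathbf{0}\}}$ even though $|\Int[1,3]|=3>k$. (These boundary elements also appear in the paper's displayed $\CF(C_{2,6}(8)\cup\{\mathbf{0}\})$, and the paper's completeness argument itself invokes generators such as $x_1(1-x_2)(1-x_n)$, so the mismatch traces to the bound in the definition of $A_3$; with the bound $|\Int[i,j]|\le k+1$ your argument goes through.) As written, though, your identification step papers over this off-by-one, so the proposal does not establish the stated equality: you must either correct the interval bound you are matching against or exhibit why the $|\Int[i,j]|=k+1$ elements are excluded, and the latter is impossible.
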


\begin{proof}
From the discussion above, we have seen that all of the described sets must be generators of the neural ideal.  It remains to show that this set is minimal and that there are no other generators.  It is clear that none of the generators in $A_2$ or $A_4$ are multiples of the generators of $A_1$ since in $A_1$, we have $|\Intkm[i,j]|>k$ and in $A_2$ and every pair in $A_4$, the interval has size less than or equal to $k$.  Thus the generators in $A_1$, $A_2$, and $A_4$ are minimal.  To see that $A_3$ is minimal, we note that there can be no generators of the form $x_i(1-x_j)$ corresponding to $U_i \subset U_j$ since the cyclic nature of the code makes it so that no neuron always fires when another is firing.  We show that this set is complete by showing that it generates no codewords not in $C_{k,m}(k+m) \cup \{\mathbf{0}\}$.  The all-zeros codeword clearly satisfies the conditions of the minimal generators and is included in $C$, so any other codeword must have a 1 at some bit.  Let $c_1 \cdots c_n$ be a binary string which vanishes on all of the generators.  Without loss of generality, let $c_1 = 1$.  We can choose $c_2$ to be 0 or 1.  If we choose it to be zero, then we must choose $c_{n-k+2}= \cdots =c_n = 1$ to vanish on the generators in $A_2$ and $A_3$.  We also must have that all other bits are 0 to vanish on the generators in $A_1$.  Thus, we generate a $\km$ periodic codeword.  If we choose $c_2 = 1$, then we can choose $c_3$ to be 0 or 1, and if we choose it to be 0, we introduce analogous restrictions on the remaining bits in the codewords as when we chose $c_2 = 0$, so we form another $\km$ periodic codeword but shifted by one bit.  Thus, whenever we choose $c_j = 0$ for $j<k$, given that $c_1=1$, we have fixed the remaining bits of the code so that we have a $\km$ periodic codeword.  Thus, we do not generate any codewords other than those in $C$.  Therefore, the set $A_1 \cup A_2 \cup A_3 \cup A_4$ is complete and consists of minimal generators of the neural ideal of $C_{k,m}(k+m)$.
\end{proof}

Given that the convex closure is closely related to the original code by the union with another periodic code, it is natural to ask if we can also find the canonical form of the convex closure.  To do this we will use Algorithm 1 of \cite{YoungsAlgo}, which describes a method to update the canonical form of a code, $\CF(C)$ when a new codeword $c$ is added.  For $f \in \CF(C)$, if $f(c) = 0$, add $f$ to a set $L$, and otherwise, add $f$ to a set $M$.  For every $g \in M$ and every $c_i$, define $h = (x_i-c_i)g$.  If $h$ is not a multiple of an element of $L$ and $g$ is not a multiple of $(x_i-c_i-1)$, add $h$ to a set $W$.  The canonical form of the new code is given by $\CF(C \cup c) = L \cup W$.

We also require defining a subset of the generators in $A_3$ which are not generators of the neural ideal of the closure,
$$\tilde{A_3} = \{x_z(1-x_i)(1-x_j) \mid z\in \Intkm[i,j] \text{ and } k = |\Intkm[i,j]|\}.$$

\begin{lemma}
Let $C = C_{k,m}(k+m)$ for $k\leq m$.  The canonical form of the convex closure $\bar{C} \cup \{\mathbf{0}\}$ is
$$\CF(\bar{C} \cup \{\mathbf{0}\}) = \CF(C \cup \{\mathbf{0}\})\setminus \tilde{A_3}.$$
\label{closureCF}
\end{lemma}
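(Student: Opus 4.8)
The plan is to obtain $\CF(\bar C\cup\{\mathbf 0\})$ from the explicit description of $\CF(C\cup\{\mathbf 0\})$ in Proposition~\ref{canonicalForm} by tracking which pseudo-monomials survive once the extra codewords of the convex closure are adjoined. If $k\le 1$ this is immediate: $\tilde A_3=\emptyset$ (a generator in $\tilde A_3$ requires a vertex strictly between $i$ and $j$, hence $k\ge 2$), and $\bar C=C$ by Theorem~\ref{subcompletionTheorem}, so both sides coincide. So assume $1<k\le m$. By Theorem~\ref{subcompletionTheorem} we have $\bar C=C\cup C_{k-1,m+1}(k+m)$, and by Proposition~\ref{canonicalForm}, $\CF(C\cup\{\mathbf 0\})=A_1\cup A_2\cup A_3\cup A_4$. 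Since $\tilde A_3\subseteq A_3$ is disjoint from $A_1,A_2,A_4$, it suffices to prove $\CF(\bar C\cup\{\mathbf 0\})=S$, where $S:=A_1\cup A_2\cup A_4\cup(A_3\setminus\tilde A_3)$.

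First I would check $S\subseteq I_{\bar C\cup\{\mathbf 0\}}$, i.e.\ that each generator in $S$ still vanishes on every codeword of $C_{k-1,m+1}(k+m)$. The generators in $A_1$, $A_2$ and $A_4$ only record upper bounds on how far apart cofiring neurons may lie and which neurons between two cofiring ones are forced to fire; since a firing band of $C_{k-1,m+1}(k+m)$ is \emph{shorter} (length $k-1$) than one of $C$ (length $k$), the band-counting arguments used to establish these generators in Proposition~\ref{canonicalForm} apply unchanged. For $x_z(1-x_i)(1-x_j)\in A_3\setminus\tilde A_3$ the interval $\Int[i,j]$ is strictly below the extremal length allowed in $A_3$, so even a length-$(k-1)$ band cannot be squeezed strictly between the silent vertices $i$ and $j$; hence such a generator vanishes on $C_{k-1,m+1}(k+m)$ as well. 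Dually, each $x_z(1-x_i)(1-x_j)\in\tilde A_3$ does \emph{not} vanish on $\bar C\cup\{\mathbf 0\}$: the codeword of $C_{k-1,m+1}(k+m)$ whose length-$(k-1)$ firing band occupies exactly the vertices strictly between $i$ and $j$ has $z$ firing and $i,j$ silent, so the generator evaluates to $1$ there. This is precisely why $\tilde A_3$ must be deleted. Finally, minimality is inherited downward: since $I_{\bar C\cup\{\mathbf 0\}}\subseteq I_{C\cup\{\mathbf 0\}}$, a pseudo-monomial of degree less than $\deg f$ that divides some $f\in S$ and lies in $I_{\bar C\cup\{\mathbf 0\}}$ would also lie in $I_{C\cup\{\mathbf 0\}}$, contradicting minimality of $f$ there. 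Hence $S\subseteq\CF(\bar C\cup\{\mathbf 0\})$.

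For the reverse inclusion I would compute the common zero set of $S$ inside $\{0,1\}^{k+m}$. The relations in $A_1$ force the firing neurons of such a binary string onto a cyclic arc of length at most $k$; together, $A_1$ and $A_2$ force this firing set to be a single contiguous band, since a silent vertex lying strictly between two firing vertices always violates an $A_1$- or an $A_2$-relation; and the relations in $A_3\setminus\tilde A_3$ forbid a firing band of length at most $k-2$, while $A_1$ continues to forbid length at least $k+1$, leaving band length $0$, $k-1$, or $k$. Thus the common zero set of $S$ is exactly $\{\mathbf 0\}\cup C_{k-1,m+1}(k+m)\cup C_{k,m}(k+m)=\bar C\cup\{\mathbf 0\}$. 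Since $S$ is a set of pseudo-monomials whose common zero set is $\bar C\cup\{\mathbf 0\}$, the set $S$ together with the polynomials $x_i^2-x_i$ generates $I_{\bar C\cup\{\mathbf 0\}}$, so every pseudo-monomial of that ideal is a multiple of a member of $S$; if moreover it is minimal, it equals that member. Therefore $\CF(\bar C\cup\{\mathbf 0\})\subseteq S$, and combined with the previous paragraph, $\CF(\bar C\cup\{\mathbf 0\})=S=\CF(C\cup\{\mathbf 0\})\setminus\tilde A_3$.

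I expect the main obstacle to be the zero-set computation in the third paragraph: it amounts to re-running, under the relaxed constraints, the band-structure case analysis from the proof of Proposition~\ref{canonicalForm}, and verifying carefully that dropping exactly $\tilde A_3$ makes the band length $k-1$ newly admissible while no strictly shorter band becomes admissible. A subsidiary point is making sure the deletion introduces no new minimal pseudo-monomials; the zero-set argument above handles this cleanly, which is why I would use it rather than iterating Algorithm~1 of \cite{YoungsAlgo} over the $k+m$ codewords of $C_{k-1,m+1}(k+m)$ one at a time, where the intermediate canonical forms are awkward to control.
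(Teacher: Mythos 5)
Your first two steps are fine: each element of $S=A_1\cup A_2\cup A_4\cup(A_3\setminus\tilde{A_3})$ vanishes on the new codewords of $C_{k-1,m+1}(k+m)$, and minimality of these elements in $I_{\bar{C}\cup\{\mathbf{0}\}}$ is indeed inherited from minimality in $I_{C\cup\{\mathbf{0}\}}$ because $I_{\bar{C}\cup\{\mathbf{0}\}}\subseteq I_{C\cup\{\mathbf{0}\}}$; this gives $S\subseteq\CF(\bar{C}\cup\{\mathbf{0}\})$. The gap is in the reverse inclusion. From the fact that the common zero set of $S$ is exactly $\bar{C}\cup\{\mathbf{0}\}$ you may conclude that $S$ together with the Boolean relations generates $I_{\bar{C}\cup\{\mathbf{0}\}}$, but it does \emph{not} follow that every pseudo-monomial in that ideal is a multiple of a member of $S$, nor that every minimal pseudo-monomial lies in $S$. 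A generating set of pseudo-monomials---even one consisting entirely of minimal elements---can omit minimal pseudo-monomials of the ideal; this is the familiar phenomenon of non-essential prime implicants. Concretely, for the code $\{010,101\}$ on three neurons, the set $\{(1-x_1)(1-x_2),\ x_2x_3,\ x_1(1-x_3)\}$ consists of minimal pseudo-monomials of the vanishing ideal and cuts out exactly this code, yet the canonical form also contains $(1-x_1)x_3$, $x_1x_2$, and $(1-x_2)(1-x_3)$, none of which is a multiple of an element of the generating set. So your zero-set computation cannot by itself exclude the possibility that $\CF(\bar{C}\cup\{\mathbf{0}\})$ contains new minimal pseudo-monomials (of higher degree, created by adjoining the codewords of $C_{k-1,m+1}(k+m)$) beyond $S$.

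This is exactly the bookkeeping the paper's proof performs and that you chose to avoid: it applies the single-codeword update algorithm of \cite{YoungsAlgo}, under which the generators failing on a new codeword (those in $\tilde{A_3}$) move to the set $M$, and the only candidate new generators are the products $h=(x_\ell-c_\ell)\,g$ with $g\in\tilde{A_3}$; the paper checks that every such $h$ is a multiple of a generator in $L=(A_1\cup A_2\cup A_3\cup A_4)\setminus\tilde{A_3}$, so the set $W$ of new generators is empty and $\CF(\bar{C}\cup\{\mathbf{0}\})=L$. To repair your argument without that algorithm, you would need a direct combinatorial claim of the form: any pseudo-monomial $x_\sigma\prod_{i\in\tau}(1-x_i)$ vanishing on all of $\bar{C}\cup\{\mathbf{0}\}$ is divisible by some element of $S$ (argued from the band structure of the closure codewords), which is precisely the step currently missing.
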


\begin{proof}
We have $\bar{C} \cup \{\mathbf{0}\} = C_{k,m}(k+m) \cup C_{k-1, m+1}(k+m) \cup \{\mathbf{0}\}$.  Algorithm 1 of \cite{YoungsAlgo} allows us to determine the canonical form of a code that is modified by adding a single codeword.  Let $c \in C_{k-1, m+1}(k+m)$.  Since $c$ vanishes on every generator in $A_1$, $A_2$, and $A_4$, we have $L = A_1 \cup A_2 \cup A_4$.  Since $C_{k-1, m+1}(k+m)$ is periodic, we know that $c$ also vanishes on every generator in $A_3\setminus\tilde{A_3}$.  It remains to show that for every $f \in \tilde{A_3}$, there exists $c \in C_{k-1, m+1}(k+m)$ which does not vanish on $f(c)$.  Note that since $|\Intkm[i,j]| = k$, we have $j = i+k-1$.  Take the codeword $\sigma = \sigma_{i, i+k-2}(k+m)$.  We have $f(\sigma) = 1$.  So we add $x_z(1-x_i)(1-x_j)$ to $M$.  We have $c_\ell = 1$ for $\ell \in \Int[i, i+k-2]$, but $x_z(1-x_i)(1-x_j)(1-x_\ell)$ is a multiple of a generator in $A_3\setminus\tilde{A_3}\subset L$.  We have $c_\ell = 0$ for $\ell \not \in \Intkm[i, i+k-2]$.  For $\ell = i+k-1$, $x_zx_\ell(1-x_i)(1-x_j)$ is a multiple of a generator in $A_2 \subset L$, and otherwise we have $x_zx_\ell(1-x_i)(1-x_j)$ is a multiple of a generator in $A_1 \subset L$.  Thus, we have $\CF(\bar{C} \cup \{\mathbf{0}\}) = L = (A_1\cup A_2 \cup A_3 \cup A_4)\setminus \tilde{A_3} = \CF(C \cup \{\mathbf{0}\})\setminus \tilde{A_3}$.
\end{proof}

Thus, we have the canonical form of a $\km$ periodic code and its closure on $k+m$ neurons, and we want to extend this to a $\km$ periodic code on $n$ neurons.  In particular, Lemma~\ref{sameSet} gives us the key result that allows us to do so.  Observe that by Lemma~\ref{periodicityProp}, the convex closure also satisfies that $c_i = c_j$ for $i \equiv j \mod{(k+m)}$, so the same lemma allows us to extend the results of Lemma~\ref{sameSet} to the canonical form of the convex closure.  Lemma~\ref{sameSet} also allows us to define an equivalence relation on the generators of the neural ideal of a $\km$ periodic code.

More significantly, Lemma~\ref{sameSet} in combination with Proposition~\ref{canonicalForm} allows us to detect if a code of arbitrary length is periodic, as we will show in the following section.  To do so we will introduce a concept of equivalence of pseudo-monomials.  

\begin{definition}
Let $f$ and $g$ be pseudo-monomials, $f = x_\sigma \prod_{i\in \tau} (1-x_i)$ and $g = x_{\sigma'} \prod_{i\in \tau'}(1-x_i)$.  We say $f\equiv g\mod{a}$ if there exist bijections $s: \sigma \rightarrow \sigma'$ and $t: \tau \rightarrow \tau'$ such that $s(i) \equiv i\mod{a}$ and $t(j) \equiv j\mod{a}$ for some integer $a$.
\end{definition}

Observe that $g\equiv g \mod{(k+m)}$ trivially by taking both $s$ and $t$ as the identity.

We note that the canonical form of the neural ideal fully characterizes the code \cite{CurtoRing}.  If two codes are permutation equivalent, we can similarly permute the indeterminants, $x_i$, that appear in the canonical form, so if one canonical form, $\CF(C)$ can be attained from the other, $\CF(C')$ through a permutation of the indeterminants, we say that the canonical forms are permutation equivalent, denoting this equivalence as $\CF(C) \cong \CF(C')$.  Combining the results of Proposition~\ref{canonicalForm}, Lemma~\ref{closureCF}, and Lemma~\ref{sameSet} allows us to give the canonical form of a $\km$ periodic code on $n$ neurons and that of its convex closure, hence allowing us to determine if any code is permutation equivalent to a periodic code.  We see that $\CF(C_{k,m}(n)\cup \{\mathbf{0}\}) \supset \CF(C_{k,m}(k+m)\cup \{\mathbf{0}\})$.  The canonical form $\CF(C_{k,m}(n)\cup \{\mathbf{0}\})$ also contains the generators which define the equivalence relation, and as a result of this equivalence, contains generators equivalent modulo $k+m$ to the generators of $\CF(C_{k,m}(k+m)\cup \{\mathbf{0}\})$. 

\begin{theorem}
Let $C_{k,m}(n)$ be a $\km$ periodic code on $n$ neurons and $\overline{C_{k,m}(n)}$ be its convex closure.  Define $B = \{x_i(1-x_j) \mid i \equiv j \mod{(k+m)} \text{ and } i\neq j\}$.  Then
\begin{align*}
\CF(&C_{k,m}(n) \cup \{\mathbf{0}\}) =\\
&\{f \mid f\equiv g\mod{(k+m)} \text{ for some } g\in \CF(C_{k,m}(k+m) \cup \{\mathbf{0}\})\}\cup B
\end{align*}
and
\begin{align*}
\CF(&\overline{C_{k,m}(n)} \cup \{\mathbf{0}\}) = \\
&\{f \mid f\equiv g\mod{(k+m)} \text{ for some } g\in \CF(\overline{C_{k,m}(k+m)} \cup \{\mathbf{0}\})\}\cup B.
\end{align*}
Moreover, a code $C$ of length $n$ which does not contain the all-zeros codeword is permutation equivalent to $C_{k,m}(n)$ if and only if 
$$\CF(C \cup \{\mathbf{0}\}) \cong \CF(C_{k,m}(n)\cup \{\mathbf{0}\})$$
for some permutation of the neurons in $C$.  Similarly, $C$ is permutation equivalent to the convex closure if and only if
$$\CF(C \cup \{\mathbf{0}\}) \cong \CF(\overline{C_{k,m}(n)}\cup \{\mathbf{0}\}).$$

\label{fullCF}
\end{theorem}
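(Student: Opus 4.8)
The plan is to establish the two canonical-form identities first and then read off the two permutation-equivalence statements, using that the canonical form of a neural ideal determines the code up to relabeling of neurons \cite{CurtoRing}. Throughout I work under the paper's standing assumption $k\le m$ and with $k\ge 1$, so that $\mathbf 0\notin C_{k,m}(n)$. For the first identity, set $I=I_{C_{k,m}(n)\cup\{\mathbf 0\}}$ and, for $\ell\in[n]$, let $\bar\ell\in[k+m]$ be the residue of $\ell$ modulo $k+m$. By Lemma~\ref{equivChar} and Lemma~\ref{periodicityProp}, restriction to the first $k+m$ coordinates is a bijection from $C_{k,m}(n)\cup\{\mathbf 0\}$ onto $C_{k,m}(k+m)\cup\{\mathbf 0\}$ with $c_\ell=\bar c_{\bar\ell}$. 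Call a pseudomonomial $f=x_\sigma\prod_{i\in\tau}(1-x_i)$ \emph{reduced} if $\sigma$ and $\tau$ each meet every residue class at most once and no $i\in\sigma$, $j\in\tau$ have $\bar i=\bar j$. For a reduced $f$ the displayed bijection yields $f(c)=\bar f(\bar c)$ on every codeword, where $\bar f=x_{\bar\sigma}\prod_{i\in\bar\tau}(1-x_i)$, so $f\in I$ if and only if $\bar f\in\bar I:=I_{C_{k,m}(k+m)\cup\{\mathbf 0\}}$.

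The core step is to show $\CF(I)=B\cup\{\text{reduced }f:\bar f\in\CF(\bar I)\}$. First, $B\subseteq\CF(I)$: by Lemma~\ref{sameSet} each $x_i(1-x_j)\in B$ lies in $I$, it has degree $2$, and neither $x_i$ nor $1-x_j$ lies in $I$ (some codeword has $c_i=1$, some has $c_j=0$), so it is minimal. Second, a minimal $f\notin B$ is reduced: a crossing pair $(i,j)$ makes $x_i(1-x_j)\in B\subseteq I$ a proper pseudomonomial divisor of $f$ unless $f=x_i(1-x_j)\in B$; and if $i\ne i'$ both lie in $\sigma$ (or both in $\tau$) with $\bar i=\bar i'$, then $c_i=c_{i'}$ on every codeword, so deleting $i'$ produces a proper pseudomonomial factor $f'$ of $f$ with $f(c)=f'(c)$ for all relevant $c$, hence $f'\in I$ and $f$ is not minimal. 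Third, for reduced $f$, minimality of $f$ in $I$ is equivalent to minimality of $\bar f$ in $\bar I$: a pseudomonomial is a squarefree product of the distinct irreducibles $\{x_i:i\in\sigma\}\cup\{1-x_i:i\in\tau\}$, so any pseudomonomial dividing it is a subproduct; a proper pseudomonomial factor of $\bar f$ in $\bar I$ then lifts along the residue map to a proper pseudomonomial factor of $f$ in $I$, and conversely a proper pseudomonomial factor of $f$ in $I$ is again reduced and reduces into $\bar I$. Finally, a pseudomonomial $f$ on $[n]$ is reduced with $\bar f\in\CF(\bar I)$ exactly when $f\equiv g\pmod{k+m}$ for some $g\in\CF(\bar I)$, since any bijection $s:\sigma_f\to\sigma_g$ with $s(i)\equiv i\pmod{k+m}$ and $\sigma_g\subseteq[k+m]$ must be the residue map and forces $\sigma_f$ (and likewise $\tau_f$) to meet each class once with no crossing. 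Feeding in $\CF(\bar I)$ from Proposition~\ref{canonicalForm} gives the first identity.

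The second identity is proved verbatim with $\overline{C_{k,m}(n)}\cup\{\mathbf 0\}$ in place of $C_{k,m}(n)\cup\{\mathbf 0\}$: by Lemma~\ref{periodicityProp} the closure still satisfies $c_i=c_j$ whenever $i\equiv j\pmod{k+m}$; writing $\overline{C_{k,m}(n)}=C_{k,m}(n)\cup C_{k-1,m+1}(n)$ (Theorem~\ref{subcompletionTheorem}), restriction to $[k+m]$ is a bijection onto $\overline{C_{k,m}(k+m)}\cup\{\mathbf 0\}$ (the three pieces have distinct Hamming weights for $1<k\le m$); moreover $I_{\overline{C_{k,m}(n)}\cup\{\mathbf 0\}}\subseteq I$, so the implication ``$x_i(1-x_j)$ in the ideal $\Rightarrow i\equiv j$'' still holds, and Lemma~\ref{closureCF} supplies $\CF(\overline{C_{k,m}(k+m)}\cup\{\mathbf 0\})$ in place of Proposition~\ref{canonicalForm}. (For $k=1$ the closure equals the code and the identity is the first one.) For the permutation-equivalence statements: relabeling a code's neurons by a permutation $\pi$ relabels the variables of its neural ideal, hence of its canonical form, and the canonical form determines the code \cite{CurtoRing}; thus $\CF(D_1)\cong\CF(D_2)$ iff $D_1$ and $D_2$ are permutation equivalent. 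Since $\pi(\mathbf 0)=\mathbf 0$ and $\mathbf 0\notin C_{k,m}(n)$, a code $C$ with $\mathbf 0\notin C$ is permutation equivalent to $C_{k,m}(n)$ iff $C\cup\{\mathbf 0\}$ is permutation equivalent to $C_{k,m}(n)\cup\{\mathbf 0\}$: one direction is immediate, and for the other, deleting the $\pi$-fixed codeword $\mathbf 0$ from an equivalence $C\cup\{\mathbf 0\}=\pi(C_{k,m}(n)\cup\{\mathbf 0\})$ returns $C=\pi(C_{k,m}(n))$ precisely because $\mathbf 0\notin C$. Combined with the first identity this gives the criterion $\CF(C\cup\{\mathbf 0\})\cong\CF(C_{k,m}(n)\cup\{\mathbf 0\})$, and the closure criterion follows the same way from the second identity, using $\mathbf 0\notin\overline{C_{k,m}(n)}$.

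I expect the main obstacle to be the third bullet of the core step — equating minimality of a reduced $f$ in $I$ with minimality of $\bar f$ in $\bar I$. The definition of ``minimal'' quantifies over an \emph{arbitrary} cofactor $h$, not merely pseudomonomial cofactors, so one must argue that the relevant factorizations are genuinely pseudomonomial and that lifting and reducing along the residue map preserve both the pseudomonomial structure and a strict drop in degree; this is exactly the place where an index collision within $\sigma$ or $\tau$, or a stray crossing pair, could break the correspondence, and it is the squarefree-product description of pseudomonomials that rules this out.
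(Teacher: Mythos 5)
Your proof is correct and takes essentially the same route as the paper, whose own argument simply asserts that the canonical forms ``follow immediately'' from Proposition~\ref{canonicalForm}, Lemma~\ref{closureCF}, and Lemma~\ref{sameSet}, and that the canonical form determines the code up to permutation. Your reduced-pseudomonomial/UFD divisor argument for transferring minimality between $n$ and $k+m$ neurons, and the careful treatment of the appended all-zeros codeword in the permutation-equivalence step, supply precisely the details the paper leaves implicit; I see no gap.
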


The canonical forms follow immediately from Proposition~\ref{canonicalForm}, Lemma~\ref{closureCF}, and Lemma~\ref{sameSet}. From the results in \cite{CurtoRing}, we know that $\CF(C)$ fully determines $J_C$, which, in turn, fully determines $C$.  Thus, $\CF(C \cup \{\textbf{0}\}) \cong \CF(C_{k,m}(n)\cup \{\textbf{0}\})$ if and only if $C$ and $C_{k,m}(n)$ are permutation equivalent.  Thus, we are able to provide three simple checks to detect that a code is not periodic.

\begin{lemma} 
Let $C$ be a neural code with canonical form $\CF(C \cup \{\mathbf{0}\})$.  If any of the following conditions hold, $C$ is not periodic:  

1.  There exists $f \in \CF(C \cup \{\mathbf{0}\})$ where $f = \prod_{i\in\sigma}x_i \prod_{j\in\tau}(1-x_j)$ such that $|\tau| > 2$ or $|\sigma \cup \tau| > 3$.

2.  For $i \neq j$ and $x_i(1-x_j) \in \CF(C \cup \{\mathbf{0}\})$, there exists $g \in \CF(C \cup \{\mathbf{0}\})$ such that $T_{ij}(g) \notin \CF(C \cup \{\mathbf{0}\})$.

3.  For $i \neq j$ and $j \neq \ell$, both $x_i(1-x_j) \in \CF(C \cup \{\mathbf{0}\})$ and $x_j(1-x_\ell) \in \CF(C \cup \{\mathbf{0}\})$, but $x_i(1-x_\ell) \notin \CF(C \cup \{\mathbf{0}\})$.

\label{easyCheck}
\end{lemma}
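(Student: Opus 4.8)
The plan is to prove the contrapositive: if $C$ is periodic then none of conditions (1)--(3) can hold for $\CF(C\cup\{\mathbf{0}\})$. Throughout I take ``periodic'' to mean permutation equivalent to some $C_{k,m}(n)$ with $k\le m$, which is the regime in which Theorem~\ref{fullCF} supplies an explicit canonical form. The first step is a reduction to the case $C=C_{k,m}(n)$. A relabeling $\phi$ of the neurons induces the ring automorphism $x_i\mapsto x_{\phi(i)}$ of $\mathbb{F}_2[x_1,\dots,x_n]$; it fixes $\{\mathbf{0}\}$, carries $I_{C\cup\{\mathbf{0}\}}$ onto $I_{C_{k,m}(n)\cup\{\mathbf{0}\}}$ and hence $\CF(C\cup\{\mathbf{0}\})$ bijectively onto $\CF(C_{k,m}(n)\cup\{\mathbf{0}\})$, preserves the sizes $|\sigma|,|\tau|$ of the two parts of every pseudo-monomial, and conjugates $T_{ij}$ to $T_{\phi(i)\phi(j)}$. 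Consequently each of (1)--(3) holds for $\CF(C\cup\{\mathbf{0}\})$ exactly when it holds for $\CF(C_{k,m}(n)\cup\{\mathbf{0}\})$, and it suffices to rule out (1)--(3) for the latter. By Theorem~\ref{fullCF} together with Proposition~\ref{canonicalForm}, every generator of $\CF(C_{k,m}(n)\cup\{\mathbf{0}\})$ is either congruent mod $k+m$ to a generator in $A_1\cup A_2\cup A_3\cup A_4$, or lies in $B=\{x_i(1-x_j)\mid i\equiv j\bmod(k+m),\ i\ne j\}$.

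For condition (1): reading off $A_1,\dots,A_4$ and $B$, each generator satisfies $|\tau|\le 2$ and $|\sigma|+|\tau|\le 3$; since the congruence $f\equiv g\bmod(k+m)$ is witnessed by bijections of the support sets it preserves $|\sigma|$ and $|\tau|$, so the same bounds hold for every generator of $\CF(C_{k,m}(n)\cup\{\mathbf{0}\})$, and no $f$ as in (1) can exist. For condition (2): no generator in $A_1,\dots,A_4$ (or its image mod $k+m$) has the form ``one plain variable times one complemented variable'', so $x_i(1-x_j)\in\CF(C_{k,m}(n)\cup\{\mathbf{0}\})$ with $i\ne j$ forces $x_i(1-x_j)\in B$, equivalently $i\equiv j\bmod(k+m)$ (Lemma~\ref{sameSet}); then by Lemma~\ref{periodicityProp} the transposition $(i\,j)$ fixes $C_{k,m}(n)$ and $\{\mathbf{0}\}$ setwise, so the induced ring automorphism---which on pseudo-monomials is $T_{ij}$---fixes the ideal and therefore its set of minimal pseudo-monomials, giving $T_{ij}(g)\in\CF(C_{k,m}(n)\cup\{\mathbf{0}\})$ for all $g$; so no $g$ as in (2) can exist. (Alternatively one invokes the last sentence of Lemma~\ref{sameSet} directly.) For condition (3): the same observation places both $x_i(1-x_j)$ and $x_j(1-x_\ell)$ in $B$, so $i\equiv j$ and $j\equiv\ell$, hence $i\equiv\ell\bmod(k+m)$; by the transitivity in Lemma~\ref{equivRel} this means $x_i(1-x_\ell)\in I_{C_{k,m}(n)}$, and since for $k,m\ge 1$ no neuron of $C_{k,m}(n)$ is constantly active or constantly silent there is no pseudo-monomial of degree $\le 1$ in the ideal, so the degree-$2$ pseudo-monomial $x_i(1-x_\ell)$ (which is a genuine pseudo-monomial precisely when $i\ne\ell$) is minimal and lies in the canonical form---contradicting $x_i(1-x_\ell)\notin\CF(C_{k,m}(n)\cup\{\mathbf{0}\})$.

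I expect no deep obstacle here---the content is essentially that conditions (1)--(3) each name a structural ``defect'' absent from the explicit canonical form of a periodic code---but there are two places that need care. One is the reduction: I must verify that (1)--(3) are genuinely invariant under the relabeling $\phi$, which is why I track that $\phi$ preserves $|\sigma|,|\tau|$ and conjugates $T_{ij}$ correctly. The other is the treatment of boundary cases: the hypotheses of (2) and (3) are vacuous when $n=k+m$ or $k=0$ (the canonical form then contains no generator $x_i(1-x_j)$ with $i\ne j$), and condition (3) must be read with $x_i(1-x_\ell)$ a bona fide pseudo-monomial, i.e.\ $i\ne\ell$; otherwise $x_1(1-x_2),x_2(1-x_1)\in B$ would spuriously trigger (3) on perfectly good periodic codes with $n>k+m$. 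With these provisos, each condition fails for a periodic code, which is the contrapositive we want.
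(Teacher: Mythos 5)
Your proof is correct and takes essentially the same route as the paper, which simply asserts that the lemma follows from the explicit canonical form of a periodic code (Proposition~\ref{canonicalForm}, extended in Theorem~\ref{fullCF}) together with Lemmas~\ref{sameSet} and~\ref{equivRel}; you have merely spelled out those same ingredients, including the routine reduction to the standard labeling. Your caveat that condition~3 must be read with $i \neq \ell$ (else the generators in $B$ would spuriously trigger it when $n > k+m$) is a fair remark about the statement's phrasing, not a gap in the argument.
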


The proof of Lemma~\ref{easyCheck} follows immediately from Theorem~\ref{canonicalForm} and Lemma~\ref{equivRel}.  Observe that this guarantees that if $x_i(1-x_j)$ is in the canonical form, $x_j(1-x_i)$ is also in the canonical form, as required by Lemma~\ref{sameSet}, since either pseudo-monomial is obtained from the other by applying $T_{ij}$.  This lemma provides a simple way to determine when a code is not periodic.  Next, we will show that for arbitrary codes of length $n$ which satisfy these simple criteria, there is still a method which will allow us to determine $k$ and $m$ and hence the permutation equivalence of the code to a periodic code.

\subsection{Identifying periodic codes algebraically}
Our formulation of the canonical form of a $\km$ periodic code requires knowledge of $k$ and $m$.  For a given code, it may not be immediately obvious whether it is periodic as the neurons may have been permuted as we saw in Figure \ref{CyclicPerm}, where 

\begin{align*}
C' = \{10010000, &01001000, 00100100, 00010010, 00001001,\\ 
&10000100, 01000010, 00100001\}
\end{align*} and 
\begin{align*}
\tilde{C} = \{10100000, &01010000, 00101000, 00010100, 00001010,\\
 &00000101, 10000010, 01000001\},
\end{align*} 
which have corresponding canonical forms,
{\small
\begin{align*}
\CF(C' \cup \{\mathbf{0}\}) = \{&x_1x_2,          
x_2x_4,          
x_1x_5,          
x_4x_5,          
x_1x_3,          
x_2x_3,          
x_3x_4,          
x_3x_5,          
x_2x_6,          
x_4x_6,          
x_5x_6,          
x_1x_7,\\          
&x_3x_7,          
x_5x_7,          
x_6x_7,          
x_4(1-x_1)(1-x_7),
x_1x_8,          
x_2x_8,          
x_4x_8,          
x_6x_8,          
x_7x_8,\\          
&x_5(1-x_2)(1-x_8),
x_1(1-x_4)(1-x_6),
x_6(1-x_1)(1-x_3),\\
&x_2(1-x_5)(1-x_7),
x_7(1-x_2)(1-x_4),
x_3(1-x_6)(1-x_8),\\
&x_8(1-x_3)(1-x_5) \}
\end{align*}
}
and
{\small
\begin{align*}
\CF(\tilde{C} \cup \{\mathbf{0}\}) = \{&x_1x_2,          
x_2x_3,          
x_1x_4,          
x_3x_4,          
x_1x_5,          
x_2x_5,          
x_4x_5,          
x_3(1-x_1)(1-x_5),
x_1x_6,          
x_2x_6,\\          
&x_3x_6,          
x_5x_6,          
x_4(1-x_2)(1-x_6),
x_2x_7,          
x_3x_7,          
x_4x_7,          
x_6x_7,          
x_5(1-x_3)(1-x_7),\\
&x_1x_8,          
x_3x_8,          
x_4x_8,          
x_5x_8,         
x_7x_8,          
x_6(1-x_4)(1-x_8),
x_1(1-x_3)(1-x_7),\\
&x_7(1-x_1)(1-x_5),
x_2(1-x_4)(1-x_8),
x_8(1-x_2)(1-x_6)\}.
\end{align*}
}

In this case, it is not immediately obvious even from the canonical form whether the code is periodic as all of the generators of $\CF(C')$ and $\CF(\tilde{C})$ have the form $x_ix_j$ or $x_i(1-x_j)(1-x_z)$, consistent with Lemma~\ref{easyCheck}, but only $C'$ is periodic.  The information we do gain from the canonical form is that there are no generators of the form $x_i(1-x_j)$, so we know that if the code is periodic, then $n = k+m$ (Lemma \ref{sameSet}), which also allows us to determine $k$ by taking the weight of each codeword, but the question remains how to determine $k$ and $m$ in a general case.  

For a code which satisfies the easy-to-check conditions in Lemma~\ref{easyCheck}, we present an algorithm that allows us to determine $k$ and $m$, and hence whether an arbitrary code $C$ of length $n$ is $\km$ periodic for some permutation of the neurons.

\begin{enumerate}
\item{\textbf{Determine $k+m$ by determining equivalence classes of vertices.}  In Lemma~\ref{equivRel}, we showed that the relation $i\sim j$ if $x_i(1-x_j) \in J_C$ is an equivalence relation, so we can partition $[n]$ into $k+m$ equivalence classes.  If $|C| \neq k+m$, then $C$ is not periodic (Lemma~\ref{size}).}
\item{\textbf{Determine $k$ by forming $C\vert_{k+m}$.}  Let $C\vert_{k+m}$ be the code formed by restricting $C$ to $k+m$ vertices, where there is one vertex from each equivalence class. If $C\vert_{k+m}$ is not a constant weight code, then $C$ is not periodic.  Otherwise, $k = w_H(c)$ for $c \in C\vert_{k+m}$.}
\item{\textbf{Check for permutation equivalence given $k$ and $m$.}  Given $k$ and $m$, apply Theorem~\ref{fullCF} to determine if $C$ is permutation equivalent to $C_{k,m}(n)$.}
\end{enumerate}

We apply this algorithm to $C'$ and $\tilde{C}$ to show that $C'$ is periodic and $\tilde{C}$ is not.  In the first step of the algorithm, we find that each neuron is its own equivlance class for both codes, giving us $k+m = 8$ and both codes contain 8 codewords.  Since each neuron is its own equivalence class, we have that $C\vert_{k+m}$ is the original code for both cases in the second step of our algorithm.  Each codeword in both codes has weight 2, giving us $k=2$ for both codes.  In the third step, we check for permutation equivalence of the canonical forms of $C'$ and $\tilde{C}$ with the canonical form of $C_{2, 6}(8)$.  We have

{\small
\begin{align*}
\CF(C_{2,6}(8)\cup \{\mathbf{0}\}) = &\{ x_1x_3,          
x_1x_4,          
x_2x_4,          
x_1x_5,          
x_2x_5,          
x_3x_5,          
x_2(1-x_1)(1-x_3),\\
&x_4(1-x_3)(1-x_5),
x_1x_6,          
x_2x_6,          
x_3x_6,          
x_4x_6,          
x_3(1-x_2)(1-x_4),\\
&x_5(1-x_4)(1-x_6),
x_1x_7,          
x_2x_7,          
x_3x_7,          
x_4x_7,          
x_5x_7,          
x_6(1-x_5)(1-x_7),\\
&x_2x_8,          
x_3x_8,          
x_4x_8,          
x_5x_8,          
x_6x_8,          
x_7(1-x_6)(1-x_8),
x_1(1-x_2)(1-x_8),\\
&x_8(1-x_1)(1-x_7)\}.
\end{align*} 
}

We see by applying the permutation $(24)(37)(68)$ to $C'$ that we attain the same canonical form, so $C'$ is permutation equivalent to a periodic code.  There is no such permutation for $\tilde{C}$, so this code is not periodic.

\section{Discussion}
We showed that periodic codes, $C_{k,m}(n)$, with $1<k\leq m$ do not have a convex realization, potentially explaining the behavioral errors which owls make in localizing sounds of a single frequency.  However, for sounds with greater bandwidth, the owl is able to locate sounds with high precision, which suggests that there is a convex realization of these codes.  In particular, we showed that the convex closure of a single periodic code is its union with another periodic code. Such a code could arise by combining the code from many isofrequency columns as occurs in the inferior colliculus, perhaps explaining why the first space mapped cells exist in this nucleus.  Alternatively, we discussed that this could arise biologically through stochasticity, suggesting that both stochasticity and sparsity might be advantageous biologically.

Here we have framed our questions in terms of the system of sound localization in the owl, but we note that there are other systems which may be a natural extension of periodic codes.  For example, the receptive fields of rats' grid cells are centered at the vertices of a hexagonal lattice so are themselves periodic \cite{GridCells}.  This two dimensional system of grid cells raises the question of how to define periodic codes in higher dimensions, which we leave for further research.

\textbf{Acknowledgements}
We thank Michael Reed and William Pardon for helpful discussions in the early phases of this project.  We also thank Nell Cant for suggesting the barn owl's auditory system as a model system and Caitlin Lienkaemper for her comments on the manuscript.  This work was supported by NIH R01 EB022862 and NSF DMS1516881.

% BibTeX users please use one of
%\bibliographystyle{spbasic}      % basic style, author-year citations
\bibliographystyle{spmpsci}      % mathematics and physical sciences
\bibliography{Citations}   % name your BibTeX data base

% Non-BibTeX users please use
%\begin{thebibliography}{}
%
% and use \bibitem to create references. Consult the Instructions
% for authors for reference list style.
%
%\bibitem{RefJ}
% Format for Journal Reference
%Author, Article title, Journal, Volume, page numbers (year)
% Format for books
%\bibitem{RefB}
%Author, Book title, page numbers. Publisher, place (year)
% etc
%\end{thebibliography}
\end{document}